\documentclass[a4paper,onecolumn,11pt,unpublished]{quantumarticle}

\pdfoutput=1
\usepackage[utf8]{inputenc}
\usepackage[english]{babel}
\usepackage[T1]{fontenc}
\usepackage{amsmath}
\usepackage{amssymb}
\usepackage{physics}

\usepackage{hyperref}
\usepackage{colortbl}
\usepackage[table]{xcolor}
\usepackage{booktabs}
\usepackage{enumerate}
\usepackage{enumitem}
\usepackage{caption}
\usepackage{subcaption}
\usepackage{amsthm}
\usepackage{caption}
\usepackage{subcaption}% <–– add this
\usepackage{selinput}
\SelectInputMappings{
  adieresis={ä}, % etc.
  acute={´},
}
\usepackage{amsmath}  % if not already loaded

\usepackage[english]{babel}
\usepackage[T1]{fontenc}
\usepackage{lipsum}
\usepackage{mdframed}
\usepackage{booktabs}
\usepackage{adjustbox}
\usepackage{longtable}
\usepackage{float}

\usepackage[
  backend=biber,
  style=numeric,
  sorting=none      % or 'nyt' if you want author–year sorting
]{biblatex}
\usepackage[table]{xcolor}
\usepackage{listings}
\usepackage{siunitx}
\usepackage{multicol}
\usepackage{pifont}  
\usepackage{newunicodechar}
\newunicodechar{✖}{\ding{55}}
\usepackage[margin=1in]{geometry}   % Adjust page margins
\usepackage{setspace}              % For line spacing
\usepackage{bm}
\usepackage{amsmath,amssymb,amsthm}             % Theorem and proof environments
\usepackage{tikz}            % TikZ is required by quanti
\usepackage{quantikz} 
\usepackage{enumitem}% For drawing quantum circuits
\usepackage{graphicx} 
\usepackage{algpseudocode}
\usepackage{algorithm}
\usepackage{xcolor}
\usepackage[most]{tcolorbox}
\usepackage{pgfplots}
\pgfplotsset{compat=1.17}
\newcommand{\poly}[1]{\mathrm{poly}\bigl(#1\bigr)}
\usepackage{booktabs}

\usepackage{float}                 % For [H] float placement
\usepackage{hyperref}              % Clickable links/references
\usepackage{url}

\usepackage{xcolor}                % For colored text (optional)
\usepackage{lscape}                % For landscape pages if needed
\usepackage{caption}
\usepackage{subcaption}
\usepackage{etoolbox}
\newtheorem{theorem}{Theorem}
\newtheorem{lemma}[theorem]{Lemma}
\newtheorem{proposition}[theorem]{Proposition}
\newtheorem{corollary}[theorem]{Corollary}
\newtheorem{definition}[theorem]{Definition}

\theoremstyle{remark}
\newtheorem{remark}[theorem]{Remark}

\newcommand{\OH}{\mathcal{H}_{\mathrm{OH}}}
\DeclareUnicodeCharacter{0301}{}

\makeatletter
\g@addto@macro\normalsize{%
    \abovedisplayskip 3pt plus 1pt minus 1pt%
    \abovedisplayshortskip 3pt plus 1pt minus 1pt%
    \belowdisplayskip 3pt plus 1pt minus 1pt%
    \belowdisplayshortskip 3pt plus 1pt minus 1pt%
}
\makeatother

\def\BibTeX{{\rm B\kern-.05em{\sc i\kern-.025em b}\kern-.08em
    T\kern-.1667em\lower.7ex\hbox{E}\kern-.125emX}}

\author{Chinonso Onah}
\affiliation{Volkswagen AG, Berliner Ring 2, 38440 Wolfsburg, Germany}
\affiliation{Department of Physics, RWTH Aachen University, 52056 Aachen, Germany}

\author{Kristel Michielsen}
\affiliation{Forschungszentrum Jülich, Germany}
\affiliation{Universit\"at zu K\"oln, 50923 K\"oln, Germany}

\title{Polynomial Time Quantum Approximation Schemes for Constrained Optimization}

\begin{document}
\maketitle

\begin{abstract}
When does a noisy quantum sampler enable an end-to-end polynomial-time algorithm with performance guarantees? Building on the instance-dependent finite-depth, finite-shot guarantees in the Constraint-Enhanced QAOA (CE--QAOA) studied  in Ref.~\cite{onahfinite}, we answer this question for a complete quantum--classical pipeline executed on near-term quantum devices. The analysis of Ref.~\cite{onahfinite}
provides conditions under which the ideal CE--QAOA sampler can assign
inverse-polynomial probability
\(
p_0=\Omega(n^{-k})
\)
to the set of optimal solutions. Here, we extend this analysis
and show that, with independent-shot amplification, polynomial-time
feasibility repair, and exact scoring, the resulting pipeline recovers
an optimum with high probability in polynomial runtime. It
therefore gives a quantum-powered fully polynomial randomized
approximation scheme, which we call an
\textbf{FPRAS\textsubscript{$q$}}.

We next show that this guarantee can survive device noise within an instance-dependent window. For effective circuit depth
\(G(n,p)=\Theta(pn)\), retaining a \(G(n,p)^{-1}\) fraction of the ideal
optimal mass is sufficient to recover an optimum with probability at
least \(1-\delta\) using
\[
\mathcal{O}\!\left(pn^{k+1}\log(1/\delta)\right)
\]
shots. Outside this exact-hit window, deterministic
repair continues to guarantee feasibility and yields an
instance-dependent \((1+\varepsilon)\)-approximation scheme whenever
the repair-induced objective inflation is controlled. We term this the 
\textbf{ $\varepsilon$-Polynomial-Time Quantum Approximation Scheme ($\varepsilon-$PTQAS)}. 

We also place our hybrid algorithm within the Chen--Cotler--Huang--Li oracle
model~\cite{ChenCotlerHuangLi2023NISQ} and locate the source of the
quantum--classical separation. On any
NP-hard kernel-admissible promise family, no polynomial-time classical
sampler can reproduce the inverse-polynomial optimal overlap of NP-HQ
unless
\(
\mathrm{NP}\subseteq\mathrm{BPP}
\),
even when granted the same classical repair and perfect access to the
constraint structure. The separation therefore lies in generating the
sampling distribution, not in classical processing

Finally, we introduce \textbf{Heavy-Hitter QAOA (HH-QAOA)}, a geometry-informed filtering refinement that  preserves these
conditional guarantees while reducing the retained candidate set from
\(\mathcal{O}(n^{k+1})\) to \(\mathcal{O}(n^k)\) and the classical
post-processing cost by one power of \(n\). We demonstrate the complete
pipeline on 127-qubit IBM Eagle-r3 processors using \textsc{QOptlib}~\cite{Osaba2024Qoptlib}
travelling-salesman instances with up to 100 logical variables,
matching or improving the published reference tours in every tested
case.
\end{abstract}

\section{Introduction}
\label{sec:intro}

Constrained combinatorial optimization covers a broad class of scientific and engineering problems in which the objective is evaluated over configurations satisfying hard constraints\cite{PapadimitriouSteiglitz1982,Lucas2014Ising}. Representative examples include the Travelling
Salesman Problem (TSP/ATSP)~\cite{Lawler1985TSP,GareyJohnson1979}, the
Quadratic Assignment Problem (QAP)~\cite{Koopmans1957QAP,Loiola2007QAPSurvey},
capacitated vehicle routing~\cite{TothVigo2014VRP}, generalized assignment
and multiple-knapsack problems~\cite{SahniGonzalez1976GAP,MartelloToth1990Knapsack},
\(k\)-dimensional matching for \(k\geq 3\)~\cite{Karp1972,garey1979computers},
shared-transportation problems~\cite{onah2025waas,hernandez2026quantum,onah2026quantumclassical}, and job-shop and flow-shop scheduling problems~\cite{GareyJohnsonSethi1976,LenstraRinnooyKan1977,hernandez2026quantum}.
Their admissible configurations are organized by assignment, matching,
routing, capacity, ordering, and scheduling constraints\cite{Lucas2014Ising}. The same structural motifs appear in polynomial-time solvable problems.
Linear assignment is solved exactly by the Hungarian
algorithm~\cite{Kuhn1955Hungarian,Munkres1957Hungarian}, and maximum
matching admits polynomial-time algorithms based on Edmonds' blossom
construction~\cite{Edmonds1965Blossom}. Assignment matrices, matching
relations, and one-hot or k-hot cardinality constraints therefore provide a common structural language across tractable and NP-hard optimization problems. One-hot encodings represent cardinality constraints by assigning one excitation to each logical block while relations among blocks enforce the remaining local and global constraints. The corresponding feasible configurations form a subset of the already cardinality-constrained computational basis, and the quantum sampler must place sufficient probability on this subset to produce candidates for downstream processing. 

However, concentration of probability mass on the feasible subspace is only the first step in constrained quantum optimization. To obtain an end-to-end algorithm, one must also identify and score candidate solutions efficiently from the obtained samples. On real devices, quantum noise degrades feasibility further through noise-induced constraint violations, so one must additionally repair corrupted samples before scoring. This paper develops that full algorithmic layer around the recently introduced  Onah--Firt--Michielsen (OFM) kernel of Ref.~\cite{onahce,onahseperating} and shows how to exploit the same kernel structure classically, through deterministic checking, repair, and scoring, so that the quantum and classical stages operate as a single optimization pipeline. The CE-QAOA sampler already exploits these structures and symmetries in the quantum dynamics, leading to several guarantees studied in Refs.~\cite{ onahfinite, onahseperating,onahfund,onah2026cvrp} and reviewed in Appendix~\ref{sec:ce-qaoa}. The associated polynomial-time hybrid quantum--classical (PHQC)
algorithm was introduced in Ref.~\cite{onahce}. In that construction,
quantum samples generated by CE--QAOA are passed to a classical checker
for feasibility verification and scoring, while infeasible samples are
discarded. Here we propose a noise-aware extension called the Noisy Polytime Hybrid Quantum-Classical (NP-HQ) algorithm. The post-processor first checks feasibility, applies a problem-structured repair map when noise corrupts the one-hot patterns, evaluates the objective, and returns the best repaired candidate.

Our main results arise by exploiting the structures and symmetries of
the OFM kernel at both stages of the pipeline: through the
symmetry-powered quantum dynamics of CE--QAOA and through
problem-dependent repair maps applied classically to quantum-generated
samples. We show that, in the low-noise regime where the optimal solution retains
inverse-polynomial sampling weight, NP-HQ is an exact-hit fully polynomial randomized
approximation scheme,
FPRAS\textsubscript{$q$} with polynomial shot complexity and polynomial
classical runtime. Beyond that regime, deterministic repair still supports an instance-dependent
\((1+\varepsilon)\)-approximation scheme whenever its objective inflation is
controlled. A second contribution is a noise-to-runtime translation. We exploit
a Lipschitz-type total-variation bound to convert device-level noise directly
into a lower bound on post-noise optimal mass and hence into an explicit
polynomial shot budget. 

A further contribution is a conditional quantum--classical separation
for the resulting sampling pipeline. We show that the complete NP-HQ pipeline fits the Chen--Cotler--Huang--Li NISQ oracle model\cite{ChenCotlerHuangLi2023NISQ}. We compare NP-HQ with classical
polynomial-time strategies allowed to sample arbitrary efficiently
samplable distributions, apply the same classical repair map, and score
every repaired candidate. If such a classical strategy achieved
inverse-polynomial optimal overlap uniformly on an NP-hard
kernel-admissible promise family, standard success amplification would
place the corresponding promise search problem in
\(\mathrm{BPP}\). This conclusion remains unchanged when the classical
sampler is granted perfect access to row-one-hot strings or uniformly
random permutations. The computational distinction is therefore in the generation of
the optimum-biased sampling distribution. Finally, we introduce Heavy-Hitter QAOA (HH-QAOA), a refined post-processing strategy that retains only those bit-strings whose empirical frequencies exceed a prescribed threshold. Focusing on 
a small thresholded candidate set reduces post-processing cost 
substantially, and in the strongest regimes where additional problem structure present in the quantum-generated samples is utilized, the downstream checking, repair, and scoring cost depends on the size of the retained heavy-hitter set, leading to orders-of-magnitude reductions in the classical postprocessing time. We demonstrate the complete NP-HQ pipeline on 127-qubit IBM Eagle-r3
processors using \textsc{QOptlib}\cite{Osaba2024Qoptlib} travelling-salesman instances with up to 100 logical variables, matching or improving reference
tours in every tested case.

\subsection{Related Work}
\label{sec:related}

Post-selection and filtering heuristics appear widely in variational and
sampling-based quantum algorithms. Common strategies include discarding
low-quality samples, reweighting outcomes, and applying symmetry checks
to improve empirical performance%
~\cite{BonetMonroig2018SymmetryVerification,
Sagastizabal2019SymmetryVerification}. A substantial body of NISQ work
also addresses device noise through zero-noise extrapolation, symmetry
verification, readout mitigation, and regression-based error
mitigation%
~\cite{Temme2017ErrorMitigation,Endo2018PracticalQEM,
BonetMonroig2018SymmetryVerification,Bravyi2021MeasurementMitigation,
Czarnik2021CDR}. These methods primarily target expectation-value
estimation, whereas the present work concerns the finite-shot recovery
of optimal bitstrings, for which existing error-mitigation tools
are of limited utility.

The \emph{quantum alternating-operator} paradigm uses mixing
operators that preserve hard constraints, thereby confining the
 evolution to structured subspaces and yielding several reported
performance gains%
~\cite{Hadfield2019AOA,Fuchs2022ConstrainedMixers,sawaya2023encoding,
tsvelikhovskiy2024symmetries,onahce, onahfund,onahfinite,onahseperating}. In implementation, however,
constraint-preserving mixer Hamiltonians may contain noncommuting terms,
so their Trotterized realizations need not reproduce the ideal
subspace-preserving dynamics exactly~\cite{awasthi2026constraintpreservingxymixerstrotterized,onahseperating}. Device noise can further induce
leakage from the encoded subspace, motivating explicit noise-handling
mechanisms for constrained quantum optimization. Recent hardware studies
with XY-type mixers nevertheless show that constrained dynamics can
produce significant gains on near-term devices when these implementation
and noise obstacles are addressed%
~\cite{Niroula2022ExtractiveSummarization,Maciejewski_2025,hadfield2026ndar}. A related line of work exploits quantum noise using physical insights
from gauge transformations to improve the performance of hybrid
quantum algorithms%
~\cite{Maciejewski_2025,hadfield2026ndar,Maciejewski2026}.

Classical post-processing has also been incorporated directly into
quantum optimization pipelines. Greedy post-processing has been shown
to increase the approximation quality of QAOA samples%
~\cite{Caha2022TwistedHybrid}, while post-processing variational
algorithms can repair infeasible outputs produced by annealers or
gate-based devices%
~\cite{Shirai2024Postprocessing}. These works establish the algorithmic
value of classical processing after quantum sampling. The present
construction goes further by incorporating deterministic checking,
feasibility restoration, and exact scoring into a finite-shot
complexity analysis that tracks the optimal sampling mass through
device noise and converts it into an explicit end-to-end success
probability and runtime.

Heavy-hitter and frequent-item extraction originated in the classical
data-stream literature, where thresholded frequency summaries identify
items carrying a prescribed fraction of the observed mass%
~\cite{MisraGries1982,CormodeMuthukrishnan2005}. We apply this principle
to finite-shot quantum output histograms. The classical heavy-hitter
bound supplies a distribution-independent candidate-set reduction,
while the CE--QAOA optimal-mass bound and its device-noise window permit
a sharper quantum-informed threshold. The resulting refinement is
therefore tied to the structure and noise stability of the quantum
sampling distribution rather than generic frequency estimation.

 Classical approximation schemes for
structured metric optimization exploit substantial geometric or
combinatorial structure%
~\cite{Arora1998PTAS_TSP}. In the same spirit, we repurpose
polynomial-time assignment and matching algorithms%
~\cite{Kuhn1955Hungarian,Edmonds1965Blossom}
as deterministic feasibility-restoration primitives for measured
quantum samples.  From a complexity-theoretic perspective, our construction fits within the hybrid
model of noisy quantum computation studied by
Chen, Cotler, Huang, and Li %
~\cite{ChenCotlerHuangLi2023NISQ}. To the best of our knowledge, no previous work combines finite-shot
optimal recovery, deterministic feasibility restoration, an explicit
finite-noise lower bound on the recovered optimal mass, polynomial shot
complexity, and polynomial post-processing runtime into a single
end-to-end optimization guarantee under device noise. The hardware
experiments in Section~\ref{sec:experiment} instantiate the complete
sampling--checking--repair--scoring pipeline on contemporary
superconducting processors.

% The resulting guarantees are controlled by the optimal mass assigned by the quantum sampler before repair, its monotonicity under deterministic repair, and its stability under device noise. A total-variation
% continuity bound converts the surviving noisy optimal mass into an
% explicit finite-shot budget.

% ============================
% ============================
\section{Polynomial-Time Quantum Approximation Schemes}
\label{sec:ptqas}
% ============================

The approximation schemes developed below inherit their quantum
sampling guarantees from the CE--QAOA analyses of
Refs.~\cite{onahfinite,onahseperating,onahfund}. To make the assumptions
entering those guarantees explicit, we first state the
Onah--Firt--Michielsen (OFM) kernel in full. The kernel provides the
common interface between the quantum sampler and the classical
checker--repair--scoring stage: it fixes the encoded manifold, the
residual constraint Hamiltonians, their symmetries, the compatible
mixer, and the initial state. The subsequent complexity results apply
to problem families admitting this interface together with the stated
instance-dependent lower bound on optimal sampling mass.

%------------------------------------------------------------
\subsection{The OFM kernel and its assignment specialization}
\label{sec:ofm-kernel}
%------------------------------------------------------------

%------------------------------------------------------------

\begin{definition}[The Onah--Firt--Michielsen (OFM) kernel\cite{onahce,onahseperating}]
\label{def:kernel-requirement}
An optimization instance belongs to the
Onah--Firt--Michielsen kernel if there exist
\(m,n_{\mathrm{loc}}\in\mathbb{N}\) and a one-hot encoder such that the
quantum dynamics is initialized in the fixed-Hamming-weight space
\[
\OH
:=
(\mathcal{H}_1)^{\otimes m},
\qquad
\mathcal{H}_1
:=
\operatorname{span}
\{\ket{e_1},\ldots,\ket{e_{n_{\mathrm{loc}}}}\},
\]
with one excitation in each block. Its computational-basis states are
identified with words
\[
x=(x_1,\ldots,x_m)\in
X:=[n_{\mathrm{loc}}]^m,
\qquad
\ket{x}
=
\ket{e_{x_1}}\otimes\cdots\otimes\ket{e_{x_m}}.
\]
Thus,
\[
|X|=\dim(\OH)=n_{\mathrm{loc}}^m.
\]

In this encoded basis, the diagonal problem Hamiltonian decomposes as
\[
H_C
=
H_{\mathrm{pen}}
+
H_{\mathrm{obj}}
+
H_{\mathrm{pen}}',
\]
where \(H_{\mathrm{obj}}\) encodes the objective,
\(H_{\mathrm{pen}}\) contains the symmetry-preserving residual
constraint penalties, and \(H_{\mathrm{pen}}'\) contains additional
diagonal penalties that need not preserve the full kernel symmetry.
The following requirements hold.

\begin{enumerate}
\item[(a)] \emph{Penalty structure.}
The Hamiltonian \(H_{\mathrm{pen}}\) is a sum of squared affine
one-hot, degree, or capacity residuals, optionally together with linear
forbids, with integer coefficients bounded polynomially in the encoded
instance size. Consequently,
\[
\operatorname{spec}(H_{\mathrm{pen}})
\subseteq
\{0,1,\ldots,t_{\max}\},
\qquad
t_{\max}
=
\operatorname{poly}(m,n_{\mathrm{loc}}).
\]

\item[(b)] \emph{Pattern symmetry.}
The symmetry-preserving penalty \(H_{\mathrm{pen}}\) is invariant under
block permutations \(S_m\) and global symbol relabelings
\(S_{n_{\mathrm{loc}}}\). It therefore organizes the encoded space into
structured penalty levels
\[
L_t
:=
\left\{
x\in X:
\bra{x}H_{\mathrm{pen}}\ket{x}=t
\right\}.
\]
The term \(H_{\mathrm{pen}}'\) collects problem-dependent penalty
contributions that break some or all of these symmetries.

\item[(c)] \emph{Mixer and initial state.}
The canonical one-hot mixer is the normalized block-local \(XY\)
Hamiltonian
\begin{equation}
\label{eq:mixerm}
\widetilde H_{XY}^{(b)}
=
\frac{1}{n_{\mathrm{loc}}-1}
\sum_{1\leq u<v\leq n_{\mathrm{loc}}}
\left(
X_u^{(b)}X_v^{(b)}
+
Y_u^{(b)}Y_v^{(b)}
\right),
\end{equation}
which preserves \(\mathcal{H}_1\) and satisfies
\[
\left\|
\widetilde H_{XY}^{(b)}
\right\|
=
O(1)
\]
on the one-excitation sector. The corresponding initial state is
\[
\ket{s_0}
=
\ket{s_{\mathrm{blk}}}^{\otimes m},
\qquad
\ket{s_{\mathrm{blk}}}
=
\frac{1}{\sqrt{n_{\mathrm{loc}}}}
\sum_{j=1}^{n_{\mathrm{loc}}}\ket{e_j}.
\]
\end{enumerate}
\end{definition}
Definition~\ref{def:kernel-requirement} provides a common interface
from which problem-structure-aware operations can be systematically
derived for many well-studied problems, including the TSP, QAP, CVRP,
shared transportation problems, generalized assignment and
multiple-knapsack problems, and \(k\)-dimensional matching for
\(k\geq 3\)%
~\cite{onah2025waas,onah2026quantumclassical,onah2026cvrp,
Lucas2014Ising,GareyJohnson1979,Karp1972,onahce}.

The kernel separates two forms of constraint handling. Local
cardinality constraints are enforced by the encoder and are preserved
 by the mixer, so the ideal quantum dynamics remains inside
\(X\). Residual global constraints are retained in the diagonal
penalty Hamiltonians and define a generally smaller globally feasible
set
\[
\mathcal{F}\subseteq X.
\]
The same residual constraint structure is subsequently available to
the classical checker and repair map. For the assignment and travelling-salesman specialization,
\[
m=n_{\mathrm{loc}}=n,
\qquad
X=[n]^n,
\qquad
|X|=n^n,
\]
and the encoding uses \(n^2\) physical qubits. Each word
\[
x=(x_1,\ldots,x_n)\in[n]^n
\]
selects one symbol in every one-hot block. Global feasibility further
requires every symbol to occur exactly once, so that
\[
\mathcal{F}
=
\left\{
x\in[n]^n:
x_a\neq x_b
\text{ for all }a\neq b
\right\}
\cong S_n.
\]
Thus the mixer explores the full encoded assignment space \(X\), while
the residual penalties distinguish the permutation sector
\(\mathcal{F}\subset X\).

%------------------------------------------------------------
\subsection{CE--QAOA sampling distribution}
\label{sec:ce-qaoa-setup}
%------------------------------------------------------------

Define the total mixer Hamiltonian and its unitary by
\[
H_M
:=
\sum_{b=1}^{m}\widetilde H_{XY}^{(b)},
\qquad
U_M(\beta)
:=
e^{-i\beta H_M}
=
\bigotimes_{b=1}^{m}
\exp\!\left(
-i\beta\widetilde H_{XY}^{(b)}
\right).
\]
Because the block terms act on disjoint registers,
\(U_M(\beta)\) preserves \(\OH\). Since \(H_C\) is diagonal in the
computational basis, the problem unitary
\[
U_C(\gamma):=e^{-i\gamma H_C}
\]
also preserves \(\OH\). A depth-\(p\) CE--QAOA circuit therefore
prepares
\begin{equation}
\label{eq:app-ce-state}
\ket{\psi_p(\boldsymbol{\beta},\boldsymbol{\gamma})}
=
\left(
\prod_{r=1}^{p}
U_M(\beta_r)U_C(\gamma_r)
\right)
\ket{s_0}
\in\OH.
\end{equation}

Its ideal output distribution is

\[
q_0(x)
:=
\left|
\langle x|
\psi_p(\boldsymbol{\beta},\boldsymbol{\gamma})
\rangle
\right|^2,
\qquad
x\in X.
\]

Let
\[
C(x):=\bra{x}H_{\mathrm{obj}}\ket{x}
\]
denote the true objective evaluated on a feasible configuration, and
define
\[
C^\star
:=
\min_{x\in\mathcal{F}}C(x),
\qquad
\Omega^\star
:=
\left\{
x\in\mathcal{F}:C(x)=C^\star
\right\}.
\]
The ideal optimal sampling mass is then
\[
q_0
:=
q_0(\Omega^\star)
=
\sum_{x\in\Omega^\star}q_0(x).
\]
The exact-hit and approximation guarantees developed below are
conditioned on the instance-dependent regime
\[
q_0
\geq
c\,n^{-k}
\]
for fixed constants \(c>0\) and \(k\geq0\). The remaining
analysis is to determine when this optimal mass survives device noise,
how many circuit repetitions are then sufficient for recovery, and how
deterministic checking, repair, and scoring complete the
polynomial-time hybrid pipeline.

\subsection{Background to Performance Guarantees}
\label{sec:beyond-classical}

The problem-dependent assumption rests on the lattice normalization of the problem Hamiltonian which proceeds as follows: Fix a base angle $\gamma$ and define the \emph{wrapped phase} of each
basis string:
\begin{equation}\label{eq:app-wrapped-phase}
  \theta(z) \;:=\; \gamma\,E(z) \pmod{2\pi}.
\end{equation}
Let $ H_C\ket{x}=E(x)\ket{x},\qquad x\in X$ so that $\theta^\star := \gamma\,E^\star \pmod{2\pi}$ is the common wrapped
phase of the optimal energy level. Define the \emph{wrapped phase
separation}:
\begin{equation}\label{eq:app-phase-gap}
  \delta
  \;:=\;
  \min_{y \notin \Omega^\star}
    \operatorname{dist}_{\mathbb{T}}\!\bigl(\theta(y),\,\theta^\star\bigr),
\end{equation}
where
$\operatorname{dist}_{\mathbb{T}}(\phi,\varphi)
  := \min_{k \in \mathbb{Z}} |\phi - \varphi + 2\pi k|
  \in [0,\pi]$.
The condition $\delta > 0$ means that the optimal phase is separated from
all suboptimal phases on the circle.

\begin{remark}[When $\delta > 0$ holds]
\label{rem:app-phase-gap}
The wrapped phase separation $\delta$ depends on the cost spectrum
$\{E(z)\}$ and the angle $\gamma$. For generic $\gamma$ (i.e., outside a
measure-zero set), distinct energy levels map to distinct phases, so
$\delta > 0$. Moreover, if the energy spectrum has integer or
rational-multiple structure (as is common in combinatorial problems with
integer-valued costs), then a suitable $\gamma$ can be chosen to maximize
$\delta$. As discussed in Ref.~\cite{onahfinite}, one may also relax the
exact phase-gap assumption by averaging over a grid of $\gamma$ values
when several near-optimal levels cluster in phase; the resulting bound
degrades gracefully.
\end{remark}

The finite-depth, finite-shot guarantees on which the remainder of the
analysis rests arise from three analytic ingredients, whose interplay is
reviewed in detail in Appendix~\ref{app:fejer-mechanism}. Here we recall
the key ingredients.

\paragraph{Ingredient 1. Mixer envelope.}
The block-local XY mixer, viewed through a cost-basis dephasing channel, 
induces a doubly stochastic Markov chain on the diagonal of the density 
operator. Starting from the uniform one-hot product state $\ket{s_0}$, 
the iterated transition gives a nonnegative, normalized mixer envelope
\begin{equation}\label{eq:main-Wp}
  W_p(z;\boldsymbol\beta)
  \;=\;
  \bigl[M_{\beta_p}\cdots M_{\beta_1}\,v^{(0)}\bigr](z),
  \qquad
  v^{(0)}(z) = |\braket{z}{s_0}|^2,
\end{equation}
where each $M_\beta(z|y) = |\bra{z}U_M(\beta)\ket{y}|^2$ is 
unistochastic (see Eq.~\eqref{eq:app-Wp} in 
Appendix~\ref{app:fejer-mechanism}). This envelope quantifies how the 
mixer spreads probability across the encoded manifold. The key quantity for the success bound is the 
optimal-set envelope weight
\begin{equation}\label{eq:main-Cbeta}
  C_\beta
  \;=\;
  \sum_{x\in\Omega^\star} W_p(x;\boldsymbol\beta),
\end{equation}
the total mixer-envelope mass on the set $\Omega^\star$ of  
optimal feasible strings.

\paragraph{Ingredient 2. Fej\'{e}r phase filter.}
The diagonal cost operator $H_C$ assigns a wrapped phase 
$\theta(z) = \gamma E(z) \pmod{2\pi}$ to each basis string, where 
$\gamma$ is the cost-layer angle. The Fej\'{e}r kernel of order $p$ is 
the nonnegative trigonometric polynomial
\begin{equation}\label{eq:app-fejer}
  F_p(\vartheta)
  \;=\;
  \frac{1}{p+1}
  \left|
    \sum_{r=0}^{p} e^{ir\vartheta}
  \right|^2
  \;=\;
  \frac{1}{p+1}
  \left(
    \frac{\sin\!\bigl(\frac{(p+1)\vartheta}{2}\bigr)}
         {\sin\!\bigl(\frac{\vartheta}{2}\bigr)}
  \right)^{\!2},
\end{equation}
which peaks at $\vartheta = 0$ with value $F_p(0) = p+1$ and decays 
away from the peak. Strings whose cost phase lies close to the optimal 
phase $\theta^\star$ receive amplification of order $p+1$, while strings 
separated by a phase gap $\delta > 0$ are suppressed by 
$M_p(\delta) \leq [(p+1)\sin^2(\delta/2)]^{-1}$.

\paragraph{Ingredient 3. Factorized reference distribution.}
In a dephased reference model, a normalized factorized reference 
distribution emerges as
\begin{equation}\label{eq:app-factorization}
  \Pr_p^{\mathrm{ref}}[z]
  \;=\;
  \frac{
    W_p(z;\boldsymbol\beta)\;
    F_p\!\bigl(\theta(z) - \theta^\star\bigr)
  }{
    \displaystyle
    \sum_{y \in X}
      W_p(y;\boldsymbol\beta)\;
      F_p\!\bigl(\theta(y) - \theta^\star\bigr)
  }.
\end{equation}
This factorized distribution leads to a dimension-free lower bound on 
the optimal-set probability by combining Eq. \ref{eq:main-Cbeta} and Eq. \ref{eq:app-fejer}(Theorem~\ref{thm:app-fejer-success})
\begin{equation}\label{eq:main-q0}
  q_0
  \;\geq\;
  \frac{(p+1)\,C_\beta}
       {(p+1)\,C_\beta \;+\; M_p(\delta)\,(1-C_\beta)},
\end{equation}
which depends only on the filter order $p$, the phase gap $\delta$, and 
the optimal-set envelope weight $C_\beta$. The detailed derivation is 
given in Appendix~\ref{app:fejer-mechanism}. When evaluated on a problem family for which the instance-dependent
envelope and phase conditions detailed in
Appendix~\ref{app:consequences} hold, this joint success probability law yields the inverse-polynomial optimal-mass premise used throughout the downstream analysis:
\begin{equation}
\label{eq:main-q0p}
  q_0
  \;\geq\;
  \frac{(p+1)c_0 n^{-a}}
       {(p+1)c_0 n^{-a}
        + \bigl[(p+1)c_1\bigr]^{-1}}
  \;=\;
  \Omega(n^{-a}).
\end{equation}

The load-bearing geometric quantity in the finite-depth Fej\'{e}r
mechanism of Ref.~\cite{onahfinite} is the optimal-set mixer-envelope
weight \(C_\beta\) defined in Eq.~\eqref{eq:main-Cbeta}. The recent
geometry--interference analysis of Ref.~\cite{onahseperating} suggests
that the corresponding optimal-set transport weight need not be small
under coherent implementation. It also establishes a complementary
logarithmic-depth route to inverse-polynomial success guarantees under
the same normalized OFM transport structure and lattice normalization.
No general pointwise ordering between coherent and dephased output
probabilities is assumed here; the dephased reference law is compared
with the coherent circuit only under the phase-alignment conditions of
Ref.~\cite{onahseperating}. The logical flow from kernel structure to
end-to-end hybrid guarantees is visualized in the diagram below.

\begin{center}
\label{fig:main}
\begin{tikzpicture}[
  >=Stealth, font=\small,
  box/.style={draw, rounded corners=4pt, thick, align=center,
              inner sep=5pt, minimum width=3.2cm},
  arr/.style={-Stealth, thick}
]
\node[box] (K) at (0,0)
  {OFM kernel\\(Def.~\ref{def:kernel-requirement})};
\node[box] (E) at (3.5,0)
  {CE--QAOA\\(Eq.~\ref{eq:app-ce-state})};
\node[box] (Z) at (7.5,0)
  {Mixer envelope $W_p$\\+ Fej\'er filter $F_p$\\(Eq. \ref{eq:app-factorization})};
\node[box] (Q) at (13,0)
  {$q_0 \ge \Omega(n^{-k})$\\(Thm.~\ref{thm:app-fejer-success}, Eq. \ref{eq:main-q0p})};
\node[box] (F) at (2.75,-2)
  {Exact-hit FPRAS$_q$\\(Thm.~\ref{thm:ehqc-fprasq-exact})};
\node[box] (N) at (8.25,-2)
  {Noisy PTQAS\\(Cor.~\ref{cor:ehqc-fprasq-noisy})};
\node[box] (H) at (5.5,-4)
  {HH--QAOA $O(n^{k+3})$\\(Thm.~\ref{thm:quantum-informed-HH})};

\draw[arr] (K) -- (E);
\draw[arr] (E) -- (Z);
\draw[arr] (Z) -- (Q);
\draw[arr] (Q) -- (F);
\draw[arr] (Q) -- (N);
\draw[arr] (F) -- (H);
\draw[arr] (N) -- (H);
\end{tikzpicture}
\end{center}

\subsection{The hybrid quantum--classical (PHQC) framework}

To operationalize the quantum ansatze, Algorithm \ref{alg:PHQC}, the polynomial-time hybrid quantum--classical (PHQC) algorithm was introduced in Ref. \cite{onahce} where the quantum samples generated by CE-QAOA are handed over to the classical checker for feasibility.
We further extend the pipeline to include a coarse parameter-grid search. A single appearance is sufficient for exact recovery. Let \(p\) be the circuit depth and let
\[
\mathcal G
=
\{(\beta_a,\gamma_b): 0\le a\le N_\beta,\ 0\le b\le N_\gamma\}
\subset [0,\pi]\times[0,\pi]
\]
be a rectangular coarse grid, for example
\[
\beta_a = a\,\Delta_\beta,\qquad
\gamma_b = b\,\Delta_\gamma,
\qquad
\Delta_\beta = \frac{\pi}{N_\beta},\quad
\Delta_\gamma = \frac{\pi}{N_\gamma}.
\]
For each \((\beta,\gamma)\in\mathcal G\), PHQC prepares the CE--QAOA state
\[
\ket{\psi_p(\beta,\gamma)}
=
U_M(\beta)\,e^{-i\gamma H_C}\ket{s_0}
\]
(for \(p=1\); or depth-\(p\) generalization in Eq. \ref{eq:app-ce-state}), samples it \(S\) times, and feeds each measured bitstring into the deterministic feasibility oracle. 
% (Algorithm~\ref{alg:phqc} is written for \(p=1\); the depth-\(p\)
% extension replaces \((\beta,\gamma)\) by
% \((\boldsymbol{\beta},\boldsymbol{\gamma})\).)

\begin{algorithm}[H]
\caption{\textbf{PHQC} --- CE--QAOA grid search with feasibility filtering and scoring\cite{onahce}}
\label{alg:PHQC}
\begin{algorithmic}[1]
\Require depth \(p\); coarse grid \(\mathcal G\subset[0,\pi]^2\); cost Hamiltonian \(H_C\); objective Hamiltonian \(H_{\mathrm{obj}}\); mixer \(U_M\); initial state \(\ket{s_0}\); shots \(S\) per grid point.
\Ensure best feasible sampled bitstring \(b^\star\) and its score \(E^\star\).
\State \(b^\star \gets \texttt{null}\), \(\;E^\star \gets +\infty\)
\For{each \((\beta,\gamma)\in\mathcal G\)}
    \For{\(r=1,\dots,S\)}
        \State prepare \(\ket{s_0}\)
        \State apply the CE--QAOA circuit for \((\beta,\gamma)\)
        \State measure a bitstring \(b\)
        \If{\textsc{FeasibleCheck}\((b)\)}
            \State compute \(E(b)=\langle b\mid H_{\mathrm{obj}}\mid b\rangle\)
            \If{\(E(b)<E^\star\)}
                \State \(b^\star \gets b\), \(\;E^\star \gets E(b)\)
            \EndIf
        \EndIf
    \EndFor
\EndFor
\State \Return \((b^\star,E^\star)\)
\end{algorithmic}
\end{algorithm}

\begin{lemma}[No-hit bound at a good grid point\cite{onahce}]
\label{lem:phqc-nohit}
Assume there exists a parameter pair
\((\beta^\sharp,\gamma^\sharp)\in\mathcal G\)
such that
\[
p_\star^\sharp
:=
p_\star(\beta^\sharp,\gamma^\sharp)
>0.
\]
If PHQC draws \(S\) independent samples at each grid point, then the probability of observing no optimal feasible bitstring at the good grid point satisfies
\[
\Pr[\text{no hit at }(\beta^\sharp,\gamma^\sharp)]
=
(1-p_\star^\sharp)^S
\le
e^{-p_\star^\sharp S}.
\]
Hence
\[
S \ge \Bigl\lceil \frac{\ln(1/\delta)}{p_\star^\sharp}\Bigr\rceil
\quad\Longrightarrow\quad
\Pr[\text{at least one optimal feasible hit at }(\beta^\sharp,\gamma^\sharp)]
\ge 1-\delta.
\]
\end{lemma}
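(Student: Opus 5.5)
The argument is the standard geometric-trial bound, so I will make the probabilistic model explicit and then bound the product of per-shot failure probabilities. Fix the good grid point \((\beta^\sharp,\gamma^\sharp)\). PHQC prepares the CE--QAOA state there afresh for each of the \(S\) shots and measures it. I will model the outcomes as i.i.d.\ random variables \(b_1,\dots,b_S\) with law \(q_0(\cdot)\) at those parameters, or its noisy counterpart if \(p_\star\) is read as the post-noise optimal mass. By definition, \(p_\star^\sharp=\Pr[b_r\in\Omega^\star]\) for every \(r\). A "hit" means \(b_r\in\Omega^\star\). Every optimal string is feasible, so \textsc{FeasibleCheck} accepts it, and it is then scored exactly. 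Hence the event "no optimal feasible bitstring observed" equals \(\bigcap_{r=1}^S\{b_r\notin\Omega^\star\}\).

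First, independence of the shots gives
\[
\Pr\Bigl[\textstyle\bigcap_{r}\{b_r\notin\Omega^\star\}\Bigr]=\prod_{r=1}^S(1-p_\star^\sharp)=(1-p_\star^\sharp)^S .
\]
Second, I apply the elementary inequality \(1-x\le e^{-x}\) for all real \(x\), which follows from convexity of \(e^{-x}\) and its tangent line at \(0\). With \(x=p_\star^\sharp\in(0,1]\), both sides are nonnegative, so raising to the \(S\)-th power preserves the inequality and yields \((1-p_\star^\sharp)^S\le e^{-p_\star^\sharp S}\). Third, for the sample-size implication, suppose \(S\ge\lceil\ln(1/\delta)/p_\star^\sharp\rceil\). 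Then \(p_\star^\sharp S\ge\ln(1/\delta)\), and monotonicity of the exponential gives \(e^{-p_\star^\sharp S}\le\delta\). Taking complements, the probability of at least one hit is at least \(1-\delta\).

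The algebra is trivial; the only step needing care is justifying the product form. Two points matter there. Each shot is a fresh state preparation and measurement at fixed parameters. Outcomes at other grid points, and the classical scoring, do not affect the event in question, since it concerns only the \(S\) shots at \((\beta^\sharp,\gamma^\sharp)\). I will state independence as the operational assumption of repeated preparation. I will also note that if drifting device noise makes the per-shot optimal masses vary but each is at least \(p_\star^\sharp\), the same bound holds as an inequality, via \(\prod_r(1-p_r)\le(1-p_\star^\sharp)^S\).
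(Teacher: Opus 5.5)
Your proposal is correct and takes the same approach as the paper's short argument. The paper treats each shot at the good grid point as an independent Bernoulli$(p_\star^\sharp)$ indicator of landing in the optimal set, observes that the hit count is zero with probability $(1-p_\star^\sharp)^S\le e^{-p_\star^\sharp S}$, and solves for $S$; your explicit independence assumption, the inequality $1-x\le e^{-x}$, and the remark on non-stationary per-shot masses spell out what the paper leaves implicit.
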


% \begin{proof}
% Let \(X_r\sim\mathrm{Bernoulli}(p_\star^\sharp)\) indicate whether shot \(r\) at
% \((\beta^\sharp,\gamma^\sharp)\) lands in \(\mathcal X^\star\). Then
% \[
% N_\star=\sum_{r=1}^{S} X_r
% \]
% counts optimal feasible hits at that grid point. Therefore
% \[
% \Pr[N_\star=0]=(1-p_\star^\sharp)^S \le e^{-p_\star^\sharp S},
% \]
% which gives the claim.
% \end{proof}

\begin{theorem}[Exact-recovery guarantee for PHQC\cite{onahce}]
\label{thm:phqc-exact}
Assume the coarse grid \(\mathcal G\) contains a parameter pair
\((\beta^\sharp,\gamma^\sharp)\) with optimal feasible probability mass
\(p_\star^\sharp>0\). Run PHQC with \(S\) shots at each grid point and let the deterministic checker return the minimum-cost feasible bitstring over all samples. Then
\[
S \ge \Bigl\lceil \frac{\ln(1/\delta)}{p_\star^\sharp}\Bigr\rceil
\quad\Longrightarrow\quad
\Pr[\text{PHQC returns a globally optimal feasible bitstring}]
\ge 1-\delta.
\]
In particular, if \(p_\star^\sharp \ge n^{-k}\) for some constant \(k>0\), then
\[
S = O\!\bigl(n^k\log(1/\delta)\bigr),
\]
and if moreover \(|\mathcal G|=\mathrm{poly}(n)\), the full quantum-plus-classical pipeline has polynomial shot complexity and polynomial total runtime.
\end{theorem}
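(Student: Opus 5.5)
The argument reduces the statement to Lemma~\ref{lem:phqc-nohit} plus the observation that the deterministic classical stage of PHQC cannot lose an optimal sample once one has been drawn. I would proceed in three steps: a correctness (monotonicity) step for the checker, a probabilistic step at the good grid point, and a resource-counting step.

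\emph{Step 1 (deterministic correctness).} I first show that on the event $\mathcal{E}$ that at least one shot at $(\beta^\sharp,\gamma^\sharp)$ lands in $\Omega^\star$, PHQC returns an element of $\Omega^\star$.
\begin{itemize}
\item \textsc{FeasibleCheck} is exact, so every string in $\Omega^\star\subseteq\mathcal{F}$ passes it, and no string outside $\mathcal{F}$ is ever scored.
\item Every scored string therefore satisfies $E(b)\ge C^\star$.
\item Algorithm~\ref{alg:PHQC} keeps a running minimum with strict inequality. Once an optimal $b$ is encountered, $E^\star=C^\star$ is attained and can never be strictly undercut afterwards.
\item The returned $b^\star$ is feasible with $C(b^\star)=C^\star$, i.e.\ $b^\star\in\Omega^\star$.
\end{itemize}
This holds regardless of what happens at the other grid points, because they can only contribute feasible strings of cost at least $C^\star$.

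\emph{Step 2 (probability).} Shots are independent and each hits $\Omega^\star$ with probability $p_\star^\sharp$. Lemma~\ref{lem:phqc-nohit} gives $\Pr[\mathcal{E}^c]\le e^{-p_\star^\sharp S}$. Choosing $S\ge\lceil \ln(1/\delta)/p_\star^\sharp\rceil$ makes this at most $\delta$. Combined with Step~1, $\Pr[b^\star\in\Omega^\star]\ge\Pr[\mathcal{E}]\ge 1-\delta$.

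\emph{Step 3 (complexity).} If $p_\star^\sharp\ge n^{-k}$, then $\ln(1/\delta)/p_\star^\sharp\le n^k\ln(1/\delta)$, so $S=O(n^k\log(1/\delta))$. The total shot count is $|\mathcal G|\,S$, which is polynomial when $|\mathcal G|=\mathrm{poly}(n)$. Per shot, the remaining costs are polynomial in the encoded size:
\begin{itemize}
\item circuit preparation has depth polynomial in $p$ and $n$;
\item the feasibility check, e.g.\ one-hot and permutation tests, is $\mathrm{poly}(m,n_{\mathrm{loc}})$;
\item scoring evaluates a diagonal $H_{\mathrm{obj}}$ with polynomially many terms, which is also $\mathrm{poly}(m,n_{\mathrm{loc}})$.
\end{itemize}
Multiplying these per-shot costs by the total shot count gives polynomial total runtime.

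The only point needing care is Step~1: the ``minimum over all samples'' must be taken over feasible strings only, and ties must be handled so that they cannot displace an optimum. Both are immediate from the exactness of the checker and the fact that all feasible costs are at least $C^\star$. I therefore expect no genuine obstacle; the proof is a direct composition of Lemma~\ref{lem:phqc-nohit} with this monotonicity observation.
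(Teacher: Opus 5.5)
Your proposal is correct and matches the paper's approach. The paper cites this result from \cite{onahce} rather than reproving it, but its proof of the analogous Theorem~\ref{thm:ehqc-fprasq-exact} uses the same composition: the no-hit bound $(1-p_\star^\sharp)^S\le e^{-p_\star^\sharp S}$ of Lemma~\ref{lem:phqc-nohit}, the fact that minimum-cost selection over feasible (there, repaired) candidates cannot discard an optimum once one is sampled, and polynomial per-shot classical work times a polynomial shot count.
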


\begin{algorithm}[H]
\caption{\textbf{NP-HQ Algorithm} — depth-$p$ CE--QAOA + deterministic checker/repair}
\label{alg:NP-HQ}
\begin{algorithmic}[1]
\Require problem size $n$; depth $p$; angles $\{(\beta_\ell,\gamma_\ell)\}_{\ell=1}^p$;
         penalty Hamiltonian $H_{\mathrm{pen}}$; mixer Hamiltonian $H_M$;
         objective Hamiltonian $H_C$; shot budget $S$.
\Ensure best feasible bit-string $b^\star$ and objective value $c^\star$.
\State Prepare $\ket{\psi}\gets\ket{s_0}=\ket{s_{\mathrm{blk}}}^{\otimes m}$.
\For{$\ell=1$ \textbf{to} $p$}
   \State Apply $U_{\mathrm{C}}(\gamma_\ell)=e^{-i\gamma_\ell H_{\mathrm{C}}}$ to $\ket{\psi}$.
   \State Apply $U_M(\beta_\ell)=e^{-i\beta_\ell H_M}$ to $\ket{\psi}$.
\EndFor
\State Measure $\ket{\psi}$ for $S$ shots; store multiset $\mathcal S$.
\State $(b^{\star},c^{\star})\gets(\text{null},+\infty)$.
\ForAll{$b\in\mathcal S$}
   \If{\textsc{Feasible}$(b)$}
      \State $b_{\mathrm{ok}}\gets b$
   \Else
      \State $b_{\mathrm{ok}}\gets \textsc{Repair}(b)$
   \EndIf
   \State $c\gets\langle b_{\mathrm{ok}}|H_C|b_{\mathrm{ok}}\rangle$.
   \If{$c<c^{\star}$}
      \State $(b^{\star},c^{\star})\gets(b_{\mathrm{ok}},c)$
   \EndIf
\EndFor
\State \Return $(b^{\star},c^{\star})$.
\end{algorithmic}
\end{algorithm}

\subsection{Noisy Polytime Hybrid Quantum–Classical Algorithm (NP-HQ)}
\label{sec:ehqc}
% ============================

Here we study the \textbf{NP-HQ} Algorithm, a noise adapated version of Alg \ref{alg:PHQC} endowed with classical repairs based on problem structure. The classical repairs are based on strongly polynomial classical matching algorithms like Hungarian\cite{Kuhn1955Hungarian} and Blossom\cite{Edmonds1965Blossom} algorithms.  After \(p\) CE–QAOA layers acting on \(\ket{s_0}\), and measuring the circuit
\(S=\poly n\) times on a noisy quantum device; a deterministic post-processing performs:
(i) feasibility check against one-hot constraints (\(O(n^2)\));
(ii) if needed, a polynomial-time \emph{repair} in \(O(n^{3})\);
(iii) exact scoring \(b\mapsto \langle b|H_C|b\rangle\) in \(O(n^{2})\).
Total classical work \(O(S\,n^3)\). The key insight is that every corrupt block can be projected back to the nearest feasible block in polynomial time. Likewise, the full sample can be projected to a feasible bitstring. We outline the full NP-HQ algorithm in Alg. \ref{alg:NP-HQ} and return to the processing of noisy samples next. % in Sec. \ref{sec:ehqc-main}.

%====================================================================
\subsubsection{Post Processing of Noisy Quantum Samples}
\label{sec:ehqc-main}
%====================================================================

 For a constrained optimization problem, the CE-QAOA circuit run on a quantum device returns a multiset
of measurement outcomes, represented as a map
\(C : \{0,1\}^{L}\!\to\mathbb{N}\) where
\(L = n^{2}\) for an \(n\times n\) assignment.
A classical routine must (i) discard infeasible bit-strings, (ii) decode every feasible string, (iii) evaluate the true cost, and (iv) pick the best candidate(s) for reporting. This is the PHQC algorithm outlined in Alg. \ref{alg:PHQC}.

To expose the runtime features, let
\(K = \lvert\mathrm{keys}(C)\rvert \le \min(S,\,2^{L})\)
be the number of unique bit-strings obtained from \(S\) shots.  Feasibility checking inspects
      \(L = \Theta(n^{2})\) bits per key. Decoding and summing its edge costs each take
      \(O(n^{2})\) operations. Hence the loop runs in $O\!\bigl(K\,n^{2}\bigr).$ Sorting at most \(K\) feasible samples adds
\(O\!\bigl(K\log K\bigr)\), so the total time is
\[
    T(K,n)\;=\;O\!\bigl(K\,n^{2}+K\log K\bigr).
\]
In typical fixed-shot experiments, the full routine is therefore
\emph{effectively quadratic} in the problem size \(n\). %For permutation constrained problems, Algorithm ~\ref{alg:permdeco} outlines the steps to decode bitstrings into permutations in $\text{time} = O(n^{2})$, $\text{space} = O(n)$. 

In presence of quantum noise, the original blockwise one–hot and k-hot constraints themselves begin to fail and the obtained bitstrings may violate the global constraints. Thus, we introduce a \emph{deterministic repair} based on the
classical matching algorithms
\cite{Kuhn1955Hungarian, Edmonds1965Blossom}. This converts noisy samples into \emph{feasible} ones while preserving polynomial–time guarantees since these classical interventions remain polynomial.

\subsubsection{Complexity Bounds in the NP-HQ Algorithm}
\label{sec:complexity}
%============================================================

Let \(G(n,p)\) denote the effective depth of the \(p\)-layer
CE-QAOA circuit. Under parallel scheduling, the state-preparation
circuit, the block-local mixer, and the problem unitary each admit
\(O(n)\)-depth implementations\cite{onahce}. Hence
\begin{equation}
    G(n,p)
    =
    O\!\bigl((p+1)n\bigr)
    =
    O(pn),
    \qquad p\ge 1.
    \label{eq:effective-noisy-depth}
\end{equation}
The effective depth should be distinguished from the total two-qubit
gate count. Example, for the dense TSP construction, the problem Hamiltonian
contains \(\Theta(n^3)\) two-qubit interaction terms per CE-QAOA
layer as noted in Remark \ref{rem:gatedepth}. Here, we model the accumulated noise at each effective circuit layer by a
channel \(\mathcal N_{\epsilon}\) satisfying
\begin{equation}
    \left\|
        \mathcal N_{\epsilon}-\mathcal I
    \right\|_{\diamond}
    \le \epsilon .
    \label{eq:effective-layer-diamond-error}
\end{equation}
Here \(\epsilon\) is the effective error accumulated within one noisy
depth layer, including the combined contribution of gates executed in
parallel. For example, a primitive two-qubit depolarising channel
\begin{equation}
    \Lambda_{\eta}(\rho)
    =
    (1-\eta)\rho
    +
    \eta\frac{\mathbb I_4}{4}
\end{equation}
satisfies \cite{Wilde2017}
\begin{equation}
    \frac12
    \left\|
        \Lambda_{\eta}-\mathcal I
    \right\|_{\diamond}
    =
    \frac{15}{16}\eta
    \le \eta .
\end{equation}
The effective layer parameter \(\epsilon\) absorbs the accumulated
contribution of all such local channels acting within the same circuit
layer.

Let \(P_0\) and \(P_{\epsilon}\) denote the ideal and noisy output
distributions. Repeated application of the trace-distance triangle
inequality \cite[Lemma~9.1.2]{Wilde2017}, together with contractivity
under quantum channels and measurement, gives
\begin{equation}
    \left\|
        P_{\epsilon}-P_0
    \right\|_{\mathrm{TV}}
    \le
    \frac12 G(n,p)\epsilon .
    \label{eq:effective-depth-tv-bound}
\end{equation}
The complete telescoping argument and the resulting Lipschitz
continuity of the measured output distribution are given in
Appendix~\ref{sec:lipschitz-noise}.

Let \(\mathcal X^\star\) denote the set of globally optimal feasible
strings and define
\begin{equation}
    p_0
    :=
    P_0(\mathcal X^\star),
    \qquad
    p_{\min}(\epsilon)
    :=
    P_{\epsilon}(\mathcal X^\star).
\end{equation}
The total-variation bound implies
\begin{equation}
    p_{\min}(\epsilon)
    \ge
    p_0-\frac12G(n,p)\epsilon .
    \label{eq:noisy-optimal-mass}
\end{equation}
Suppose that the ideal optimal mass satisfies
\begin{equation}
    p_0\ge c n^{-k}
\end{equation}
for fixed constants \(c>0\) and \(k\ge 0\), then the following theorem holds.

\begin{theorem}[Noise-robust exact-hit complexity]
\label{thm:noise_robust}
Let \(\mathcal X^\star\) denote the set of globally optimal feasible
strings, and define
\[
p_0:=P_0(\mathcal X^\star),
\qquad
p_{\min}(\epsilon):=P_\epsilon(\mathcal X^\star).
\]
Suppose that the ideal optimal mass satisfies
\[
p_0\geq c n^{-k}
\]
for fixed constants \(c>0\) and \(k\geq 0\). Let \(G(n,p)>1\) denote
the effective depth of the \(p\)-layer circuit, and
assume that the effective error per noisy depth layer satisfies
\begin{equation}
\epsilon
\leq
\frac{2p_0}{G(n,p)}
\left(
1-\frac{1}{G(n,p)}
\right).
\label{eq:effective-depth-noise-condition}
\end{equation}
Then the noisy optimal mass satisfies
\begin{equation}
p_{\min}(\epsilon)
\geq
\frac{p_0}{G(n,p)}
\geq
\frac{c}{G(n,p)}n^{-k}.
\label{eq:noisy-inverse-polynomial-mass}
\end{equation}
Consequently, a globally optimal feasible string is observed with
probability at least \(1-\delta\) after
\begin{equation}
S
=
\left\lceil
\frac{G(n,p)}{c}\,
n^k\log\!\left(\frac{1}{\delta}\right)
\right\rceil
\label{eq:noise-robust-shot-budget}
\end{equation}
independent circuit executions. If \(G(n,p)=\Theta(pn)\), then

\begin{equation}
S
=
\mathcal O\!\left(
p n^{k+1}\log(1/\delta)
\right).
\end{equation}

Thus accumulated circuit noise increases the shot budget by one factor in \(n\).
\end{theorem}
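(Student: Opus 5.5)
The plan is to combine the total-variation bound Eq.~\eqref{eq:noisy-optimal-mass} with the noise condition Eq.~\eqref{eq:effective-depth-noise-condition}, and then invoke the no-hit bound of Lemma~\ref{lem:phqc-nohit} with \(p_\star^\sharp\) replaced by the noisy optimal mass.

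\textbf{Step 1: noisy optimal mass.} From Eq.~\eqref{eq:noisy-optimal-mass} we have \(p_{\min}(\epsilon)\ge p_0-\tfrac12 G(n,p)\epsilon\). Substituting the upper bound on \(\epsilon\) from Eq.~\eqref{eq:effective-depth-noise-condition} gives
\[
\tfrac12 G(n,p)\epsilon\le p_0\Bigl(1-\tfrac{1}{G(n,p)}\Bigr).
\]
Hence \(p_{\min}(\epsilon)\ge p_0/G(n,p)\). The assumption \(G(n,p)>1\) makes the admissible window for \(\epsilon\) nonempty, and the bound is strictly positive. The premise \(p_0\ge cn^{-k}\) then yields the second inequality in Eq.~\eqref{eq:noisy-inverse-polynomial-mass}.

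\textbf{Step 2: shot budget.} Each circuit execution on the noisy device is an independent draw from \(P_\epsilon\). So the indicator that shot \(r\) lands in \(\mathcal X^\star\) is Bernoulli with parameter \(p_{\min}(\epsilon)\), and, as in Lemma~\ref{lem:phqc-nohit},
\[
\Pr[\text{no optimal hit}]=(1-p_{\min}(\epsilon))^S\le \exp\bigl(-S\,p_{\min}(\epsilon)\bigr)\le \exp\Bigl(-S\,\tfrac{c\,n^{-k}}{G(n,p)}\Bigr).
\]
With \(S\) as in Eq.~\eqref{eq:noise-robust-shot-budget}, the exponent is at least \(\log(1/\delta)\), so the failure probability is at most \(\delta\). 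Here \(\log\) means the natural logarithm, or a constant rescaling is absorbed otherwise. Since the statement concerns observing an optimal feasible string, no repair argument is needed.

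\textbf{Step 3: asymptotics.} Substituting \(G(n,p)=\Theta(pn)\) into Eq.~\eqref{eq:noise-robust-shot-budget} gives \(S=\mathcal O(pn^{k+1}\log(1/\delta))\). Compared with the ideal budget \(O(n^k\log(1/\delta))\) of Theorem~\ref{thm:phqc-exact}, this is one extra factor of \(n\) at fixed \(p\).

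\textbf{Main obstacle.} The only non-routine ingredient is the TV bound Eq.~\eqref{eq:effective-depth-tv-bound} underlying Eq.~\eqref{eq:noisy-optimal-mass}. It follows from telescoping over the \(G(n,p)\) effective layers, using contractivity of the trace norm under channels and under measurement, and \(\tfrac12\|\mathcal N_\epsilon-\mathcal I\|_\diamond\le\tfrac12\epsilon\) per layer. I take this as established in Appendix~\ref{sec:lipschitz-noise}. Given it, I will also check that the event \(\mathcal X^\star\) is measurable in the measurement basis, which makes \(|P_\epsilon(\mathcal X^\star)-P_0(\mathcal X^\star)|\le\|P_\epsilon-P_0\|_{\mathrm{TV}}\) immediate.
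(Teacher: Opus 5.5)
Your proposal is correct and follows the paper's proof step for step. You substitute the noise condition into the total-variation bound $p_{\min}(\epsilon)\ge p_0-\tfrac12 G(n,p)\epsilon$ to get $p_{\min}(\epsilon)\ge p_0/G(n,p)\ge c\,n^{-k}/G(n,p)$, then apply the no-hit bound of Lemma~\ref{lem:phqc-nohit}, solve for $S$, and substitute $G(n,p)=\Theta(pn)$, exactly as the paper does.
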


\begin{proof}
The total-variation continuity bound gives Eq. \ref{eq:noisy-optimal-mass}. To retain at least a \(G(n,p)^{-1}\) fraction of this optimal probability mass, it is sufficient that the effective per-layer error in Eq. \ref{eq:noisy-optimal-mass} obeys
\begin{equation}
    \epsilon
    \le
    \frac{2p_0}{G(n,p)}
    \left(
        1-\frac{1}{G(n,p)}
    \right).
    \label{eq:effective-depth-noise-condition2}
\end{equation}
That is Eq. \ref{eq:effective-depth-noise-condition}. Substituting
\eqref{eq:effective-depth-noise-condition} back into
\eqref{eq:noisy-optimal-mass} yields
\[
\begin{aligned}
p_{\min}(\epsilon)
&\geq
% p_0
% -
% \frac{1}{2}G(n,p)
% \frac{2p_0}{G(n,p)}
% \left(
% 1-\frac{1}{G(n,p)}
% \right)\\
% &=
\frac{p_0}{G(n,p)}
\geq
\frac{c}{G(n,p)}n^{-k}.
\end{aligned}
\]
This proves Eq. \ref{eq:noisy-inverse-polynomial-mass}. By Lemma \ref{lem:phqc-nohit}, the probability that none of \(S\) independent measurements belongs
to \(\mathcal X^\star\) is
\(
\leq
\exp\!\left[-S p_{\min}(\epsilon)\right].
\)
Solving for \(S\) one obtains \eqref{eq:noise-robust-shot-budget}.
Finally, substituting \(G(n,p)=\Theta(pn)\) gives the stated noisy shot-complexity bounds.
\end{proof}

% \begin{remark}[Device-specific transpilation depth]
% The scaling \(G(n,p)=O(pn)\) follows from the ideal parallel scheduling construction
% used here. Other transpilation or hardware-routing results simply replace this
% scaling by the corresponding device-specific effective noisy depth
% \(G_{\mathrm{dev}}(n,p)\). Theorem~\ref{thm:noise_robust} remains unchanged,
% with
% \[
% p_{\min}(\epsilon)
% =
% \Omega\!\left(
% \frac{n^{-k}}{G_{\mathrm{dev}}(n,p)}
% \right),
% \qquad
% S
% =
% O\!\left(
% G_{\mathrm{dev}}(n,p)\,
% n^k\ln(1/\delta)
% \right).
% \]
% Thus improved compilation, connectivity, or parallel scheduling translates
% directly into a reduced shot-complexity overhead.
% \end{remark}

\section{Quantum randomized exact algorithm (FPRASq)}
\label{sec:frasq}

\begin{definition}[Polynomial-Time Quantum Approximation Scheme (PTQAS)]
\label{def:ptqas}
Fix a family of constrained minimization problems with optimum value $C^\star$.
A (hybrid) algorithm family $\{\mathcal A_{n,\varepsilon,\delta}\}$ is a
\emph{Polynomial-Time Quantum Approximation Scheme (PTQAS)} if for every
accuracy $\varepsilon\in(0,1]$ and confidence $\delta\in(0,1)$, on input size $n$
the algorithm outputs a feasible solution $X$ such that
\[
\Pr\!\bigl[C(X)\le (1+\varepsilon)\,C^\star\bigr]\ \ge\ 1-\delta,
\]
and the total runtime (quantum circuit calls + classical post-processing) is
\[
\mathrm{poly}\!\bigl(n,\,1/\varepsilon,\,\log(1/\delta)\bigr).
\]
We say the PTQAS is \emph{strong} if its runtime is
$\mathrm{poly}\!\bigl(n,\,\log(1/\delta)\bigr)$ (degree $0$ in $1/\varepsilon$).
\end{definition}

\begin{definition}[FPRAS$_q$ (hybrid)]
\label{def:fprasq}
A hybrid quantum--classical algorithm $\mathcal A_{n,\varepsilon,\delta}$ is an
FPRAS$_q$ if for all $\varepsilon\in(0,1]$ and $\delta\in(0,1)$ it outputs a feasible
solution $X$ such that
\[
\Pr\!\bigl[C(X)\le (1+\varepsilon)C^\star\bigr]\ge 1-\delta,
\]
and the total runtime (including circuit calls and classical post-processing) is
$\mathrm{poly}(n,1/\varepsilon,\log(1/\delta))$.
If in addition $\Pr[C(X)=C^\star]\ge 1-\delta$, we call it an \emph{exact-hit} FPRAS$_q$.
\end{definition}

\begin{definition}[Promised bounded-error hybrid approximation scheme]
\label{def:bounded-approx}
Fix $\rho\ge 1$. A hybrid algorithm is a \emph{promised $\rho$-approximation scheme}
if it always outputs a feasible solution $X$ satisfying
\[
C(X)\le \rho\,C^\star
\]
for all instances in a stated promise class (e.g.\ bounded aspect ratio), with
runtime $\mathrm{poly}(n)$.
\end{definition}

In the noiseless setting, and under limited gate noise satisfying
$\epsilon \le p_0/G(n,p)$, NP-HQ is an \emph{exact-hit FPRAS$_q$} in the sense of
Definition~\ref{def:fprasq}: with $S=\mathrm{poly}(n,\log(1/\delta))$ circuit calls it returns
a global optimum with probability at least $1-\delta$.
Beyond this Lipschitz/TV regime, exact hits need not occur with inverse-polynomial probability;
nevertheless NP-HQ remains feasibility-first via deterministic repair and admits a
\emph{promised bounded-error} guarantee (Definition~\ref{def:bounded-approx})
whenever the instance class ensures an additive repair loss $\Delta(n)=O(nw_{\max})$,
yielding a constant-factor floor $C_{\mathrm{out}}\le (1+c_1\kappa)C^\star$ on misses. By Theorem~\ref{thm:ehqc-fprasq-exact} and Corollary~\ref{cor:ehqc-fprasq-noisy},
NP-HQ constitutes a \emph{strong PTQAS} in the sense of Definition~\ref{def:ptqas}
whenever the (post-repair) optimal mass is inverse--polynomial in $n$.

\subsection{Feasibility repair via nearest–permutation projection}
\label{sec:repair-hungarian}

Let \(\Pi_n\subset\{0,1\}^{n\times n}\) be the set of permutation matrices.
Given a measured bit–matrix \(Y\in\{0,1\}^{n\times n}\), we \emph{repair} it by
projecting to the nearest permutation in Hamming distance:
\begin{equation}
\label{eq:repair-def}
R(Y)\;:=\;\arg\min_{P\in\Pi_n} d_H(Y,P),
\qquad
d_H(Y,P)=\sum_{i,j}\bigl|Y_{ij}-P_{ij}\bigr|.
\end{equation}
This coincides with a maximum–weight assignment with binary weights.

\begin{lemma}[Equivalence to maximum–weight matching; complexity]
\label{lem:hamming-matching}
For fixed \(Y\in\{0,1\}^{n\times n}\),
\[
\arg\min_{P\in\Pi_n} d_H(Y,P)
\;=\;
\arg\max_{P\in\Pi_n}\ \sum_{i=1}^{n} Y_{i,\,P(i)}.
\]
Consequently, \(R(Y)\) can be computed in \(O(n^3)\) time by the Hungarian algorithm.
\end{lemma}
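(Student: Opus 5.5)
The plan is to expand the Hamming distance algebraically, using only that every entry of \(Y\) and of a permutation matrix lies in \(\{0,1\}\). For binary \(a,b\) one has \(|a-b| = a + b - 2ab\). Applying this entrywise gives
\[
d_H(Y,P)=\sum_{i,j}Y_{ij}+\sum_{i,j}P_{ij}-2\sum_{i,j}Y_{ij}P_{ij}.
\]
The first sum is \(|Y|\), the total number of ones in the measured string, which does not depend on \(P\). The second sum equals \(n\) for every \(P\in\Pi_n\), because a permutation matrix has exactly one \(1\) in each row. In the third sum, \(P_{ij}=1\) exactly when \(j=P(i)\), so it reduces to \(\sum_{i=1}^n Y_{i,P(i)}\).

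This yields the identity
\[
d_H(Y,P)=|Y|+n-2\sum_{i=1}^{n}Y_{i,P(i)}.
\]
Here \(|Y|+n\) is a constant over \(\Pi_n\), and \(P\mapsto d_H(Y,P)\) is a strictly decreasing affine function of the assignment weight. The two optimization problems therefore have identical sets of optimizers, so the \(\arg\min\) and \(\arg\max\) coincide as sets, including when there are ties.

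For the complexity claim, the right-hand problem is a linear assignment problem on the complete bipartite graph \(K_{n,n}\) with weight matrix \(W=Y\). It can be converted to a minimum-cost assignment with costs \(1-Y_{ij}\ge 0\). The Hungarian algorithm of Kuhn and Munkres, in its standard \(O(n^3)\) implementation (e.g. the Jonker–Volgenant or Tomizawa variant with potentials), solves this in \(O(n^3)\) arithmetic operations. It returns some element of the \(\arg\max\), which by the identity above is a valid choice of \(R(Y)\). Building the cost matrix from the measured bitstring takes \(O(n^2)\) time and is dominated by the \(O(n^3)\) solve.

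No step poses a real obstacle. The only care needed is that \(\arg\min\) is set-valued: \(R(Y)\) should be read as any fixed deterministic tie-breaking choice, for example the output of the algorithm itself. With that reading, \(R\) is a well-defined deterministic map, as the later repair analysis requires.
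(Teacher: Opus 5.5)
Your proof is correct and follows the paper's argument essentially verbatim. You use the same entrywise expansion $|a-b|=a+b-2ab$ to get $d_H(Y,P)=\sum_{i,j}Y_{ij}+n-2\sum_{i}Y_{i,P(i)}$, and then reduce to a maximum-weight perfect matching on $K_{n,n}$, solved in $O(n^3)$ by the Hungarian method. Your added remark that $\arg\min$ is set-valued, so $R$ needs a fixed deterministic tie-breaking rule, is a useful clarification the paper leaves implicit, since its later repair lemmas treat $R$ as a well-defined deterministic map.
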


\begin{proof}
Expand the Hamming distance for any \(P\in\Pi_n\):
\[
d_H(Y,P)
=\sum_{i,j}\!\bigl(Y_{ij}+P_{ij}-2Y_{ij}P_{ij}\bigr)
=\underbrace{\sum_{i,j}Y_{ij}}_{\text{const in }P}
+\underbrace{\sum_{i,j}P_{ij}}_{=\,n}
-2\sum_{i=1}^{n} Y_{i,\,P(i)}.
\]
Thus \(\arg\min_{P} d_H(Y,P)=\arg\max_{P}\sum_i Y_{i,P(i)}\).
The latter is the maximum–weight perfect matching on the complete bipartite
graph with weights \(w_{ij}=Y_{ij}\in\{0,1\}\), solvable in \(O(n^3)\) time by the
Hungarian method.
\end{proof}

\begin{lemma}[Idempotency]
\label{lem:idempotent}
For all \(Y\), \(R(Y)\in\Pi_n\).
Moreover, \(R\) is idempotent and fixes feasible inputs:
if \(Y\in\Pi_n\) then \(R(Y)=Y\), and \(R(R(Y))=R(Y)\) for all \(Y\).
\end{lemma}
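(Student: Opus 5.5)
The argument has two parts: first show that $R(Y)$ is always a permutation matrix, and then show that $R$ fixes every permutation matrix. Idempotency then follows by composing the two.

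\textbf{Step 1: $R(Y)\in\Pi_n$.} By the definition in Eq.~\eqref{eq:repair-def}, $R(Y)$ is an $\arg\min$ over the finite nonempty set $\Pi_n$, so it always exists and lies in $\Pi_n$. Ties need care, since $R$ must be a function. I would fix a deterministic tie-breaking rule once and for all, for example the first minimizer returned by the Hungarian algorithm of Lemma~\ref{lem:hamming-matching}, or equivalently the lexicographically smallest minimizer. Whatever the rule, it selects an element of $\Pi_n$, which proves the first claim.

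\textbf{Step 2: $R(Y)=Y$ for $Y\in\Pi_n$.} Take $Y\in\Pi_n$. Then $d_H(Y,Y)=0$. For any $P\in\Pi_n$ with $P\neq Y$, the matrices differ in at least one entry, so $d_H(Y,P)>0$; in fact $d_H(Y,P)\ge 2$. The matching form of Lemma~\ref{lem:hamming-matching} gives the same conclusion: $\sum_i Y_{i,P(i)}=n$ only when $P=Y$. Hence $Y$ is the \emph{unique} minimizer. Because the minimizer is unique, no tie arises and any tie-breaking rule returns $Y$, so $R(Y)=Y$.

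\textbf{Step 3: idempotency.} For arbitrary $Y$, Step 1 gives $R(Y)\in\Pi_n$. Applying Step 2 to the input $R(Y)$ gives $R(R(Y))=R(Y)$.

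The only real subtlety is that $R$ must be well defined when the $\arg\min$ is not unique, which is why the tie-breaking convention in Step 1 is needed. Step 2 shows that this convention never matters on feasible inputs, because there the minimizer is unique. The remaining steps are immediate.
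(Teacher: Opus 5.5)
Your proposal is correct and follows the same three-step route as the paper: $R(Y)\in\Pi_n$ by definition of the $\arg\min$, a feasible $Y$ is fixed because $d_H(Y,Y)=0$, and idempotency follows by applying that second step to $R(Y)$. Your added observation that $d_H(Y,P)\ge 2$ for $P\neq Y$, so that $Y$ is the \emph{unique} minimizer, usefully tightens the paper's argument. The paper's weak inequality $0\le d_H(Y,P)$ by itself only shows that $Y$ is \emph{a} minimizer, and it leaves the tie-breaking convention implicit.
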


\begin{proof}
By definition \(R(Y)\in\Pi_n\).
If \(Y\in\Pi_n\), then
\(
d_H(Y,Y)=0\le d_H(Y,P)
\)
for all \(P\in\Pi_n\),
so \(R(Y)=Y\).
Finally, since \(R(Y)\in\Pi_n\), applying \(R\) again leaves it unchanged.
\end{proof}

\begin{lemma}[Repair monotonicity on feasible subsets]
\label{lem:repair-monotone-set}
Let $\Omega$ be the feasible set and $R:\{0,1\}^{n^2}\to\Omega$ be deterministic
with $R(x)=x$ for all $x\in\Omega$. Then for every $A\subseteq\Omega$ and any
random outcome $Y$,
\[
\Pr[R(Y)\in A]\ \ge\ \Pr[Y\in A].
\]
In particular, for the optimal set $\mathcal X^\star\subseteq\Omega$,
\[
p_{\min}^{\rm rep}:=\Pr[R(Y)\in\mathcal X^\star]\ \ge\ 
p_{\min}^{\rm raw}:=\Pr[Y\in\mathcal X^\star].
\]
\end{lemma}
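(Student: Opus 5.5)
The plan is a direct event-inclusion argument. Since $R$ is deterministic, $\{R(Y)\in A\}$ is a genuine event, namely $\{Y\in R^{-1}(A)\}$. The whole claim then reduces to showing
\[
A\subseteq R^{-1}(A)\qquad\text{for every }A\subseteq\Omega,
\]
after which monotonicity of probability gives $\Pr[Y\in A]\le\Pr[Y\in R^{-1}(A)]=\Pr[R(Y)\in A]$. This works for any distribution of $Y$ on $\{0,1\}^{n^2}$, ideal or noisy.

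The inclusion itself is immediate from the fixed-point hypothesis. If $x\in A\subseteq\Omega$, then $R(x)=x\in A$, so $x\in R^{-1}(A)$.

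It is worth stating the decomposition explicitly, since it shows where the gain comes from:
\[
\Pr[R(Y)\in A]=\Pr[Y\in A]+\Pr[Y\notin A,\ R(Y)\in A].
\]
The second term is nonnegative. It collects infeasible or non-target strings that the repair map sends into $A$.

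Specializing to $A=\mathcal X^\star\subseteq\Omega$ gives $p_{\min}^{\rm rep}\ge p_{\min}^{\rm raw}$.

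I would then check that the hypotheses hold for the Hamming projection of Eq.~\eqref{eq:repair-def}. Lemma~\ref{lem:idempotent} supplies both conditions: $R(Y)\in\Pi_n$, and $R(Y)=Y$ for $Y\in\Pi_n$. The feasible set $\Omega$ is identified with $\Pi_n$, which matches $\mathcal F\cong S_n$.

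The only subtlety is determinism. When the arg-min in Eq.~\eqref{eq:repair-def} has ties, a fixed tie-breaking rule is needed so that $R$ is a well-defined function. Even under randomized tie-breaking the inequality survives, because feasible inputs have the unique minimizer $P=Y$, with $d_H=0$. That uniqueness is all the fixed-point step requires.

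I expect no real obstacle here: the argument is purely measure-theoretic.
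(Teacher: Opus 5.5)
Your proof is correct and is the paper's argument: the fixed-point property $R(x)=x$ on $\Omega$ gives the event inclusion $\{Y\in A\}\subseteq\{R(Y)\in A\}$, and monotonicity of probability finishes it. Your additional checks (the explicit decomposition, verifying the hypotheses for the Hamming projection via Lemma~\ref{lem:idempotent}, and the tie-breaking remark) are sound but not needed for the lemma as stated.
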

\begin{proof}
If $Y\in A\subseteq\Omega$, then $R(Y)=Y\in A$. Hence
$\{Y\in A\}\subseteq\{R(Y)\in A\}$ and the probability inequality follows.
\end{proof}

\begin{lemma}[Total--variation contraction under deterministic repair]
\label{lem:tv-contraction}
Let $R:\{0,1\}^{n\times n}\to \Pi_n$ be deterministic and let $P,Q$ be
probability distributions on $\{0,1\}^{n\times n}$.
Denote by $R_\#P$ and $R_\#Q$ the push--forwards of $P$ and $Q$ under $R$.
Then
\[
\bigl\|R_\# P - R_\# Q\bigr\|_{\mathrm{TV}}
\;\le\;
\|P-Q\|_{\mathrm{TV}}.
\]
\end{lemma}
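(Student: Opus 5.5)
The plan is to use the supremum-over-events characterization of total variation on a finite space,
\[
\|P-Q\|_{\mathrm{TV}}=\max_{B\subseteq\{0,1\}^{n\times n}}\bigl|P(B)-Q(B)\bigr|,
\]
which is equivalent to the half-$\ell_1$ definition used in Eq.~\eqref{eq:effective-depth-tv-bound}. Both $R_\#P$ and $R_\#Q$ are distributions on the finite set $\Pi_n$, so the same characterization gives
\[
\|R_\#P-R_\#Q\|_{\mathrm{TV}}=\max_{A\subseteq\Pi_n}\bigl|(R_\#P)(A)-(R_\#Q)(A)\bigr|.
\]

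The key steps are as follows.

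\begin{enumerate}
\item Fix an arbitrary event $A\subseteq\Pi_n$. By definition of the push-forward, $(R_\#P)(A)=P(R^{-1}(A))$ and $(R_\#Q)(A)=Q(R^{-1}(A))$.
\item Set $B:=R^{-1}(A)$. Since $R$ is a deterministic map on a finite set, $B$ is simply a subset of $\{0,1\}^{n\times n}$, so no measurability issues arise. Then
\[
|(R_\#P)(A)-(R_\#Q)(A)|=|P(B)-Q(B)|\le\|P-Q\|_{\mathrm{TV}}.
\]
\item Take the maximum over $A$ to obtain the claim.
\end{enumerate}

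A fully elementary alternative is to work with the $\ell_1$ form directly. Writing $(R_\#P)(\pi)=\sum_{y\in R^{-1}(\pi)}P(y)$, one applies the triangle inequality inside each fibre $R^{-1}(\pi)$. Summing over $\pi\in\Pi_n$ then yields $\tfrac12\sum_y|P(y)-Q(y)|$, because the fibres partition the domain; this uses that $R$ is total and single-valued. I would mention this second route as a check. It makes transparent that the result is the classical data-processing inequality for the deterministic (hence Markov) kernel $y\mapsto\delta_{R(y)}$.

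There is no real obstacle. The only point needing care is to confirm that the paper's $\|\cdot\|_{\mathrm{TV}}$ carries the factor $\tfrac12$, so that the event characterization applies without an extra constant. This normalization is consistent with Eq.~\eqref{eq:effective-depth-tv-bound} and Eq.~\eqref{eq:noisy-optimal-mass}. The proof uses neither optimality of $R$ nor idempotency from Lemma~\ref{lem:idempotent}, only determinism. As a remark, I would combine the result with Lemma~\ref{lem:repair-monotone-set} to note that the noisy repaired optimal mass satisfies $p^{\mathrm{rep}}_{\min}\ge p_0-\tfrac12G(n,p)\epsilon$.
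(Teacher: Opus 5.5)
Your proposal is correct and follows the paper's proof essentially verbatim: pull back an arbitrary event $A\subseteq\Pi_n$ to $R^{-1}(A)$, bound $|P(R^{-1}(A))-Q(R^{-1}(A))|$ by $\|P-Q\|_{\mathrm{TV}}$, and take the supremum over $A$. The fibre-wise $\ell_1$ check and the closing remark on the repaired optimal mass are valid additions. The proof does not need either of them.
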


\begin{proof}
For any $A\subseteq\Pi_n$ we have
\[
(R_\#P)(A)-(R_\#Q)(A)
= P(R^{-1}(A)) - Q(R^{-1}(A)).
\]
Taking absolute values and then the supremum over $A\subseteq\Pi_n$ yields
$\|R_\#P-R_\#Q\|_{\mathrm{TV}}\le \|P-Q\|_{\mathrm{TV}}$.
\end{proof}

\begin{lemma}[Repair cannot decrease mass on feasible subsets]
\label{lem:repair-monotone}
Assume $R(x)=x$ for all $x\in\Pi_n$.
Then for any $A\subseteq\Pi_n$ and any distribution $P$ on $\{0,1\}^{n\times n}$,
\[
(R_\#P)(A)\ \ge\ P(A).
\]
In particular, $(R_\#P)(\mathcal X^\star)\ge P(\mathcal X^\star)$.
\end{lemma}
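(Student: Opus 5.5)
The argument is a containment of events, in the same way as Lemma~\ref{lem:repair-monotone-set}, restated for push-forward measures as in Lemma~\ref{lem:tv-contraction}. Fix $A\subseteq\Pi_n$. By the definition of the push-forward,
\[
(R_\#P)(A)=P\bigl(R^{-1}(A)\bigr),\qquad R^{-1}(A)=\{Y\in\{0,1\}^{n\times n}: R(Y)\in A\}.
\]
So it suffices to prove the set inclusion $A\subseteq R^{-1}(A)$ and then use monotonicity of $P$.

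For the inclusion, take $x\in A$. Since $A\subseteq\Pi_n$, the hypothesis $R(x)=x$ applies, and it gives $R(x)=x\in A$, so $x\in R^{-1}(A)$. For the concrete nearest-permutation map of Eq.~\eqref{eq:repair-def}, this fixing property is exactly Lemma~\ref{lem:idempotent}. Monotonicity of $P$ then yields $P(A)\le P(R^{-1}(A))=(R_\#P)(A)$.

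The particular case follows by taking $A=\mathcal X^\star$. This is admissible because the optimal set consists of globally optimal feasible strings, so $\mathcal X^\star\subseteq\Pi_n$ in the assignment specialization.

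There is no real obstacle here. The only point to watch is that $A$ must lie inside the feasible set $\Pi_n$, since the fixed-point hypothesis is available only there. For infeasible sets the inequality can fail, because repair moves mass away from them. A secondary check is that, if $R$ resolves ties in the $\arg\min$ by a deterministic rule, the map is still well defined and still fixes every feasible input. This holds because the distance $d_H(Y,Y)=0$ is the unique minimizer when $Y\in\Pi_n$.
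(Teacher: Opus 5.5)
Your proof is correct and matches the paper's argument. Both reduce the claim to the event inclusion $\{Y\in A\}\subseteq\{R(Y)\in A\}$, which follows from the fixed-point hypothesis $R(x)=x$ on $\Pi_n\supseteq A$, and then apply monotonicity of $P$; writing it as $A\subseteq R^{-1}(A)$ with $(R_\#P)(A)=P(R^{-1}(A))$ is simply the push-forward phrasing of the same containment.
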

\begin{proof}
If $Y\in A\subseteq\Pi_n$ then $R(Y)=Y\in A$, hence
$\{Y\in A\}\subseteq\{R(Y)\in A\}$.
\end{proof}

\begin{remark}
The bound on $p_{\min}(\epsilon)$ uses the telescoping triangle inequality with
$\tfrac12\|\Lambda_\epsilon-\mathcal I\|_\diamond$ per two–qubit gate; it does
not require gate–independent noise, only a uniform per–gate bound. The instance–dependent
condition $\epsilon\le p_0/G(n,p)$ ensures $p_{\min}(\epsilon)\ge \tfrac12 p_0$.
\end{remark}

\begin{corollary}[No loss of mass on any fixed optimum]
\label{cor:mass-monotone}
Fix \(X^\star\in\Pi_n\). Then
\[
\Pr\bigl[R(Y)=X^\star\bigr]\;\ge\;\Pr\bigl[Y=X^\star\bigr].
\]
\end{corollary}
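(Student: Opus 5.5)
The plan is to obtain the corollary as the special case \(A=\{X^\star\}\) of the repair-monotonicity result, Lemma~\ref{lem:repair-monotone} (equivalently Lemma~\ref{lem:repair-monotone-set}). The hypotheses to check are that \(R\) is deterministic and that it fixes every permutation matrix. Both come from the definition~\eqref{eq:repair-def} together with Lemma~\ref{lem:idempotent}, which gives \(R(Y)\in\Pi_n\) for all \(Y\) and \(R(Y)=Y\) whenever \(Y\in\Pi_n\).

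One subtlety needs care first. The \(\arg\min\) in~\eqref{eq:repair-def} may have ties, so I would fix any deterministic tie-breaking rule, for instance the output of the Hungarian algorithm from Lemma~\ref{lem:hamming-matching}. This makes \(R\) a genuine function. The fixed-point property survives any tie-breaking: if \(Y\in\Pi_n\), then \(Y\) is the unique minimizer, because \(d_H(Y,P)\ge 2\) for every permutation matrix \(P\neq Y\) while \(d_H(Y,Y)=0\).

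With that in place, the argument is a single event inclusion. Since \(X^\star\in\Pi_n\), on the event \(\{Y=X^\star\}\) we have \(R(Y)=R(X^\star)=X^\star\). Hence
\[
\{Y=X^\star\}\subseteq\{R(Y)=X^\star\},
\]
and monotonicity of probability gives \(\Pr[R(Y)=X^\star]\ge\Pr[Y=X^\star]\). In push-forward notation this reads \((R_\#P)(\{X^\star\})\ge P(\{X^\star\})\), which is exactly Lemma~\ref{lem:repair-monotone} with \(A=\{X^\star\}\subseteq\Pi_n\).

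No real obstacle is expected. The only step needing any attention is the well-definedness of \(R\) under ties discussed above, and the uniqueness of the minimizer for feasible inputs settles it. If the sampling distribution is noisy, the statement holds for any distribution of \(Y\), so no noise assumption enters.
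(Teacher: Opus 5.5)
Your proof is correct and follows the same route as the paper, which also proves the corollary by the single event inclusion \(\{Y=X^\star\}\subseteq\{R(Y)=X^\star\}\) via the fixed-point property of Lemma~\ref{lem:idempotent}. Your tie-breaking remark usefully tightens that lemma: \(R(Y)=Y\) needs \(Y\) to be the unique minimizer, i.e.\ \(d_H(Y,P)\ge 2\) for every permutation matrix \(P\neq Y\), whereas the paper's proof only records \(d_H(Y,Y)\le d_H(Y,P)\).
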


\begin{proof}
The event \(\{Y=X^\star\}\) is contained in \(\{R(Y)=X^\star\}\) by Lemma~\ref{lem:idempotent}.
\end{proof}
\subsection{Feasibility via Swap–and–Repair and Hungarian Assignment }
\label{sec:swap-repair-noise}

The NP-HQ post-processor includes scans for any \emph{feasible} bit-string $b\in\Omega$ and application of  \emph{deterministic repair}  to transform invalid outcomes to valid bitstrings in $\widetilde b\in\Omega$.  A known algorithm to realise such repair in practice is the swap–based patching used in the Euclidean TSP PTAS~\cite{Arora1998PTAS_TSP}.

\begin{lemma}[Column-swap repair\cite{Arora1998PTAS_TSP}]\label{lem:swap-repair}
For any bit-string $x\in\{0,1\}^{n^2}$ that has \textbf{exactly one} `1' in
every row (but possibly multiple in a column) the column-swap heuristic returns
$\widetilde x\in\Omega$ in $O(n)$ time with
\[
   C(\widetilde x)
   \;\le\;
   C(x)\;+\;3n\,w_{\max},
   \qquad
   w_{\max}:=\max_{i,j} M_{ij}.
\]
Consequently $\Delta(n):=C(\widetilde x)-C^\star=O\!\bigl(n\,w_{\max}\bigr)$ for every metric cost matrix.

\end{lemma}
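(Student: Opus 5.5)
The plan is to construct the column-swap repair explicitly and charge its cost increase to the edges it rewires. Start from $x$ with exactly one $1$ per row, read it as a map $\sigma:[n]\to[n]$ with $\sigma(i)$ the column occupied in row $i$. Feasibility fails only where $\sigma$ is not injective. Let $D\subseteq[n]$ be the set of \emph{duplicate rows}: in each over-occupied column keep one row, say the smallest index, and place the others in $D$. Let $U$ be the set of columns left empty. Counting occupied columns gives $|D|=|U|$.

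The repair pairs $D$ with $U$ in any fixed order, say increasing indices, and reassigns $\sigma(i)\mapsto u_i$ for each $i\in D$. The result $\widetilde\sigma$ is a bijection, so $\widetilde x\in\Omega$. The pairing can be built in $O(n)$ time from a column-count array, one pass to find $D$ and one to find $U$. This gives both the feasibility and the $O(n)$ runtime claims.

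For the cost bound, interpret $C$ as the tour cost $C(\sigma)=\sum_{t}M_{\sigma^{-1}(t),\sigma^{-1}(t+1)}$ in the position encoding, or the assignment cost $\sum_i M_{i\sigma(i)}$ in the assignment encoding. Changing the entry of one row affects at most three cost terms: the assignment term itself, or in the tour reading the two adjacent edges around the moved city, counted with the row's own contribution. Each affected term changes by at most $w_{\max}$ in absolute value, since all entries lie in $[0,w_{\max}]$ for a nonnegative metric. Because at most $|D|\le n$ rows are modified, $C(\widetilde x)-C(x)\le 3|D|\,w_{\max}\le 3n\,w_{\max}$.

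The consequence $\Delta(n)=C(\widetilde x)-C^\star=O(n\,w_{\max})$ is the main obstacle. It does not follow from the first bound alone, because $C(x)$ can sit far above $C^\star$. I would instead note that any feasible tour costs at most $n\,w_{\max}$, since it has $n$ edges each of weight at most $w_{\max}$, while $C^\star\ge 0$. Hence $C(\widetilde x)-C^\star\le n\,w_{\max}$ trivially for every feasible $\widetilde x$. Metricity is not really needed for this step; it only justifies reading $w_{\max}$ as a uniform edge bound. The care needed is in fixing which cost encoding $C$ refers to, and in checking that $C(x)$ is well defined on infeasible strings by the same sum formula, so that the first inequality is meaningful.
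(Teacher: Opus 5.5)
The paper gives this lemma with only a citation and no proof, so your argument has to stand on its own. Much of it does. The construction of $D$ and $U$, the count $|D|=|U|$, feasibility of $\widetilde x$, and the $O(n)$ running time (once $x$ is given as the index map $\sigma$; reading $\sigma$ off the raw $n^2$-bit string already costs $\Theta(n^2)$) are all fine. You are also right that the ``consequently'' does not follow from the first inequality. Your bound $C(\widetilde x)\le n\,w_{\max}$ together with $C^\star\ge 0$ is what actually proves it.

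\textbf{The gap} is the per-row charging behind $C(\widetilde x)-C(x)\le 3|D|\,w_{\max}$ in the tour case, which is the case the lemma is about. With rows as cities, $\sigma^{-1}(t)$ is a set for infeasible $x$. The natural extension of your formula is then the quadratic one,
$C(x)=\sum_t\sum_{u\in\sigma^{-1}(t)}\sum_{v\in\sigma^{-1}(t+1)}M_{uv}$.
Changing the row of a duplicate city from column $c$ to an empty column $u$ removes $|\sigma^{-1}(c-1)|+|\sigma^{-1}(c+1)|$ terms and adds $|\sigma^{-1}(u-1)|+|\sigma^{-1}(u+1)|$ terms. The latter can be of order $n$: put roughly half the cities in each of columns $u\pm 1$ and move one of them into $u$. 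So ``changing one row affects at most three cost terms'' is false. Moreover, a tour cost has no ``row's own contribution'', so the constant $3$ is fitted rather than derived. Your closing caveat names the well-definedness issue but leaves it open, and the step does not survive its natural resolution.

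\textbf{The repair is short.} Account one-sidedly:
\begin{itemize}
\item Every edge of $\widetilde x$ joining two unmoved cities joins cities that sat in the same adjacent columns in $x$, so its weight is already a term of $C(x)$.
\item The remaining edges of $\widetilde x$ are incident to moved cities, and there are at most $2|D|$ of them.
\item Discarding the other, nonnegative, terms of $C(x)$ gives $C(\widetilde x)\le C(x)+2|D|\,w_{\max}$.
\end{itemize}
More directly still, the observation you already use for the second claim gives $C(\widetilde x)\le n\,w_{\max}\le C(x)+n\,w_{\max}$. This holds for any nonnegative extension of $C$ and any repair landing in $\Omega$. Alternatively, index rows by positions. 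Then $C(x)=\sum_t M_{\sigma(t),\sigma(t+1)}$ is well defined on every row-one-hot string, and each reassigned row touches exactly two edges.

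Every version uses only $M_{ij}\in[0,w_{\max}]$; the metric hypothesis plays no further role. This also shows that the stated bound carries little information beyond $C(\widetilde x)\le n\,w_{\max}$.
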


% \begin{center}
% \begin{tabular}{@{}lcc@{}}
% \toprule
% \textbf{Noise level} & \textbf{One-hot violated?} & \textbf{Repair to use} \\ \midrule
% Low / Medium &
%   Mostly intact &
%   \textsc{SwapRepair}
%   ($O(n)$, $\le 3n$ swaps) \\
% High (beyond TV bound) &
%   Rows \emph{and} columns corrupted &
%   Hungarian assignment
%   ($O(n^{3})$, exact) \\ \bottomrule
% \end{tabular}
% \end{center}
% \noindent\emph{Note.} “Beyond TV bound’’ refers to regimes where the per–gate noise no longer satisfies $\tfrac12 G(n,p)\,\epsilon \le \tfrac12 p_0$, so the total–variation closeness to the ideal distribution no longer guarantees sufficient feasible mass.

% As just pointed out, the feasibility criteria motivates the choice of classical repair applied during post processing. 

In the foregoing analyses where we have focused on extracting valid permutations from one-hot encoded initial states, in the noise regime that satisfies $\epsilon\;\le\;\frac{p_{0}}{\,G(n,p)}$  the greedy \textsc{SwapRepair} routine ( see Lemma~\ref{lem:swap-repair}) suffices. However, it assumes that each row already contains at least one ``donor'' 1–bit that can be moved into a vacant column, so feasibility is restored with $\le 3n$ local swaps.  ( c.f: the $3n$-swap fixer of Arora’s PTAS for Euclidean TSP~\cite{Arora1998PTAS_TSP}). This is always possible if the quantum noise leaves one-hot structure mostly intact while feasible solutions fail to appear from it. 

A \textbf{strong depolarising tail} appears once the two–qubit error rate approaches the Lipschitz threshold of Theorem~\ref{thm:noise_robust}.  In that regime the raw QAOA samples may violate \emph{both} column and row one–hot constraints, signalling significant leakage outside the intended subspace, causing the greedy procedure to fail. Rather than increasing the shot budget (which would only linearly suppress the leakage probability) we could repair every infeasible output \emph{deterministically} via the Hungarian assignment algorithm~\cite{Kuhn1955Hungarian}, as depicted in Algorithm~\ref{alg:hungarian-repair}.  This would always return a valid permutation closest in Hamming distance to the noisy bit-string while still keeping the overall complexity polynomial. The procedure is to reshape the length-\(n^{2}\) bitstring $b$ into an $n\times n$ binary
matrix
\[
   M_{ij}:=
   \begin{cases}
     1 & \text{if the qubit in row }i\text{ and column }j\text{ is }1,\\
     0 & \text{otherwise}.
   \end{cases}
\]

\begin{algorithm}[H]
  \caption{\textsc{HungarianRepair}}
  \label{alg:hungarian-repair}
  \begin{algorithmic}[1]
    \Require measured bit–string $b\in\{0,1\}^{n^{2}}$
    \Ensure feasible permutation $b_{\mathrm{ok}}$
    \State reshape $b$ into $M\in\{0,1\}^{n\times n}$
    \State $C\gets -M$\Comment{cost matrix for max-overlap assignment}
    \State $(\mathit{rows},\mathit{cols})\gets\textsc{Hungarian}(C)$
    \State $P\gets 0^{n\times n}$; \quad $P_{\mathit{rows}[k],\mathit{cols}[k]}\gets1$ for all $k$
    \State flatten $P$ to $b_{\mathrm{ok}}\in\{0,1\}^{n^{2}}$
    \State \Return $b_{\mathrm{ok}}$
  \end{algorithmic}
\end{algorithm}
Thus, the Hungarian algorithm solves the linear assignment problem
\(
   P^{\star}
   \;=\;
   \arg\min_{P\in\Pi_{n}}
     \,\langle C , P\rangle,
   \quad
   C:=-M,
   \quad
   \Pi_{n}:=\bigl\{P\in\{0,1\}^{n\times n}\,\bigm|\,
              P\mathbf 1=\mathbf 1,\;
              P^{\mathsf T}\mathbf 1=\mathbf 1\bigr\},
\)
where $\langle A,B\rangle=\sum_{ij}A_{ij}B_{ij}$ is the Frobenius
inner-product.  Because $C=-M$, minimising
$\langle C,P\rangle$ is equivalent to maximising the Hamming overlap
$\langle M,P\rangle$; thus $P^{\star}$ is the permutation matrix closest
to $M$. Each invocation of \textsc{HungarianRepair} costs $O(n^{3})$ classical
operations.

\subsection{Guarantees and Approximation Schemes.}

\begin{theorem}[NP-HQ is an FPRAS\textsubscript{$q$} in the exact–hit regime]
\label{thm:ehqc-fprasq-exact}
Fix any accuracy $\varepsilon\in(0,1]$ and confidence $\delta\in(0,1)$.
Suppose the \(p\)-layer CE--QAOA circuit assigns total optimal mass
\(p_{\min}\geq c n^{-k}\) to \(\mathcal X^\star\), for fixed
constants \(c>0\) and \(k\geq0\), and run
Algorithm~\ref{alg:NP-HQ} with
\[
S\ \ge\ \Bigl\lceil \tfrac{\ln(1/\delta)}{\,p_{\min}}\Bigr\rceil
\]
shots in total, applying the deterministic repair $R$ to each outcome and then
selecting the minimum measured cost. Then:
\begin{enumerate}[label=\textup{(\alph*)},leftmargin=2em]
\item With probability at least $1-\delta$, at least one repaired sample equals
      an optimal tour $X^\star$ and the algorithm outputs $X^\star$.
\item The output therefore satisfies $C_{\mathrm{out}}\le (1+\varepsilon)\,C^\star$
      with probability at least $1-\delta$ (indeed, equality $C_{\mathrm{out}}=C^\star$).
\item The total runtime is
      \(
      \mathrm{poly}\bigl(n,\,1/\varepsilon,\,\log(1/\delta)\bigr),
      \)
      with the $1/\varepsilon$–dependence being benign (degree $0$).
\item The quantum depth is $O(pn)$, and the number of circuit calls is $S=\mathrm{poly}\bigl(n,\log(1/\delta)\bigr)$.
\end{enumerate}
Hence NP-HQ is a fully–polynomial randomized approximation scheme in the hybrid sense (FPRAS\textsubscript{$q$}) whenever $p_{\min}$ is inverse–polynomial in $n$.
\end{theorem}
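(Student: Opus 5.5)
The plan is to reduce part (a) to the no-hit bound of Lemma~\ref{lem:phqc-nohit}, applied to the \emph{repaired} outcome distribution rather than the raw one. Let $Y_1,\dots,Y_S$ be the independent measured bit-strings and $R$ the deterministic repair, either \textsc{HungarianRepair} or the nearest-permutation projection of Eq.~\eqref{eq:repair-def}. By Lemma~\ref{lem:idempotent}, $R$ fixes every feasible string. Lemma~\ref{lem:repair-monotone} (equivalently Lemma~\ref{lem:repair-monotone-set}) then gives
\[
\Pr[R(Y_r)\in\mathcal X^\star]\ \ge\ \Pr[Y_r\in\mathcal X^\star]\ \ge\ p_{\min}\ \ge\ c\,n^{-k}.
\]
Because $R$ is deterministic, the repaired samples $R(Y_r)$ are still independent. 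The probability that none lands in $\mathcal X^\star$ is therefore at most $(1-p_{\min})^S\le e^{-p_{\min}S}\le\delta$ for the stated choice of $S$.

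Next I would argue that one optimal repaired sample forces an optimal output. Every repaired candidate $R(Y_r)$ lies in $\Pi_n=\mathcal F$, so its exact score satisfies $C(R(Y_r))\ge C^\star$. The arg-min selection step of Algorithm~\ref{alg:NP-HQ} therefore returns a string with cost exactly $C^\star$ whenever any candidate attains $C^\star$. There is one subtlety here: Algorithm~\ref{alg:NP-HQ} scores with $\langle b|H_C|b\rangle$, so I would note that on feasible strings the penalty terms vanish. This means $H_C$ and $H_{\mathrm{obj}}$ differ by at most a constant offset on $\mathcal F$ (zero under the kernel normalization), so the minimizers agree. Ties are harmless, since any minimizer lies in $\mathcal X^\star$. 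Part (b) then follows at once, because $C_{\mathrm{out}}=C^\star\le(1+\varepsilon)C^\star$ for every $\varepsilon\in(0,1]$, using $C^\star\ge 0$ for nonnegative costs.

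For parts (c) and (d), I would count resources. The shot count is $S=\lceil\ln(1/\delta)/p_{\min}\rceil\le c^{-1}n^k\ln(1/\delta)+1$, which is $\mathrm{poly}(n,\log(1/\delta))$. Each circuit has depth $G(n,p)=O(pn)$ by Eq.~\eqref{eq:effective-noisy-depth}. The classical work per shot is $O(n^2)$ for the feasibility check, $O(n^3)$ for repair (Lemma~\ref{lem:hamming-matching}), and $O(n^2)$ for scoring, giving $O(Sn^3)$ in total. Nothing in the algorithm depends on $\varepsilon$, which gives the degree-$0$ dependence on $1/\varepsilon$. The final sentence is then just Definition~\ref{def:fprasq} together with its exact-hit clause.

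The main obstacle is conceptual rather than technical: making sure the premise $p_{\min}\ge cn^{-k}$ refers to mass on $\mathcal X^\star$ \emph{before} repair, as Theorem~\ref{thm:noise_robust} or the ideal bound Eq.~\eqref{eq:main-q0p} supplies it. Monotonicity then transfers this bound to the repaired law. I would state explicitly that this transfer requires only that $R$ fixes $\mathcal F$; it does not require any repaired infeasible string to become optimal. I would also check the independence of the $S$ shots and the cost-offset identification of $H_C$ with $C$ on $\mathcal F$.
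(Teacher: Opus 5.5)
Your proposal is correct and follows the paper's proof essentially step for step: repair monotonicity transfers the pre-repair mass $p_{\min}$ to the repaired samples, the independent-shot product bound $(1-p_{\min})^S\le e^{-p_{\min}S}\le\delta$ gives (a), arg-min selection over feasible repaired candidates gives (b), and a per-shot $O(n^3)$ classical cost with no $\varepsilon$-dependence gives (c) and (d). Your extra remarks make explicit two points the paper leaves implicit: that scoring with $\langle b|H_C|b\rangle$ agrees with $C$ on $\mathcal F$, and that (b) needs $C^\star\ge 0$.
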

\begin{proof}
Let
\[
p_{\min}\;:=\;P(\mathcal X^\star),
\]
the total probability mass assigned by  \(p\) CE-QAOA layers to the optimal set
\(\mathcal X^\star\) before repair. Because the repair map \(R\) fixes every feasible bit-string, and every
\(x\in\mathcal X^\star\) is feasible, we have
\[
\{Y\in \mathcal X^\star\}\subseteq \{R(Y)\in \mathcal X^\star\},
\]
and therefore
\[
P\bigl(R(Y)\in \mathcal X^\star\bigr)\;\ge\;P(Y\in \mathcal X^\star)
\;=\;p_{\min}.
\]

Now take \(S\) independent shots \(Y_1,\dots,Y_S\). The probability that
\emph{none} of the repaired samples lies in \(\mathcal X^\star\) is
\[
P\!\left(\forall s,\;R(Y_s)\notin \mathcal X^\star\right)
=
\prod_{s=1}^S P\!\left(R(Y_s)\notin \mathcal X^\star\right)
\le
(1-p_{\min})^S
\le
e^{-p_{\min}S}.
\]
Hence, if
\[
S\;\ge\;\left\lceil \frac{\ln(1/\delta)}{p_{\min}}\right\rceil,
\]
then
\[
e^{-p_{\min}S}\le \delta,
\]
so with probability at least \(1-\delta\) there exists at least one repaired
sample \(R(Y_s)\in\mathcal X^\star\). This proves (a).

On that event, at least one repaired sample has objective value exactly
\(C^\star\). Since Algorithm~\ref{alg:NP-HQ} returns the repaired sample of
minimum cost among all \(S\) repaired candidates, the output must also have
cost \(C^\star\). Therefore
\[
C_{\mathrm{out}} = C^\star,
\]
and in particular
\[
C_{\mathrm{out}}\le (1+\varepsilon)C^\star
\qquad\text{for every }\varepsilon\in(0,1].
\]
This proves (b).

For (c), each shot consists of one depth-\(p\) quantum circuit execution with
\(p\) layers, followed by one deterministic repair and one cost evaluation.
With Hungarian repair, the classical work per shot is polynomial in \(n\)
(e.g.\ \(O(n^3)\)), so the total runtime is
\[
\mathrm{poly}(S,n).
\]
In the exact-hit regime of the theorem, when \(p_{\min}^{-1}\) is polynomial in
\(n\), the chosen shot count
\[
S = O\!\left(p_{\min}^{-1}\log(1/\delta)\right)
\]
is polynomial in \(n\) and \(\log(1/\delta)\). Since there is no actual
dependence on \(1/\varepsilon\), the runtime is
\[
\mathrm{poly}\bigl(n,\,1/\varepsilon,\,\log(1/\delta)\bigr)
\]
with degree \(0\) in \(1/\varepsilon\). This proves (c).

Finally, (d) follows because the number of CE--QAOA layers is \(p\).
For fixed \(p\), the effective quantum depth is \(O(pn)\), and the number
of circuit executions is exactly \(S\) which is polynomial in \(n\) and \(\log(1/\delta)\) in the exact-hit regime. Therefore NP-HQ is an exact-hit FPRAS\(_q\) whenever \(p_{\min}\) is
inverse-polynomial in \(n\).
\end{proof}

\begin{corollary}[FPRAS\textsubscript{$q$} under noise]
\label{cor:ehqc-fprasq-noisy}
Under the hypothesis of Theorem~\ref{thm:noise_robust}, we have
$p_{\min}(\epsilon)=\Omega(n^{-k-1})$. Choosing
\(
S \ge \lceil \ln(1/\delta)/p_{\min}(\epsilon)\rceil
= O\!\bigl(n^{k+1}\ln(1/\delta)\bigr)
\)
in Algorithm~\ref{alg:NP-HQ} yields the same FPRAS\textsubscript{$q$} guarantee as
Theorem~\ref{thm:ehqc-fprasq-exact} on a noisy device.
\end{corollary}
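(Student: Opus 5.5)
The plan is to chain Theorem~\ref{thm:noise_robust} into Theorem~\ref{thm:ehqc-fprasq-exact}, with the repair monotonicity lemmas bridging raw and repaired samples.

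First, I would invoke Theorem~\ref{thm:noise_robust} directly. Under its hypotheses ($p_0\ge cn^{-k}$ and the per-layer error condition \eqref{eq:effective-depth-noise-condition}) it gives $p_{\min}(\epsilon)\ge p_0/G(n,p)\ge (c/G(n,p))\,n^{-k}$. With $G(n,p)=\Theta(pn)$ and $p$ fixed (or polynomially bounded and absorbed into the constant), this yields $p_{\min}(\epsilon)=\Omega(n^{-k-1})$. Thus the noisy raw optimal mass is inverse-polynomial with exponent $k+1$.

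Second, I would transfer this bound through repair. The noisy device outputs raw outcomes $Y\sim P_\epsilon$ that may be infeasible, and Algorithm~\ref{alg:NP-HQ} applies $R$ to them. By Lemma~\ref{lem:repair-monotone} (or Lemma~\ref{lem:repair-monotone-set}), since $R$ fixes $\Pi_n\supseteq\mathcal X^\star$, we get $(R_\#P_\epsilon)(\mathcal X^\star)\ge P_\epsilon(\mathcal X^\star)=p_{\min}(\epsilon)$. So the repaired samples hit $\mathcal X^\star$ with probability at least $p_{\min}(\epsilon)$ per shot. Shots are independent executions, so the repaired outcomes are i.i.d.

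Third, I would apply Theorem~\ref{thm:ehqc-fprasq-exact} with $p_{\min}$ replaced by $p_{\min}(\epsilon)$, i.e.\ with constants $c'=c/\Theta(p)$ and $k'=k+1$. Its proof used only the per-shot lower bound on the probability of hitting $\mathcal X^\star$ after repair, together with independence. Hence $S\ge\lceil\ln(1/\delta)/p_{\min}(\epsilon)\rceil=O(n^{k+1}\ln(1/\delta))$ shots give an exact hit with probability at least $1-\delta$. The minimum-cost selection then outputs cost $C^\star$, which proves parts (a) and (b). For (c), the runtime is $S\cdot O(n^3)$ for Hungarian repair plus $O(n^2)$ scoring per shot, which is polynomial with degree $0$ in $1/\varepsilon$. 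Part (d) is unchanged, since noise does not alter the depth $O(pn)$.

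The main obstacle is ensuring that the noisy-device hypothesis matches the premise of Theorem~\ref{thm:ehqc-fprasq-exact}. That theorem is phrased for the CE--QAOA circuit's optimal mass, whereas here the relevant distribution is the noisy raw distribution on $\{0,1\}^{n^2}$ with leakage. I would handle this by noting that the proof of Theorem~\ref{thm:ehqc-fprasq-exact} is distribution-agnostic: it needs only the set inclusion $\{Y\in\mathcal X^\star\}\subseteq\{R(Y)\in\mathcal X^\star\}$ on the full bitstring space, which is exactly what the repair lemmas supply. A secondary point is tracking the $p$-dependence hidden in $\Omega(n^{-k-1})$. I would state explicitly that $p$ is treated as fixed, or else carry the factor $p$ into $S=O(pn^{k+1}\ln(1/\delta))$.
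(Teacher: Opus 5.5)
Your proposal is correct and follows the same route the paper intends. The paper gives no separate proof and leaves the corollary as immediate: Theorem~\ref{thm:noise_robust} gives $p_{\min}(\epsilon)\ge p_0/G(n,p)=\Omega(n^{-k-1})$, and this feeds into the repair-monotonicity and independent-shot argument of Theorem~\ref{thm:ehqc-fprasq-exact}, which goes through verbatim with $k$ replaced by $k+1$. Your explicit tracking of the factor $p$, giving $S=O(pn^{k+1}\ln(1/\delta))$, is a worthwhile clarification, since the corollary as stated silently treats $p$ as a constant.
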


\begin{proposition}[Promised multiplicative fallback via repair]
\label{prop:promised-fallback}
Assume the metric satisfies a positive lower bound $w_{\min}>0$ and let
$\kappa:=w_{\max}/w_{\min}$. If $\varepsilon \ge c\,\kappa$ for a fixed constant
$c>0$, then the repair additive gap $\Delta(n)=O(n\,w_{\max})$ implies
$C_{\mathrm{rep}}\le (1+\varepsilon)\,C^\star$ whenever the optimum is missed.
Thus with $S$ as in Theorem~\ref{thm:ehqc-fprasq-exact}, NP-HQ attains a
$(1+\varepsilon)$ approximation with probability at least $1-\delta$ even
conditioned on miss, under this promise.
\end{proposition}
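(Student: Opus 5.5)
The proof reduces the multiplicative statement to the additive repair gap of Lemma~\ref{lem:swap-repair}. The key move is a lower bound on \(C^\star\) in terms of \(w_{\min}\): every feasible tour uses exactly \(n\) edges, each of cost at least \(w_{\min}\), so \(C^\star\ge n\,w_{\min}\). On the miss event, the algorithm returns the minimum-cost repaired sample. By Lemma~\ref{lem:swap-repair} (or the Hungarian variant, in whichever regime the promise places us), some repaired candidate satisfies \(C_{\mathrm{rep}}\le C^\star+\Delta(n)\) with \(\Delta(n)\le c_1 n\,w_{\max}\) for an absolute constant \(c_1\). Since the output is the minimum over candidates, the same bound holds for \(C_{\mathrm{out}}\). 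Dividing by \(C^\star\) gives
\[
\frac{C_{\mathrm{out}}}{C^\star}\le 1+\frac{c_1 n\,w_{\max}}{n\,w_{\min}} = 1+c_1\kappa .
\]
Choosing the constant in the hypothesis as \(c:=c_1\), the assumption \(\varepsilon\ge c\kappa\) yields \(C_{\mathrm{out}}\le(1+\varepsilon)C^\star\).

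Next I would combine this with the hit event. Let \(E\) denote the event that some repaired sample lies in \(\mathcal X^\star\). On \(E\), Theorem~\ref{thm:ehqc-fprasq-exact}(b) gives \(C_{\mathrm{out}}=C^\star\). On \(E^c\), the bound above gives \(C_{\mathrm{out}}\le(1+\varepsilon)C^\star\) deterministically, because repair always returns a feasible string whose cost is controlled. The approximation guarantee therefore holds on both events, so its probability is at least \(\Pr[E]\ge 1-\delta\) with \(S\) as in Theorem~\ref{thm:ehqc-fprasq-exact}. In fact it holds almost surely, which makes the phrase ``conditioned on miss'' literally true. The runtime is inherited unchanged from Theorem~\ref{thm:ehqc-fprasq-exact}(c), since repair costs \(O(n^3)\) per shot.

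The main obstacle is to make the additive gap \(\Delta(n)=O(n\,w_{\max})\) valid \emph{relative to \(C^\star\)} on a miss. As stated, Lemma~\ref{lem:swap-repair} bounds \(C(\widetilde x)\) by \(C(x)+3n\,w_{\max}\) for the raw string, not relative to the optimum, and it requires row one-hot inputs. I would avoid this issue with the trivial bound that holds for any feasible tour: \(C(\widetilde x)\le n\,w_{\max}\le \kappa\,C^\star\), again using \(C^\star\ge n\,w_{\min}\). This gives \(C_{\mathrm{out}}\le \kappa C^\star\le (1+\kappa)C^\star\) regardless of which repair is used or how corrupted the sample is. Hence the choice \(c=1\) already suffices whenever \(\varepsilon\ge\kappa\), with no dependence on the swap-repair conditions. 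I would present this as the robust route and mention the swap-repair constant only as a refinement.
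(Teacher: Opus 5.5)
Your proof is correct. The paper states this proposition without a written proof. The intended argument is sketched in the paragraph following Definition~\ref{def:bounded-approx}, and it is your first route: take the additive repair loss $\Delta(n)=C(\widetilde x)-C^\star=O(n\,w_{\max})$ asserted in Lemma~\ref{lem:swap-repair}, divide by $C^\star\ge n\,w_{\min}$, and obtain $C_{\mathrm{out}}\le(1+c_1\kappa)C^\star$ on a miss.

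Where you genuinely depart is in how the additive gap is secured. Lemma~\ref{lem:swap-repair}, as displayed, bounds $C(\widetilde x)$ against the raw cost $C(x)$ rather than against $C^\star$, and it needs row-one-hot inputs. So it does not by itself cover Hungarian repair of arbitrarily corrupted samples. Your observation that every feasible tour costs at most $n\,w_{\max}\le\kappa C^\star$ is what the lemma's ``for every metric cost matrix'' conclusion implicitly rests on anyway. Using it directly makes the argument self-contained and independent of which repair map produced the candidates.

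It is also sharper. You get $C_{\mathrm{out}}\le\kappa C^\star$, so $\varepsilon\ge\kappa-1$ already suffices. This matters because $\kappa\ge 1$: for any $c>1$, such as the swap-repair constant $3$, the hypothesis $\varepsilon\ge c\kappa>1$ is never met in the FPRAS range $\varepsilon\in(0,1]$. By contrast, $\varepsilon\ge\kappa-1$ is satisfiable on near-uniform instances.

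Finally, your remark that the fallback holds surely is correct. The shot budget $S$ of Theorem~\ref{thm:ehqc-fprasq-exact} plays no role in it, and this sure guarantee is exactly the deterministic form required by Definition~\ref{def:bounded-approx}.
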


Because NP-HQ processes $S=\operatorname{poly}(n)$ shots, the post–processing stage scales as $O(S\,n^{3})$, leaving the overall
runtime polynomial in $(n,1/\varepsilon,\log1/\delta)$ fully consistent with Theorem~\ref{thm:ehqc-fprasq-exact}. \textsc{HungarianRepair} therefore provides a robust, parameter–free fallback for the high–noise regime, extending NP-HQ’s applicability.

By optimality of the Hungarian assignment, $P^{\star}$ maximizes the Hamming–overlap $\langle M,P\rangle$ among all permutations, making \textsc{HungarianRepair} the canonical ``least–invasive'' fix in the Hamming metric. While the analysis is couched in the language of metric~TSP, nothing in the probabilistic argument relies on subtleties specific to tours. Instead, the arguments rely on the existence of (i) a clearly \emph{specifiable target structure} that a
classical post‐processing step can recognise (e.g., a valid schedule, matching, partition, etc.), and (ii) a repair routine that
converts any raw bit‐string into a feasible one at a known additive (or
multiplicative) cost (e.g., swap-repair, projection,
greedy repair).  Whenever a problem class admits such a
structure--checker--and--repair trio, the same feasibility-first reasoning
applies, and analogous promised approximation guarantees follow once the
relevant additive or multiplicative repair bound is established.

%%%%%%%%%%%%%%%%%%%%%%%%%%%%%%%%%%%%%%%%%%%%%%%%%%%%%%%%%%%%%%%%%%%%%%%%%%%%
%  Quantum–classical separation for NP-HQ (kernel/promise-aware)
%%%%%%%%%%%%%%%%%%%%%%%%%%%%%%%%%%%%%%%%%%%%%%%%%%%%%%%%%%%%%%%%%%%%%%%%%%%%
\subsection{Quantum Advantage: Quantum--classical separation}
\label{ssec:qc-separation}

This section formalizes a separation between the hybrid CE--QAOA/NP-HQ pipeline and
\emph{purely classical} ``sample $+$ Hungarian'' strategies.  The key resource is the
inverse--polynomial \emph{optimal} mass guaranteed by the CE kernel on the
\emph{kernel-admissible} (promise) family of instances, together with standard success
amplification.  Throughout, complexity consequences are conditional on the standard
assumption $\mathbf{NP}\not\subseteq\mathbf{BPP}$.

A clarifying remark is in order. The classical baseline in Theorems~\ref{thm:qc-gap}
and~\ref{thm:no-classical-sampler} is already equipped with the \emph{same}
Hungarian repair map $R(\cdot)$ used by NP-HQ. The separation therefore does not
arise from any asymmetry in the classical post-processing stage. Rather, it is
located entirely in the \emph{sampling distribution}: the CE--QAOA kernel biases
amplitude toward the global optimum so that it receives inverse-polynomial mass
$p_{\min}(I) = \Omega(n^{-k})$, whereas no efficiently samplable classical
distribution can achieve inverse-polynomial overlap with the (a priori unknown)
global optimum uniformly over instances in $\mathcal{K}_n$, unless
$\mathbf{NP} \subseteq \mathbf{BPP}$. In other words, if a classical algorithm
could reproduce the CE--QAOA output distribution, it would immediately inherit the
same $\mathrm{FPRAS}_q$ guarantee via identical post-processing. The computational
hardness lies in generating that distribution classically, not in exploiting it once
it is available.

\paragraph{Classical baseline: sample $+$ Hungarian.}
We compare NP-HQ against any classical randomized algorithm that may:
(i) generate $\mathrm{poly}(n)$ samples from \emph{any} efficiently samplable distribution on
$\{0,1\}^{n\times n}$ (optionally already restricted to row--one--hot or permutations),
(ii) apply the Hungarian projection $R(\cdot)$ (nearest-permutation projection in Hamming
distance) to each sample in $O(n^3)$ time, and (iii) compute the objective cost
$C(\cdot)$ exactly and output the best repaired tour.  This is the strongest natural
``purely classical'' analogue of NP-HQ if one insists on the same repair primitive.

% %%%%%%%%%%%%%%%%%%%%%%%%%%%%%%%%%%%%%%%%%%%%%%%%%%%%%%%%%%%%%%%%%%%%%%%
%  NISQ model paragraph
% %%%%%%%%%%%%%%%%%%%%%%%%%%%%%%%%%%%%%%%%%%%%%%%%%%%%%%%%%%%%%%%%%%%%%%%
\paragraph{Chen--Cotler--Huang--Li oracle model.}
Definition~2.1 of~\cite{ChenCotlerHuangLi2023NISQ} calls a promise problem
\emph{NISQ-solvable} if there exists a classical probabilistic polynomial-time algorithm
that may query polynomially many times an oracle quantum device which
\begin{enumerate}[label=(\roman*) ,itemsep=2pt,leftmargin=*]
\item initialises $\poly n$ noisy qubits in $\lvert0\rangle^{\otimes N}$,
\item applies a polynomial-length circuit of one- and two-qubit gates, each followed by an
      independent constant-rate depolarising channel, and
\item measures all qubits in the computational basis.
\end{enumerate}
Our CE--QAOA layer uses only one- and two-qubit gates; NP-HQ queries the device
$S=\poly n$ times and post-processes each outcome in $\poly n$ time.  Hence NP-HQ fits the
NISQ oracle model as a \emph{model-of-computation} statement.  The stronger FPRAS$_q$
guarantees established in Sec.~\ref{sec:ehqc} additionally require a \emph{low-noise}
regime. % ensuring that the optimal mass remains inverse--polynomial.

\begin{corollary}[NP-HQ is NISQ-solvable]
\label{cor:ehqc-in-nisq}
For every instance size $n$ and constant depth $p$, the CE--QAOA/NP-HQ pipeline can be
implemented by a polynomial-time classical algorithm with oracle access to a noisy quantum
device of the type specified in~\cite{ChenCotlerHuangLi2023NISQ}.
\end{corollary}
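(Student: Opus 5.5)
The plan is to verify the three clauses of the Chen--Cotler--Huang--Li definition one by one, and then check that the classical driver that calls the device runs in polynomial time. The classical driver is the outer loop of Algorithm~\ref{alg:NP-HQ}. It issues $S=\mathrm{poly}(n)$ queries, one per shot. It then runs the deterministic post-processing: \textsc{Feasible} in $O(n^2)$, \textsc{HungarianRepair} in $O(n^3)$ by Lemma~\ref{lem:hamming-matching}, and exact scoring in $O(n^2)$. The total classical work is therefore $O(Sn^3)=\mathrm{poly}(n)$, as in Theorem~\ref{thm:ehqc-fprasq-exact}(c). This part is routine. It uses no assumption on the noise level, because the corollary is a model-of-computation statement and not a success guarantee.

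For the device side, I would compile one query into the form required by the oracle. For clause (i), the register has $N=mn_{\mathrm{loc}}=n^2$ qubits, all initialized in $\ket{0}$. For clause (ii), I would give an explicit one- and two-qubit gate decomposition and count its length.

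\textbf{State preparation.} $\ket{s_0}=\ket{s_{\mathrm{blk}}}^{\otimes m}$ is a product of $W$-states on $n_{\mathrm{loc}}$ qubits. Each $W$-state is prepared from $\ket{0\cdots0}$ by one $X$ gate followed by a cascade of $n_{\mathrm{loc}}-1$ two-qubit partial-swap ($XY$-type) rotations with suitable angles. This costs $O(n)$ gates per block.

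\textbf{Cost layer.} $U_C(\gamma)$ is diagonal, and $H_C$ is a polynomial-size sum of $Z$ and $ZZ$ terms with polynomially bounded coefficients (requirement (a) of Definition~\ref{def:kernel-requirement}). Since these terms commute, $U_C$ factors exactly into single-qubit $R_Z$ and two-qubit $R_{ZZ}$ gates, with $O(n^3)$ of them per layer for dense TSP.

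\textbf{Mixer layer.} $U_M(\beta)$ factors across blocks, but within a block the $X_uX_v+Y_uY_v$ terms do not commute.

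The step I expect to be the main obstacle is this mixer, since the exact block unitary $\exp(-i\beta\widetilde H_{XY}^{(b)})$ is not literally a product of two-qubit gates. I would first restrict it to the one-excitation sector $\mathcal H_1$. There it acts as $e^{-i\beta'(J-I)}$, with $J$ the all-ones matrix and $\beta'$ a rescaling of $\beta$ by $2/(n_{\mathrm{loc}}-1)$. This is a rank-one perturbation of a phase. It can be implemented exactly with $O(n_{\mathrm{loc}})$ two-qubit gates by conjugating with the $W$-state preparation circuit: un-prepare, apply a phase to the single reference basis state $\ket{e_1}$, and re-prepare. Because the one-hot sector is invariant, only the action on $\mathcal H_1$ matters.

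As a fallback, I would use a Trotterized ring/complete-graph $XY$ mixer with $\mathrm{poly}(n)$ steps. This changes the ideal unitary only by an inverse-polynomial error, and it still consists solely of one- and two-qubit gates, which is all that Corollary~\ref{cor:ehqc-in-nisq} requires.

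Summing over $p=O(1)$ layers gives a circuit of polynomial length. In the oracle model each gate is followed by a constant-rate depolarising channel, and that is exactly the device model of clause (ii). Clause (iii) is the computational-basis measurement of all $n^2$ qubits, which returns the bitstring that NP-HQ post-processes.

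Combining the two parts gives the claim: a classical probabilistic polynomial-time driver makes polynomially many queries to a polynomial-size noisy circuit of the CCHL type and runs polynomial post-processing on the outputs. I would add a remark that the FPRAS$_q$ correctness guarantees are \emph{not} part of this statement and still require the low-noise window of Theorem~\ref{thm:noise_robust}.
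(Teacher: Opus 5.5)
Your proposal is correct and follows the same route as the paper. The paper's justification, given in the paragraph preceding the corollary, is only that CE--QAOA uses one- and two-qubit gates, NP-HQ makes $S=\mathrm{poly}(n)$ device queries, and each outcome is post-processed in $\mathrm{poly}(n)$ time, with the FPRAS$_q$ guarantees explicitly deferred to the low-noise regime. You go further than the paper by making the compilation explicit, in particular the exact $O(n_{\mathrm{loc}})$-gate realization on $\mathcal{H}_1$ of the non-commuting block $XY$ mixer, obtained by conjugating a phase on $|e_1\rangle$ with the $W$-state cascade. The paper glosses over this point: its Remark~\ref{rem:gatedepth} tacitly applies one exchange gate per pair, which is a single Trotter step rather than the exact block unitary.
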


%======================================================================
%  Core separation theorem (promise-aware)
%======================================================================
\begin{theorem}[No classical ``sample $+$ Hungarian'' strategy matches NP-HQ on $\mathcal K_n$ unless $\mathbf{NP}\subseteq\mathbf{BPP}$]
\label{thm:qc-gap}
Fix any constant \(k\geq0\) and constant depth \(p\).
Consider the promise family $\mathcal K_n$ of kernel-admissible instances.
Suppose there exists a \emph{purely classical} probabilistic polynomial-time algorithm
$\mathcal C$ that, for every $I\in\mathcal K_n$, outputs an \textbf{optimal} tour with
probability $\Omega(n^{-k})$, while only using:
\begin{enumerate}[label=(\roman*) ,itemsep=2pt,leftmargin=*]
    \item $\poly n$ samples from any efficiently samplable distribution on $\{0,1\}^{n\times n}$,
    \item Hungarian projection $R(\cdot)$ on each sample in $O(n^3)$ time, and
    \item total runtime $\poly n$.
\end{enumerate}
Then the corresponding \emph{promise} search problem on $\mathcal K_n$ lies in $\mathbf{BPP}$.
In particular, if $\mathcal K_n$ contains an NP-hard subfamily under polynomial-time reductions,
then $\mathbf{NP}\subseteq\mathbf{BPP}$.

Consequently, assuming $\mathbf{NP}\not\subseteq\mathbf{BPP}$, no such classical sampler exists
on any NP-hard kernel-admissible family, whereas NP-HQ achieves inverse--polynomial optimal-hit
probability on $\mathcal K_n$ in the low-noise regime of Sec.~\ref{sec:complexity} and Theorem~\ref{thm:noise_robust}, and therefore
constitutes an exact-hit FPRAS$_q$ there.
\end{theorem}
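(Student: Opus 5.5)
The argument is a standard success-amplification reduction. Its key ingredient is that optimality of a candidate tour cannot be checked directly, but the classical pipeline always computes the exact cost $C(\cdot)$ of every repaired candidate, so we can keep the cheapest one.

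\textbf{Step 1: the amplified algorithm.} Given $\mathcal C$, define $\mathcal C'$ as follows. Run $\mathcal C$ independently $T=\lceil n^{k}\ln(1/\delta)/c\rceil$ times. In each run, collect every repaired candidate $R(\cdot)$ that $\mathcal C$ produces, together with its output. Score each candidate exactly with $C(\cdot)$ in $O(n^2)$ time, and return the minimum-cost feasible tour seen overall.

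\textbf{Step 2: success probability.} Each run outputs an element of $\Omega^\star$ with probability at least $c\,n^{-k}$, and the runs are independent. Exactly as in Lemma~\ref{lem:phqc-nohit}, the probability that no run hits $\Omega^\star$ is at most $(1-cn^{-k})^T\le e^{-cn^{-k}T}\le\delta$. On the complementary event, the minimum-cost selection returns a tour of cost exactly $C^\star$. This is the same argument as part (b) of Theorem~\ref{thm:ehqc-fprasq-exact}: no feasible tour has cost below $C^\star$, so selecting the minimum cannot return a worse tour once an optimum has been seen. Taking $\delta=1/3$ gives a bounded-error randomized algorithm with runtime $T\cdot\mathrm{poly}(n)=\mathrm{poly}(n)$. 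Hence the promise search problem on $\mathcal K_n$ lies in $\mathbf{BPP}$, in the search/promise sense.

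\textbf{Step 3: NP-hardness transfer.} Suppose $\mathcal K_n$ contains an NP-hard subfamily. Compose the polynomial-time reduction from an NP-complete language $L$ (say via Hamiltonian cycle or TSP decision) with $\mathcal C'$. On an input $x$, map it to an instance $I\in\mathcal K_n$ and run $\mathcal C'$ to obtain a tour $\hat X$ together with its exactly computed cost. Then answer by comparing that cost with the decision threshold. With probability at least $2/3$ we have $C(\hat X)=C^\star$, so the answer is correct and $L\in\mathbf{BPP}$, giving $\mathbf{NP}\subseteq\mathbf{BPP}$. The contrapositive, together with Theorem~\ref{thm:noise_robust} and Theorem~\ref{thm:ehqc-fprasq-exact} applied to NP-HQ on $\mathcal K_n$, yields the final sentence of the statement. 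Those results give $p_{\min}(\epsilon)\ge c\,n^{-k}/G(n,p)$ and exact-hit FPRAS$_q$ behavior in the low-noise window.

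\textbf{Main obstacle.} The delicate point is Step 3, specifically the promise. The reduction must map yes/no instances of $L$ into $\mathcal K_n$, that is, instances on which the kernel-admissibility promise holds. The hypothesis ``contains an NP-hard subfamily under polynomial-time reductions'' must be read as precisely this, and I will state it that way explicitly. A secondary subtlety is converting a search procedure into a decision procedure on a promise problem. This is handled by the exact scoring: correctness needs only a comparison of $C(\hat X)$ with the threshold, and never requires certifying that $\hat X$ is optimal. By contrast, the hypotheses (i)–(ii) on how $\mathcal C$ samples and repairs are irrelevant to the proof. They only emphasize that the post-processing matches NP-HQ, so the restriction to Hungarian repair costs nothing.
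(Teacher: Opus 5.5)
Your proposal is correct and follows the paper's proof. The paper's argument is the same: repeat $\mathcal C$ independently and return the cheapest tour by exact scoring, which places the promise search problem in $\mathbf{BPP}$, and NP-hardness then transfers. The paper takes $T=\Theta(n^k\log n)$ for a high-probability bound where you take $\delta=1/3$, and your explicit search-to-decision threshold comparison and promise-respecting reduction fill in details the paper's sketch leaves implicit.
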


\begin{proof}[Proof sketch]
Assume $\mathcal C$ succeeds with probability at least $c\,n^{-k}$ on every $I\in\mathcal K_n$.
Repeat $\mathcal C$ independently $T=\Theta(n^k\log n)$ times and return the best tour found.
Then the probability of missing the optimum in all trials is at most
$(1-c\,n^{-k})^T\le n^{-\Omega(1)}$, so we obtain a polynomial-time randomized algorithm that
solves the \emph{promise} search problem on $\mathcal K_n$ with high probability.
If $\mathcal K_n$ is NP-hard under reductions (or contains an NP-hard subfamily), this implies
$\mathbf{NP}\subseteq\mathbf{BPP}$.
\end{proof}

%======================================================================
\subsection{Limits of Classical Samplers with perfect structural oracles}
\label{ssec:classical-limits}
%======================================================================

We now strengthen the classical side by granting the sampler \emph{perfect access} to the
constraint structure.  The goal is to rule out the objection that NP-HQ's advantage comes
merely from sampling outside the feasible region and then repairing.

\paragraph{Classical oracle models.}
Let $L=n^2$ and view samples as bit-matrices in $\{0,1\}^{n\times n}$.
We allow the classical sampler one of the following increasingly generous oracles:
\begin{enumerate}[label=(\Alph*),leftmargin=*]
  \item \textbf{Unconstrained sampler:}
        outputs arbitrary $Y\in\{0,1\}^{n\times n}$.
  \item \textbf{Row--one--hot sampler:}
        outputs only matrices with exactly one $1$ in each row
        (no column guarantee), i.e.\ $Y\in\mathcal S_{\mathrm{row}}$ with
        $|\mathcal S_{\mathrm{row}}|=n^n$.
  \item \textbf{Uniform permutation oracle:}
        outputs a uniformly random permutation matrix
        $Y\in\mathcal S_{\mathrm{perm}}=\Pi_n$ with $|\Pi_n|=n!$.
\end{enumerate}
In all three cases, the algorithm may post-process every sample via the Hungarian projection
$R(\cdot)$ in $O(n^3)$ time and may repeat the entire procedure $\poly n$ times.

\begin{theorem}[Perfect structural oracles do not yield inverse--polynomial optimal overlap]
\label{thm:no-classical-sampler}
Fix any constant $k$ and any promise family $\mathcal K_n$.
If a classical randomized algorithm---equipped with any oracle (A)--(C) above and allowed
$\poly n$ total time---outputs an \emph{exact} optimal tour with probability
$\Omega(n^{-k-1})$ on \emph{every} instance $I\in\mathcal K_n$, then the promise search problem
on $\mathcal K_n$ lies in $\mathbf{BPP}$; in particular, if $\mathcal K_n$ contains an
NP-hard subfamily under reductions then $\mathbf{NP}\subseteq\mathbf{BPP}$.
\end{theorem}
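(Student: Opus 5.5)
The plan is to reduce Theorem~\ref{thm:no-classical-sampler} to the amplification argument already used for Theorem~\ref{thm:qc-gap}. The key observation is that each of the oracles (A)--(C) is itself a classical probabilistic polynomial-time procedure, so the structural access they grant adds no computational power.

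\emph{Step 1: simulate the oracles.} I would first show that every oracle is efficiently simulable.
\begin{itemize}
\item Oracle (A) is any efficiently samplable distribution, which is already covered by hypothesis (i) of Theorem~\ref{thm:qc-gap}.
\item Oracle (B) can be simulated by drawing, for each row independently, one column index (uniformly or from any efficiently samplable law). This costs $O(n^2)$ time to write the matrix.
\item Oracle (C), a uniform permutation, is produced exactly by Fisher--Yates shuffling in $O(n\log n)$ random bits and $O(n)$ time. Writing it as a permutation matrix costs $O(n^2)$.
\end{itemize}
Consequently, any algorithm $\mathcal C$ with access to (A), (B) or (C) that runs in $\mathrm{poly}(n)$ total time can be replaced by a purely classical randomized algorithm $\mathcal C'$. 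The running time changes only by a polynomial factor, and the output distribution is identical. The Hungarian projection $R(\cdot)$ of Lemma~\ref{lem:hamming-matching} costs $O(n^3)$ per call, and exact scoring costs $O(n^2)$ per candidate, so $\mathcal C'$ remains polynomial time.

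\emph{Step 2: amplify.} By hypothesis, $\mathcal C'$ outputs an exact optimum with probability at least $c\,n^{-k-1}$ on every $I\in\mathcal K_n$. I would run it independently $T=\lceil n^{k+1}\ln(n)/c\rceil$ times, score each output with the exact objective $C(\cdot)$, and return the cheapest one. As in Lemma~\ref{lem:phqc-nohit}, the probability of missing the optimum in every run is
\[
(1-c\,n^{-k-1})^T\le e^{-cn^{-k-1}T}\le n^{-1}.
\]
The returned candidate is then optimal: every output is feasible because $R$ maps into $\Pi_n$ (Lemma~\ref{lem:idempotent}), and minimum-cost selection keeps any optimal hit. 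This gives a BPP algorithm for the promise search problem on $\mathcal K_n$ with polynomial runtime $T\cdot\mathrm{poly}(n)$. If $\mathcal K_n$ contains an NP-hard subfamily, composing this with the polynomial-time reduction places an NP-complete language in BPP. The decision version follows from the search version, either via verifying a witness or via self-reducibility through cost thresholds.

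\emph{Main obstacle.} The amplification itself is routine. The step I expect to need care is making the NP-hardness conclusion rigorous on a \emph{promise} family. The reduction from an NP-complete problem must land inside $\mathcal K_n$, and solving the optimization problem must decide the source instance. I would handle this by taking the NP-hard subfamily to mean exactly that there is a polynomial-time reduction whose images satisfy the promise. On those images the amplified algorithm is correct with probability at least $1-1/n$, and behavior outside the promise is irrelevant. This yields $\mathbf{NP}\subseteq\mathbf{BPP}$, with the usual promise-BPP caveat absorbed into the stated hypothesis.
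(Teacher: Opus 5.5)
Your proposal is correct and follows the paper's proof. That proof reruns the amplification argument of Theorem~\ref{thm:qc-gap} (independent repetition, here with $T=\Theta(n^{k+1}\log n)$ trials, keeping the cheapest tour) and notes that the argument does not depend on which oracle (A)--(C) supplied the samples. Your Step~1 and your care about reductions landing inside the promise make explicit what the paper's one-line proof leaves implicit: concluding membership in $\mathbf{BPP}$, rather than in $\mathbf{BPP}$ relative to the oracle, requires the oracles to be efficiently classically simulable, which the paper only remarks on afterwards when discussing efficient uniform-permutation sampling.
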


\begin{proof}
The proof is identical to Theorem~\ref{thm:qc-gap}: inverse--polynomial success amplifies to
constant success probability under polynomially many repetitions, yielding a $\mathbf{BPP}$
algorithm for the promise search problem.  The argument never uses which oracle (A)--(C)
generated the samples; hence even a perfect permutation oracle cannot bypass it.
\end{proof}

\paragraph{Connection to NP-HQ.}
Oracles (B) and (C) are not assumed hard to implement. In fact, sampling a uniform permutation takes $O(n\log n)$ random bits and time~\cite{Durstenfeld1964RandomPermutation}.  The bottleneck is
that achieving \emph{inverse--polynomial overlap with the (a priori unknown) global optimum
uniformly over instances} is tantamount to solving the underlying search problem on the
family.  In contrast, on the kernel-admissible family $\mathcal K_n$ the CE layer biases
amplitude so that the \emph{optimal} tour receives inverse--polynomial mass (with suitable
instance-dependent angles), and deterministic Hungarian projection guarantees feasibility of
every reported output.  In the low-noise regime where the optimal mass remains inverse--polynomial
(Sec.~\ref{sec:ehqc}), success amplification yields an exact-hit FPRAS$_q$ for NP-HQ on $\mathcal K_n$.
Importantly, Hungarian repair is used here as a feasibility projection; no cost-preservation
property is assumed of Hungarian projection beyond feasibility and its total-variation
contraction property (Lemma~\ref{lem:tv-contraction}).

In the noiseless setting and under the limited-noise condition of
Theorem~\ref{thm:noise_robust} ensuring inverse--polynomial optimal mass
$p_{\min}(\epsilon)=\Omega(n^{-k-1})$, NP-HQ is an \emph{exact-hit FPRAS$_q$}. With $S=\Theta(\ln(1/\delta)/p_{\min}(\epsilon))=\poly n,\ln(1/\delta)$ device queries it
returns a global optimum with probability at least $1-\delta$.
Outside this low-noise regime, exact hits may disappear; nevertheless NP-HQ remains a
polynomial-time \emph{quantum approximation scheme} in the pragmatic sense that it outputs
a feasible solution together with a provable instance-class-dependent approximation floor
whenever the repair pipeline admits an additive (or multiplicative) performance bound.

\section{Demonstration on Superconducting Quantum Processors}
\label{sec:experiment}

\medskip
\noindent
Combined with Hungarian repair (Sec.~\ref{sec:swap-repair-noise}), the NP-HQ
pipeline returns a \emph{feasible} candidate deterministically, with classical
post-processing cost \(O(Sn^3)\) when Hungarian repair is applied to all
samples. The practical question addressed in this section is whether the
underlying CE--QAOA sampler remains informative on present-day
superconducting hardware at problem sizes approaching \(10^2\) qubits. To test this, we run single-layer CE--QAOA circuits on 127-qubit IBM Eagle-r3
processors for a family of \textsc{QOptlib} TSP subset instances and apply the same deterministic repair routine to all measured outcomes. The hardware study is meant to demonstrate that shallow CE--QAOA circuits still produce useful sampling bias in a strongly noisy regime and that  deterministic repair restores feasibility reliably with high quality objective costs on the repaired samples.

Recall that the prolem Hamiltonian is diagonal and contains only \(Z\) and \(ZZ\) terms. Accordingly, the dominant hardware overhead comes from scheduling commuting
two-qubit phase interactions under the device connectivity constraints, while
the mixer acts through native or near-native two-qubit excitation-preserving
gates. This makes the present implementation a natural testbed for the
constraint-preserving CE--QAOA architecture on current superconducting devices.

All kernel assumptions of Def.~\ref{def:kernel-requirement} continue to hold for
the benchmark family considered here, with \(t_{\max}=O(n)\), so the hardware
study can be viewed as a direct implementation of NP-HQ with CE--QAOA as the
front-end sampler.

% -----------------------------------------------------
\subsection{Benchmark set and warm-start parameters}
% -----------------------------------------------------

We consider seven \textsc{QOptlib} TSP subset instances with \(n=4,\dots,10\)
cities, corresponding to \(n^2=16\) to \(100\) logical qubits
\cite{Osaba2024Qoptlib}. To avoid instance-by-instance retuning on hardware, we
reuse \(p=1\) angles obtained from generic-QAOA optimization and feed them
directly into CE--QAOA. This choice is motivated by the parameter-transfer
picture developed earlier in the manuscript: rather than re-optimizing on noisy
hardware, we test whether CE--QAOA can extract useful performance from
previously optimized shallow parameters.

Table~\ref{tab:qaoa:p1-warmstarts} lists the generic-QAOA warm-start angles used
throughout this section. No further parameter tuning was performed on the
quantum hardware.

\begin{table}[ht]
    \centering
    \begin{tabular}{lccc}
        \toprule
        Instance & $\gamma^{\text{gen}}$ & $\beta^{\text{gen}}$ & Best energy (generic) \\
        \midrule
        wi4  & 0.233 & 2.025 & $-4.36\times10^{4}$ \\
        wi5  & 1.307 & 2.387 & $-6.11\times10^{4}$ \\
        wi6  & 0.748 & 0.741 & $-4.14\times10^{4}$ \\
        wi7  & 2.172 & 0.577 & $-7.66\times10^{4}$ \\
        dj8  & 0.239 & 0.460 & $-3.18\times10^{4}$ \\
        dj9  & 0.002 & 2.507 & $-1.72\times10^{6}$ \\
        dj10 & 1.835 & 2.399 & $-1.68\times10^{4}$ \\
        \bottomrule
    \end{tabular}
    \caption{Generic-QAOA warm-start angles at \(p=1\), reused verbatim in the
    CE--QAOA hardware runs.}
    \label{tab:qaoa:p1-warmstarts}
\end{table}

% -----------------------------------------------------
\subsection{Experimental protocol}
% -----------------------------------------------------

All circuits were implemented in Python and transpiled with
\textsc{Qiskit~1.1.2} \cite{Qiskit2023} using device-aware coupling maps and the
native two-qubit \texttt{ECR}/\texttt{iSWAP} basis where applicable. For each
\((\text{instance},\text{chip})\) pair we executed a single CE--QAOA layer
(\(p=1\)) with the warm-start angles
\[
  (\gamma,\beta)=\bigl(\gamma^{\mathrm{gen}},\beta^{\mathrm{gen}}\bigr),
\]
and then applied Hungarian repair (Alg.~\ref{alg:hungarian-repair}) to every
measured bit-string.

Measurement-error mitigation was enabled, and dynamical decoupling and Pauli
twirling were activated when supported by the execution workflow. For each run
we recorded: the transpiled depth and two-qubit gate count, the hardware calibration snapshot, feasible-sample fraction before and after repair, best repaired objective value \(C_{\min}\), and end-to-end turnaround time.

% =====================================================
\subsection{\texttt{ibm\_brisbane} (127 qubits, Eagle r3)}
\label{ssec:exp-brisbane}
% =====================================================

\paragraph{Device snapshot (21 June 2025 09:30 UTC).}
\vspace{-0.5em}
\begin{center}
\begin{tabular}{@{}lcc@{}}
\toprule
\textbf{Metric} & \textbf{Median value} & \textbf{Best}\\
\midrule
Qubits                                         & 127 & -- \\
Two-qubit ECR error $\epsilon_{2Q}$            & $6.99\times10^{-3}$ & $2.67\times10^{-3}$ \\
Readout error                                  & $1.61\times10^{-2}$ & $5.1\times10^{-3}$ \\
$T_{1}$, $T_{2}$ (\(\mu\)s)                    & 235.7, 122.4 & -- \\
\bottomrule
\end{tabular}
\end{center}

The median two-qubit error rate lies in the \(10^{-3}\) to \(10^{-2}\) regime,
so these runs probe CE--QAOA well outside an idealized noiseless setting. The
purpose of the experiment is therefore not to test perfect coherent evolution,
but to assess whether shallow constrained sampling combined with deterministic
repair remains useful under realistic device noise.

\paragraph{Transpilation statistics.}
Table~\ref{tab:brisbane-depth} reports the transpiled depth, two-qubit gate
count, and turnaround time obtained on \texttt{ibm\_brisbane}. Dynamical
decoupling and Pauli twirling were enabled in the reported transpilation
pipeline, and the resulting circuits reflect the cost of embedding the encoded
CE--QAOA structure into the heavy-hex connectivity.

\begin{table}[H]
\centering
\caption{Logical qubit count (\(n^{2}\)), two-qubit gates,
physical depth, and end-to-end turnaround per instance on
\texttt{ibm\_brisbane}. Wall time includes queue and execution overhead.}
\label{tab:brisbane-depth}
\begin{tabular}{lrrrrr}
\toprule
Instance & Qubits & 2Q gates & Depth & Wall [s] \\
\midrule
wi4  & 16  &   509  &   771  &  761 \\
wi5  & 25  & 1\,387  & 1\,787 &   80 \\
wi6  & 36  & 2\,834  & 3\,060 &  137 \\
wi7  & 49  & 5\,141  & 4\,706 &  228 \\
dj8  & 64  & 8\,295  & 7\,221 &  317 \\
dj9  & 81  & 12\,642 & 10\,695 &  499 \\
dj10 & 100 & 18\,770 & 12\,962 &  719 \\
\bottomrule
\end{tabular}
\end{table}

\subsection{Hardware Discussion.}
\begin{itemize}[leftmargin=*]
\item The transpiled circuits are deep and strongly noise-affected, especially for the \texttt{dj} instances.

\item Hungarian repair restores feasibility deterministically for every sampled
      outcome, so the post-processed output is always a valid permutation.

\item The local classical post-processing cost is negligible compared with the
      queue and hardware execution overhead, even though every sample is
      repaired separately.

\item  For the benchmark set
      considered here, the best repaired tours are competitive with the
      published \textsc{QOptlib} references and, for  the harder
      instances, improve on them.
\end{itemize}

\begin{figure}[H]
  \centering
  \includegraphics[width=.9\linewidth]{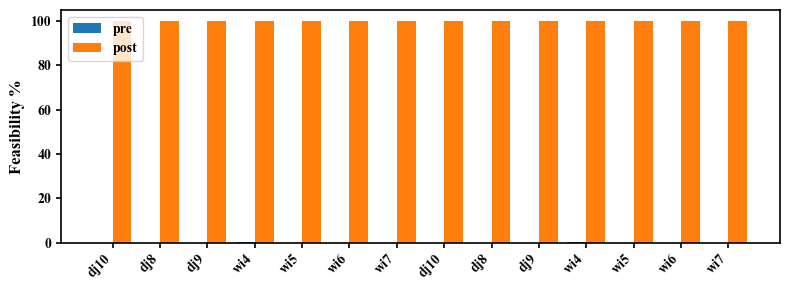}%
  \caption{Raw versus post-repair feasible fraction for the
           \textsc{QOptlib} benchmark instances executed on
           \texttt{ibm\_brisbane}.  Hungarian repair maps all samples back into the
           feasible permutation set.}
  \label{fig:brisbane-feasible}
\end{figure}

\vspace{-1ex}

\subsubsection{Comparison with published \textsc{QOptlib} tours}

Table~\ref{tab:qoptlib-comparison} compares the repaired CE--QAOA output on
\texttt{ibm\_brisbane} against the reference tours distributed with
\textsc{QOptlib} \cite{Osaba2024Qoptlib}. The smaller \texttt{wi} instances are
matched exactly, while the harder \texttt{dj} subset shows modest to
double-digit improvements according to the reported fixed-shot comparison.

\begin{table}[H]
\centering
\caption{Repaired CE--QAOA output on \texttt{ibm\_brisbane} versus the
reference tours distributed with \textsc{QOptlib}. Positive \% improvement means a
shorter tour than the published reference.}
\label{tab:qoptlib-comparison}
\begin{tabular}{lrrrr}
\toprule
Instance & QOPTLib cost & Our cost & $\Delta$ & Improvement [\%] \\
\midrule
wi4  & 6\,700 & 6\,700 & 0  & 0.00 \\
wi5  & 6\,786 & 6\,786 & 0  & 0.00 \\
wi6  & 9\,815 & 9\,815 & 0  & 0.00 \\
wi7  & 7\,245 & 7\,245 & 0  & 0.00 \\
dj8  & 2\,794 & 2\,762 & $-32$  & \textbf{1.15} \\
dj9  & 2\,438 & 2\,134 & $-304$ & \textbf{12.5} \\
dj10 & 3\,155 & 2\,822 & $-333$ & \textbf{10.55} \\
\bottomrule
\end{tabular}
\end{table}

\noindent
Taken at face value, the standardized comparison indicates that even in a
strongly noisy regime the repaired CE--QAOA samples remain competitive with the reference benchmark tours.

\subsection{\texttt{ibm\_sherbrooke} (127 qubits, Eagle r3)}
\label{ssec:exp-sherbrooke}

\paragraph{Device snapshot (21 June 2025, 07:10 UTC).}
\vspace{-0.5em}
\begin{center}
\begin{tabular}{@{}lcc@{}}
\toprule
\textbf{Metric} & \textbf{Median} & \textbf{Best} \\
\midrule
Two-qubit ECR error $\epsilon_{2Q}$          & $7.13\times10^{-3}$ & $2.55\times10^{-3}$ \\[1pt]
Read-out error                               & $2.08\times10^{-2}$ & $3.9\times10^{-3}$  \\[1pt]
Single-qubit SX error                        & $2.27\times10^{-4}$ & $6.8\times10^{-5}$  \\[1pt]
$T_{1}$, $T_{2}$ coherence (\si{\micro\second}) & 262, 212 & – \\[1pt]
Calib.\ version                              & \multicolumn{2}{c}{1.6.112 (Eagle r3)} \\
\bottomrule
\end{tabular}
\end{center}

\paragraph{Transpilation statistics.}
\begin{table}[H]
\centering
\caption{Logical-qubit count (\(n^{2}\)), two-qubit ECR gates,
physical depth, and end-to-end turnaround on \texttt{ibm\_sherbrooke}.}
\label{tab:sherbrooke-depth}
\begin{tabular}{lrrrrr}
\toprule
Instance & Qubits & 2Q gates & Depth & Wall [s] \\
\midrule
wi4  & 16  &   547  &   907  &  181 \\
wi5  & 25  & 1\,477 & 1\,964 &   74 \\
wi6  & 36  & 2\,876 & 3\,031 &  150 \\
wi7  & 49  & 5\,060 & 4\,842 &  689 \\
dj8  & 64  & 8\,507 & 6\,896 &  551 \\
dj9  & 81  & 12\,827 & 9\,524 &  574 \\
dj10 & 100 & 18\,768 & 11\,698 &  728 \\
\bottomrule
\end{tabular}
\end{table}

\begin{figure}[H]
  \centering
  \includegraphics[width=.9\linewidth]{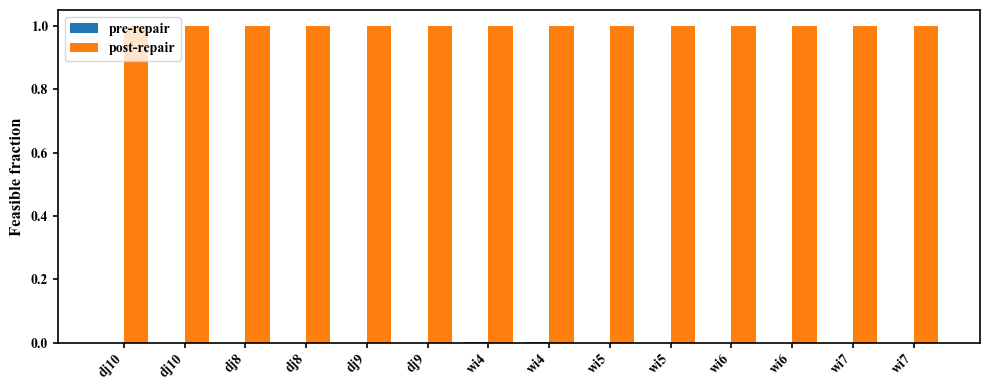}%
  \caption{Raw versus post-repair feasible fraction for
           \texttt{ibm\_sherbrooke}. As on \texttt{ibm\_brisbane}, raw
           feasibility is very small and deterministic repair restores all
           outputs to feasibility.}
  \label{fig:sherbrooke-feasible}
\end{figure}

\begin{table}[H]
\centering
\caption{Best repaired tours obtained on \texttt{ibm\_sherbrooke} in the
standardized comparison, together with the \textsc{QOptlib} reference tours
\cite{Osaba2024Qoptlib}.}
\label{tab:sherbrooke-comparison}
\begin{tabular}{lrrrrr}
\toprule
Instance & Qubits & Our cost & QOPTLib cost & $\Delta$ & Improvement [\%] \\
\midrule
dj10 & 100 & 2\,909 & 3\,155 & $-246$ & \textbf{7.8}  \\
dj9  &  81 & 2\,134 & 2\,438 & $-304$ & \textbf{12.5} \\
dj8  &  64 & 2\,762 & 2\,794 &  $-32$ & \textbf{1.1}  \\
wi7  &  49 & 7\,245 & 7\,245 &   0    & 0.0  \\
wi6  &  36 & 9\,815 & 9\,815 &   0    & 0.0  \\
wi5  &  25 & 6\,786 & 6\,786 &   0    & 0.0  \\
wi4  &  16 & 6\,700 & 6\,700 &   0    & 0.0  \\
\bottomrule
\end{tabular}
\end{table}

The \texttt{ibm\_sherbrooke} runs are qualitatively similar to the
\texttt{ibm\_brisbane} results. The transpiled circuits are again deep and
strongly noise-affected; raw feasibility is extremely small; and deterministic
Hungarian repair maps every sample back to a valid permutation. Despite this,
the repaired samples remain competitive with the benchmark references, matching
the known \texttt{wi} solutions and improving several \texttt{dj} instances in
the standardized comparison.

% A point of reference is the combinatorial baseline
% \[
%   \frac{n!}{n^n},
% \]
% the fraction of valid permutations inside the row-one-hot set. This quantity
% already falls below \(10^{-3}\) for the larger instances. Thus the observed
% pre-repair feasibility of \(0\) to \(0.1\%\) is not surprising even before
% hardware noise is taken into account. Device noise then further suppresses raw
% feasible mass by scattering weight outside the intended constrained sector. The two experiments show that although raw output quality is poor, deterministic repair turns shallow constrained samples into useful feasible candidates, and the resulting repaired solutions remain competitive with the
% benchmark references.

\newcommand{\HH}{\mathcal H}

\section{HH-QAOA: Geometry Reduced Shot Analyses Preserving Guarantees}

We propose Heavy-Hitter QAOA, a Heavy-Hitter theory that leverages geometry and quantum information to equip the NP-HQ pipeline with a memory-efficient and time-efficient classical post-processing scheme. The key idea is that
the classical checker can focus on the small set of bit-strings that received the highest
sampling bias from the quantum core. Under purely classical heavy-hitter
bounds the retained candidate set has size $O(n^{k+1})$, independent of
the raw shot count, leading to post-processing cost $O(n^{k+4})$. By
exploiting the inverse-polynomial optimum mass guaranteed by the
CE--QAOA kernel, the admissible threshold can be raised by one factor
of~$n$, reducing the candidate set to $O(n^k)$ and the post-processing
cost to $O(n^{k+3})$. This reduction comes without weakening the
approximation guarantees established in the preceding sections.

%----------------------------------------------------------------------
\begin{definition}[Heavy hitter]\label{def:heavy-hitter}
Fix integers \(n,k\ge 1\) and let \(\tau=\Theta\!\bigl(n^{-(k+1)}\bigr)\).
For a probability distribution \(P\) on \(\{0,1\}^{n}\),
a bit‑string \(y\in\{0,1\}^{n}\) is called a \emph{heavy hitter}
if its probability mass satisfies \(P(y)\ge \tau\).
Empirically, given \(S\) samples \(\{y^{(s)}\}_{s=1}^{S}\),
\(y\) is a heavy hitter when its observed frequency obeys
\(\widehat P_{S}(y)\ge \tau\).
\end{definition}

%============================================================
\subsection{Speeding up the Analyses of Quantum-Generated Samples}
\label{sec:hh-from-circuits}
%============================================================

\paragraph{Relation to the exact-hit regime.}
We propose heavy-hitter refinement as a post-processing
refinement of NP-HQ algorithm. Indeed, when the optimal mass satisfies
\(p_{\min}=\Theta(n^{-k})\), the exact-hit shot budget is already
\(S=\Theta(n^k\log(1/\delta))\). In that regime, heavy-hitter thresholding
does not change the fundamental success exponent controlling the total number
of required samples. Its role is instead to organize the retained candidate
set more efficiently and to expose additional practical reductions in memory
and classical workload, especially in oversampled and hardware-scale settings
where the raw histogram is much larger than the minimal exact-hit budget.
Accordingly, the theorem-level guarantees in this section should be read as
refinements of post-processing, while the stronger reductions seen under
block-prefiltering and top-\(L\) compression are empirical or
instance-dependent. Yet, as we argue below, most of that support is irrelevant for
post‑processing efforts to identify the global optimum because a set of \emph{heavy hitters} captures essentially
all the useful weight containing it. First we invoke classical heavy‑hitter theory to quantify this
intuition for arbitrary distributions. Then in CE‑QAOA we leverage additional
algebraic, geometric and information‑theoretic structure to shrink the
heavy‑hitter list even further, culminating in what we call a
\emph{Quantum Heavy‑Hitter Theory} (Sec.~\ref{ssec:q-hh-theory}). 

\begin{figure}[H]
\centering
\begin{tikzpicture}[
  font=\small, >=Stealth,
  box/.style={draw, rounded corners=6pt, thick, inner sep=6pt, minimum width=3.3cm, minimum height=3.3cm, align=center}
]
% Left
\node[box, fill=red!3, label={[red!70!black]above:deeper circuit}] (D) at (0,0)
{\(\|P_{\epsilon}-P_0\|_{\rm TV}\approx L\,G\epsilon\)\\[2pt]
\(\Rightarrow\) smaller usable threshold \(\tau\)\\
\(\Rightarrow\) larger \(|\mathcal H_\tau|\) (more to scan)};

% Right
\node[box, fill=green!3, label={[green!50!black]above:depth-compressed}] (C) at (9.1,0)
{\(\|P_{\epsilon,\delta}-P_0\|_{\rm TV}\le L\,\tilde G\,\epsilon + \delta\)\\[2pt]
choose compiler \(\delta\) s.t. \(L(\tilde G<G)\epsilon+\delta\)\\ decreases \(\Rightarrow\) raise cutoff \(\tau'>\tau\)\\
\(\Rightarrow\) smaller \(|\mathcal H_{\tau'}|\) (faster checking)};

\draw[-Stealth, very thick] (D.east) -- node[above]{\footnotesize compress} (C.west);

% Bottom histograms
\begin{scope}[shift={(-3.1,-2.15)}]
  \node at (0,0) {\footnotesize deeper: many keepers};
  \foreach \i/\h/\col in {0/0.4/gray!60,1/0.6/gray!60,2/0.3/gray!60,3/0.7/gray!60,4/0.5/gray!60,5/0.4/purple!45,6/0.9/purple!45,7/1.0/purple!45,8/0.6/purple!45}{
    \draw[fill=\col, draw=black!40] (-1.8+0.4*\i,-1.3) rectangle ++(0.3,\h*0.9);
  }
  \draw[thick, purple, dashed] (-1.9,-0.2) -- (1.9,-0.2) node[pos=0.98, right] {\(\tau\)};
\end{scope}

\begin{scope}[shift={(6.0,-2.15)}]
  \node at (0,0) {\footnotesize compressed: fewer keepers};
  \foreach \i/\h/\col in {0/0.3/gray!60,1/0.5/gray!60,2/0.25/gray!60,3/0.55/gray!60,4/0.35/gray!60,5/0.35/gray!60,6/0.85/green!45,7/0.95/green!45}{
    \draw[fill=\col, draw=black!40] (-1.6+0.4*\i,-1.3) rectangle ++(0.3,\h*0.9);
  }
  \draw[thick, purple, dashed] (-2.0,-0.2) -- (2.0,-0.2) node[pos=0.98, right] {\(\tau'>\tau\)};
\end{scope}

\end{tikzpicture}
\caption{\textbf{Depth-compression effect.} Trading a tiny compiler error \(\delta\) for a large gate-count reduction \(\tilde G<G\) reduces total TV error, lets you raise \(\tau\), and shrinks the heavy-hitter set. The associated circuit compression is discussed in App. \ref{sec:depth-planning}}
\label{fig:depth_compression_tradeoff}
\end{figure}

%========================================================
% Figure 4
%========================================================
\begin{figure}[H]
\centering
\begin{tikzpicture}[
  font=\small,
  boxnode/.style={draw, rounded corners=6pt, thick, align=center, minimum width=4.2cm, minimum height=1.4cm},
  arr/.style={-Stealth, very thick}
]
\node[boxnode, fill=gray!12]   (A) at (0,0)    {\textbf{Markov (classical)}\\ $|\mathcal H_\tau|\le 1/\tau = n^{k+1}$};
\node[boxnode, fill=blue!10]   (B) at (6.0,0)  {\textbf{Block factorisation}\\ $|\mathcal H_\tau|=O(n^{k})$};
\node[boxnode, fill=purple!12] (C) at (12.0,0) {\textbf{Top-\(L\) compression}\\ $|\mathcal H_\tau|=O(1)$};

\draw[arr] (A) -- (B);
\draw[arr] (B) -- (C);

\node[align=left, red!70] at (0,-1.2)    {\footnotesize post-proc $O(n^{k+4})$};
\node[align=left, blue!70] at (6.0,-1.2) {\footnotesize post-proc $O(n^{k+3})$};
\node[align=left, green!70] at (12.0,-1.2) {\footnotesize post-proc $O(n^{3})$};

\end{tikzpicture}
\caption{\textbf{Shrinking the scan set.} Structure in CE--QAOA collapses the heavy-hitter list from \(n^{k+1}\) to \(O(1)\), cutting classical time from \(O(n^{k+4})\) to \(O(n^{3})\).}
\label{fig:hh_set_size_ladder}
\end{figure}

%------------------------------------------------------------
\subsection{Classical Heavy-Hitter Theory}
\label{ssec:classical-HH}
%------------------------------------------------------------

\paragraph{Setup.}
Let \(p\) be any probability distribution supported on a finite feasible set
\(\Omega\) of size \(D\). For a threshold \(\tau>0\), define the heavy-hitter set
\[
  \mathcal H_\tau
  \;:=\;
  \{x\in\Omega : p(x)\ge \tau\},
  \qquad
  H_\tau
  \;:=\;
  |\mathcal H_\tau|.
\]

\begin{definition}[Heavy-hitter set]
\label{def:HH}
For a distribution \(p\) on \(\Omega\) and threshold \(\tau>0\), the
\emph{heavy-hitter set} is
\[
  \mathcal H_\tau
  \;:=\;
  \{x\in\Omega : p(x)\ge \tau\}.
\]
\end{definition}

\begin{lemma}[Cardinality bound]
\label{lem:HH-cardinality}
For every distribution \(p\) and every threshold \(\tau>0\),
\[
  H_\tau \;\le\; \frac{1}{\tau}.
\]
\end{lemma}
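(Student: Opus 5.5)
The plan is a Markov-type counting argument that uses only normalization of \(p\). Since \(p\) is a probability distribution on \(\Omega\), its total mass is one. Restricting that sum to \(\mathcal H_\tau\) can only lower it, because every term is nonnegative.

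On \(\mathcal H_\tau\), each term is bounded below by \(\tau\) by definition, which gives
\[
1=\sum_{x\in\Omega}p(x)\;\ge\;\sum_{x\in\mathcal H_\tau}p(x)\;\ge\;\tau\,|\mathcal H_\tau|=\tau H_\tau .
\]
Dividing by \(\tau>0\) yields \(H_\tau\le 1/\tau\).

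An equivalent viewpoint is to apply Markov's inequality to the random variable \(p(X)\) with \(X\sim p\). The direct sum above is cleaner, so that is the one to use.

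There is essentially no obstacle. The only points to check are that \(\Omega\) is finite, so the sums are well defined, and that \(\tau>0\), so the division is legitimate; both are part of the setup.

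The same computation applies verbatim to the empirical distribution \(\widehat P_S\) of Definition~\ref{def:heavy-hitter}, since it is also a probability distribution. Hence the empirical heavy-hitter list also has at most \(1/\tau\) elements. With \(\tau=\Theta(n^{-(k+1)})\) this gives the \(O(n^{k+1})\) candidate-set size quoted in Figure~\ref{fig:hh_set_size_ladder}.
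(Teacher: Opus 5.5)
Your proof is correct and is the same argument as the paper's: restrict the normalization sum $1=\sum_{x\in\Omega}p(x)$ to $\mathcal H_\tau$, bound each retained term below by $\tau$, and divide by $\tau>0$. Your side remark that the computation carries over to the empirical distribution is also how the paper argues part (iii) of Lemma~\ref{lem:HH-retention}, there with threshold $\tau/2$.
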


\begin{proof}
By definition, each \(x\in\mathcal H_\tau\) contributes at least \(\tau\) mass, so
\[
  1
  \;=\;
  \sum_{x\in\Omega} p(x)
  \;\ge\;
  \sum_{x\in\mathcal H_\tau} p(x)
  \;\ge\;
  H_\tau\,\tau.
\]
Hence \(H_\tau\le 1/\tau\).
\end{proof}

\begin{lemma}[Empirical retention of heavy hitters]
\label{lem:HH-retention}
Let \(X_1,\dots,X_S\) be i.i.d.\ samples from \(p\), and let
\[
  \widehat p_S(x)
  \;:=\;
  \frac{1}{S}\sum_{s=1}^S \mathbf 1\{X_s=x\}
\]
be the empirical frequency.
Define the empirical candidate set
\[
  \widehat{\mathcal H}_{\tau/2}
  \;:=\;
  \{x\in\Omega : \widehat p_S(x)\ge \tau/2\}.
\]
Then the following hold.

\begin{enumerate}[label=\textup{(\roman*)},leftmargin=*]
\item For every \(x\in\mathcal H_\tau\),
\[
  \Pr\!\left[\widehat p_S(x)<\frac{\tau}{2}\right]
  \;\le\;
  \exp\!\left(-\frac{\tau S}{8}\right).
\]

\item Consequently,
\[
  \Pr\!\left[\mathcal H_\tau \subseteq \widehat{\mathcal H}_{\tau/2}\right]
  \;\ge\;
  1-\frac{1}{\tau}\exp\!\left(-\frac{\tau S}{8}\right).
\]

\item In every sample realization,
\[
  |\widehat{\mathcal H}_{\tau/2}|
  \;\le\;
  \frac{2}{\tau}.
\]
\end{enumerate}
In particular, if
\[
  S
  \;\ge\;
  \frac{8}{\tau}\log\!\frac{1}{\tau\delta},
\]
then with probability at least \(1-\delta\),
\[
  \mathcal H_\tau \subseteq \widehat{\mathcal H}_{\tau/2}
  \quad\text{and}\quad
  |\widehat{\mathcal H}_{\tau/2}|\le \frac{2}{\tau}.
\]
\end{lemma}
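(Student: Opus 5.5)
Fix $x\in\mathcal H_\tau$ and write $N_x:=\sum_{s=1}^S \mathbf 1\{X_s=x\}=S\,\widehat p_S(x)$. This count is $\mathrm{Binomial}(S,p(x))$ with mean $\mu:=S\,p(x)\ge \tau S$. The event $\widehat p_S(x)<\tau/2$ is the event $N_x<\tau S/2\le \mu/2$. We apply the multiplicative Chernoff lower-tail bound $\Pr[N_x\le(1-\eta)\mu]\le \exp(-\eta^2\mu/2)$ with $\eta=1/2$, which gives $\Pr[N_x<\mu/2]\le \exp(-\mu/8)\le \exp(-\tau S/8)$. This uses $\mu\ge\tau S$ and the monotonicity of $\exp(-\mu/8)$ in $\mu$. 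The only care needed is that for $p(x)>\tau$ the threshold $\tau S/2$ lies strictly below $\mu/2$. The event is then contained in $\{N_x<\mu/2\}$, so the bound still applies. This proves (i).

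For (ii), we take a union bound over the elements of $\mathcal H_\tau$. By Lemma~\ref{lem:HH-cardinality} there are at most $1/\tau$ of them. The failure event $\mathcal H_\tau\not\subseteq\widehat{\mathcal H}_{\tau/2}$ is the union over $x\in\mathcal H_\tau$ of the events $\{\widehat p_S(x)<\tau/2\}$, and so it has probability at most $\tau^{-1}\exp(-\tau S/8)$. For (iii), we repeat the counting argument of Lemma~\ref{lem:HH-cardinality} with the empirical distribution $\widehat p_S$ in place of $p$. Since $\widehat p_S$ is itself a probability distribution on $\Omega$ (frequencies sum to one), each retained key contributes at least $\tau/2$. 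Hence $|\widehat{\mathcal H}_{\tau/2}|\,\tau/2\le 1$ deterministically.

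For the final claim, we substitute $S\ge (8/\tau)\log(1/(\tau\delta))$ into (ii). This gives $\exp(-\tau S/8)\le \tau\delta$, so the failure probability is at most $\tau^{-1}\cdot\tau\delta=\delta$. The cardinality bound holds surely by (iii), so both conclusions hold jointly with probability at least $1-\delta$.

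The only step requiring a genuine tool is the Chernoff bound in (i); everything else is counting and a union bound. The main subtlety to state explicitly is which Chernoff form is used, with the constant $1/8$ coming from $\eta^2/2$ at $\eta=1/2$. One also has to note that $\mathcal H_\tau$ may be empty or small, which only helps the union bound. If one wanted to avoid multiplicative Chernoff, Hoeffding would give only $\exp(-\tau^2S/2)$, which is too weak. So the multiplicative form is the intended route.
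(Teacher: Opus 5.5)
Your proposal is correct and follows the paper's proof step for step. Part (i) uses the multiplicative Chernoff bound at $\eta=1/2$ on the binomial count $N_x$, and part (ii) takes a union bound over the at most $1/\tau$ elements of $\mathcal H_\tau$ from Lemma~\ref{lem:HH-cardinality}. Part (iii) repeats the counting argument for the empirical distribution, and the final claim follows by substituting the shot bound. You are slightly more explicit than the paper in stating the containment $\{N_x<\tau S/2\}\subseteq\{N_x<\mu/2\}$, which the paper leaves implicit.
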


\begin{proof}
For \(x\in\mathcal H_\tau\), the count
\(
N_x:=\sum_{s=1}^S \mathbf 1\{X_s=x\}
\)
is binomial with mean
\(
\mathbb E[N_x]=S\,p(x)\ge S\tau.
\)
By the multiplicative Chernoff bound,
\[
  \Pr\!\left[N_x<\frac{1}{2}\mathbb E[N_x]\right]
  \;\le\;
  \exp\!\left(-\frac{\mathbb E[N_x]}{8}\right)
  \;\le\;
  \exp\!\left(-\frac{\tau S}{8}\right),
\]
which proves (i). Since \(|\mathcal H_\tau|\le 1/\tau\) by Lemma~\ref{lem:HH-cardinality},
a union bound gives (ii).

For (iii), every \(x\in \widehat{\mathcal H}_{\tau/2}\) contributes at least
\(\tau/2\) empirical mass, so
\[
  1
  \;=\;
  \sum_{x\in\Omega} \widehat p_S(x)
  \;\ge\;
  \sum_{x\in\widehat{\mathcal H}_{\tau/2}} \widehat p_S(x)
  \;\ge\;
  |\widehat{\mathcal H}_{\tau/2}|\,\frac{\tau}{2}.
\]
Hence \(|\widehat{\mathcal H}_{\tau/2}|\le 2/\tau\).
The final statement follows by choosing \(S\) so that
\(
\tau^{-1}e^{-\tau S/8}\le \delta
\).
\end{proof}

\begin{remark}[Exact recovery requires a margin]
\label{rem:HH-margin}
Without an explicit separation gap around the threshold \(\tau\), one should not
claim exact recovery of \(\mathcal H_\tau\) from finitely many samples. What
Lemma~\ref{lem:HH-retention} guarantees is the more robust statement that all
true \(\tau\)-heavy strings are retained inside a slightly relaxed empirical set
of size \(O(1/\tau)\). Exact identification of the thresholded set requires an
additional margin assumption, for example
\[
  p(x)\ge \tau+\eta \quad \text{for } x\in\mathcal H_\tau,
  \qquad
  p(x)\le \tau-\eta \quad \text{for } x\notin\mathcal H_\tau,
\]
together with a Hoeffding/union-bound argument over \(\Omega\).
\end{remark}

\begin{corollary}[Classical heavy-hitter post-processing cost]
\label{cor:HH-runtime}
Suppose the deterministic checker/repair/scoring stage costs \(O(n^3)\) per
candidate. If the empirical candidate set is chosen as
\(
\widehat{\mathcal H}_{\tau/2}
\),
then with probability at least \(1-\delta\) all true \(\tau\)-heavy strings are
retained and the number of candidates is at most \(2/\tau\). Hence the total
post-processing cost is
\[
  O\!\left(\frac{n^3}{\tau}\right).
\]
In particular, for the choice
\[
  \tau = n^{-(k+1)},
\]
the retained set has size \(O(n^{k+1})\) and the post-processing cost is
\[
  O(n^{k+4}).
\]
\end{corollary}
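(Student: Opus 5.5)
The plan is to assemble Corollary~\ref{cor:HH-runtime} directly from Lemma~\ref{lem:HH-retention} and Lemma~\ref{lem:HH-cardinality}, paying only the stated per-candidate cost of the checker/repair/scoring stage. First I will fix the shot budget at
\(
S \ge \frac{8}{\tau}\log\frac{1}{\tau\delta}
\),
which is the hypothesis of the final clause of Lemma~\ref{lem:HH-retention}. That clause then gives, on a single event of probability at least \(1-\delta\), both \(\mathcal H_\tau\subseteq\widehat{\mathcal H}_{\tau/2}\) and \(|\widehat{\mathcal H}_{\tau/2}|\le 2/\tau\). These are exactly the retention and size statements in the corollary. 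The cardinality bound itself is deterministic by part (iii), so it holds in every realization and not only on the good event. This matters for the runtime claim: the runtime bound then holds unconditionally, and only the retention claim needs probability \(1-\delta\).

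Second, I will bound the cost. Building the empirical histogram and extracting \(\widehat{\mathcal H}_{\tau/2}\) is a one-pass counting step over the \(S\) samples. I will treat it as preprocessing, separate from the checker/repair/scoring stage that the corollary measures, since the corollary only counts the cost of processing candidates. Each retained candidate receives one feasibility check (\(O(n^2)\)), at most one Hungarian repair (\(O(n^3)\), Lemma~\ref{lem:hamming-matching}), and one exact scoring (\(O(n^2)\)), for \(O(n^3)\) per candidate. Keeping a running minimum of the scores adds only \(O(1)\) per candidate. Multiplying \(O(n^3)\) by at most \(2/\tau\) candidates gives \(O(n^3/\tau)\).

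Third, I will substitute \(\tau=n^{-(k+1)}\). The candidate set then has size at most \(2n^{k+1}=O(n^{k+1})\), and the total cost is \(O(n^{3}\cdot n^{k+1})=O(n^{k+4})\).

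The only delicate point is bookkeeping rather than mathematics. The corollary's \(1-\delta\) clause needs the shot condition of Lemma~\ref{lem:HH-retention}, so I will state that condition explicitly as the operating assumption. I will also note that with \(\tau=n^{-(k+1)}\) this requires \(S=O(n^{k+1}\log(n/\delta))\), which is polynomial. Consistent with Theorem~\ref{thm:noise_robust}, this is the same order as the noisy exact-hit budget up to the logarithmic factor.
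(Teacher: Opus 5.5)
Your proposal is correct and is essentially the paper's argument. The paper gives the corollary no separate proof: it reads it off the final clause of Lemma~\ref{lem:HH-retention}, multiplies the candidate count by the $O(n^3)$ per-candidate cost, and specializes to $\tau=n^{-(k+1)}$. In that lemma, retention holds with probability at least $1-\delta$ once $S\ge \frac{8}{\tau}\log\frac{1}{\tau\delta}$, and $|\widehat{\mathcal H}_{\tau/2}|\le 2/\tau$ holds in every realization. Stating the shot condition explicitly, which the corollary leaves implicit, and separating the one-pass histogram construction from the per-candidate cost are appropriate clarifications rather than a different route.
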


\begin{proposition}[Worst-case sharpness of the cardinality bound]
\label{prop:HH-sharp}
The bound \(H_\tau\le 1/\tau\) is worst-case sharp up to rounding. More precisely,
for every \(\tau\in(0,1]\), there exists a distribution \(p\) such that
\[
  H_\tau = \left\lfloor \frac{1}{\tau}\right\rfloor .
\]
\end{proposition}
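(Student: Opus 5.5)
The plan is to give an explicit construction. Fix \(\tau\in(0,1]\) and set \(N:=\lfloor 1/\tau\rfloor\ge 1\). We need the support \(\Omega\) to have at least \(N\) elements, which we assume (or we enlarge \(\Omega\) by padding with zero-mass points, since its size \(D\) is arbitrary in the setup). Choose \(N\) distinct points \(x_1,\dots,x_N\in\Omega\), and define \(p(x_i)=1/N\) for \(i=1,\dots,N\) and \(p(x)=0\) elsewhere. This is a probability distribution, since the total mass is \(N\cdot(1/N)=1\).

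Next we check that every \(x_i\) is heavy. Since \(N\le 1/\tau\), we have \(1/N\ge\tau\), so each \(x_i\in\mathcal H_\tau\). Every other point has mass \(0<\tau\), so it is excluded. Hence \(H_\tau=N=\lfloor 1/\tau\rfloor\).

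Finally we compare this with Lemma~\ref{lem:HH-cardinality}. That lemma gives \(H_\tau\le 1/\tau\), and since \(H_\tau\) is an integer this sharpens to \(H_\tau\le\lfloor 1/\tau\rfloor\). So the construction attains the bound exactly, which is the meaning of ``sharp up to rounding.''

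There is no real obstacle here. The only point that needs care is the support-size caveat \(D\ge\lfloor 1/\tau\rfloor\). If we want the example inside the paper's setting with \(\tau=n^{-(k+1)}\) on feasible permutation sets of size \(n!\), this holds for all sufficiently large \(n\), and I would note that explicitly.
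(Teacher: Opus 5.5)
Your proposal is correct and uses the same construction as the paper: the uniform distribution on \(\lfloor 1/\tau\rfloor\) points, each of mass \(1/\lfloor 1/\tau\rfloor\ge\tau\). Your remarks that \(\Omega\) must contain at least \(\lfloor 1/\tau\rfloor\) points, and that integrality sharpens Lemma~\ref{lem:HH-cardinality} to \(H_\tau\le\lfloor 1/\tau\rfloor\), make explicit what the paper's proof leaves implicit.
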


\begin{proof}
Let \(N=\lfloor 1/\tau\rfloor\) and define \(p\) to be uniform on \(N\) points:
\[
  p(x)=\frac{1}{N}
  \qquad \text{for } x\in\{x_1,\dots,x_N\}.
\]
Since \(N\le 1/\tau\), \(1/N\ge \tau\), so all \(N\) support points are
\(\tau\)-heavy. Thus \(H_\tau=N=\lfloor 1/\tau\rfloor\).
\end{proof}

%------------------------------------------------------------
\subsection{Quantum Heavy-Hitter Refinements}
\label{ssec:q-hh-theory}
%------------------------------------------------------------

The classical heavy-hitter bound of Sec.~\ref{ssec:classical-HH} applies to
\emph{any} output distribution and therefore ignores the additional structure
present in CE--QAOA.  In the present setting, a quantum-informed threshold  uses the inverse-polynomial
      optimum weight proved for the CE kernel together with the TV-noise bound
      to raise the heavy-hitter cutoff from the generic scale
      \(n^{-(k+1)}\) to the sharper scale \(n^{-k}\). So the main refinement comes when one connects the lower bound on the ideal optimum weight and a total-variation stability bound under noise. These two ingredients
allow one to choose a \emph{larger} heavy-hitter threshold than in the generic classical analysis. For this refinement, we impose the stronger condition that at least one optimal string carries inverse-polynomial
probability mass and state the result as follows.

\begin{theorem}[Quantum-informed noisy heavy hitters]
\label{thm:quantum-informed-HH}
Let \(x^\star\) denote the optimal feasible bit-string. Assume the ideal CE--QAOA
output distribution \(p_0\) satisfies
\[
  p_0(x^\star)\;\ge\; c\,n^{-k}
\]
for fixed constants \(c>0\) and \(k\ge 0\), and let \(p_\epsilon\) be the noisy
output distribution. Suppose further that
\[
  \|p_\epsilon - p_0\|_{\mathrm{TV}}
  \;\le\;
  \Delta_n,
  \qquad
  \Delta_n \le \frac{c}{4}\,n^{-k}.
\]
Define the quantum-informed threshold
\(
  \tau_q
  \;:=\;
  \frac{c}{4}\,n^{-k}.
\)
Then:
\begin{enumerate}[label=\textup{(\roman*)},leftmargin=*]
\item the optimal string remains heavy under noise,
\(
  p_\epsilon(x^\star)\;\ge\;\tau_q;
\)
\item the noisy heavy-hitter set at threshold \(\tau_q\),
\(
  \mathcal H_{\tau_q}(p_\epsilon)
  \;:=\;
  \{x: p_\epsilon(x)\ge \tau_q\},
\)
has cardinality
\(
  |\mathcal H_{\tau_q}(p_\epsilon)|
  \;\le\;
  \frac{1}{\tau_q}
  \;=\;
  \frac{4}{c}\,n^k.
\)
\end{enumerate}
\end{theorem}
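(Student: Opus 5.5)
The plan is to prove (i) by a pointwise consequence of the total-variation bound, and (ii) by applying the classical cardinality bound (Lemma~\ref{lem:HH-cardinality}) to the noisy distribution at the threshold \(\tau_q\).

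For (i), I will use the fact that total variation dominates the deviation on every event. Taking the singleton event \(A=\{x^\star\}\) gives
\[
\bigl|p_\epsilon(x^\star)-p_0(x^\star)\bigr|\le \|p_\epsilon-p_0\|_{\mathrm{TV}}\le\Delta_n .
\]
This is the same step that produced Eq.~\eqref{eq:noisy-optimal-mass}, here applied to a single string rather than to \(\mathcal X^\star\). Combining it with the hypotheses \(p_0(x^\star)\ge c\,n^{-k}\) and \(\Delta_n\le \tfrac c4 n^{-k}\) yields
\[
p_\epsilon(x^\star)\ge c\,n^{-k}-\tfrac c4 n^{-k}=\tfrac{3c}{4}n^{-k}\ge \tau_q .
\]
So the optimum stays heavy, with some slack to spare.

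The one convention to check is that \(\|\cdot\|_{\mathrm{TV}}\) means \(\sup_A|P(A)-Q(A)\|\) and not the unnormalized \(\ell_1\) distance. This is the convention used in Lemma~\ref{lem:tv-contraction} and in Eq.~\eqref{eq:effective-depth-tv-bound}. Even under the \(\ell_1\) convention, the pointwise deviation is still at most \(\Delta_n\), so (i) does not depend on which convention is meant.

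For (ii), I apply Lemma~\ref{lem:HH-cardinality} with \(p=p_\epsilon\) and \(\tau=\tau_q\). That lemma only needs \(p_\epsilon\) to be a probability distribution: every element of \(\mathcal H_{\tau_q}(p_\epsilon)\) carries mass at least \(\tau_q\), and the masses sum to at most \(1\). This gives \(|\mathcal H_{\tau_q}(p_\epsilon)|\le 1/\tau_q = \tfrac4c n^k\).

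One subtlety is that the lemma was stated for distributions on a feasible set \(\Omega\), while the noisy distribution lives on all of \(\{0,1\}^{n^2}\). The counting argument never uses the support, so I will note that it holds verbatim on the full bitstring space.

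There is no substantive obstacle here. The only care needed is keeping the TV convention consistent and noting that (i) places \(x^\star\in\mathcal H_{\tau_q}(p_\epsilon)\), so the \(O(n^k)\) candidate set contains the optimum.
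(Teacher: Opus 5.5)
Your proposal is correct and matches the paper's proof: the singleton-event total-variation bound gives \(p_\epsilon(x^\star)\ge \tfrac{3c}{4}n^{-k}\ge\tau_q\), and part (ii) is Lemma~\ref{lem:HH-cardinality} applied to \(p_\epsilon\) at \(\tau=\tau_q\). Your remark that the cardinality lemma holds verbatim on the full bitstring space, and not only on the feasible set \(\Omega\), is a small precision that the paper leaves implicit.
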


\begin{proof}
For the singleton event \(\{x^\star\}\), total variation distance gives
\[
  |p_\epsilon(x^\star)-p_0(x^\star)|
  \;\le\;
  \|p_\epsilon-p_0\|_{\mathrm{TV}}
  \;\le\;
  \Delta_n.
\]
Hence
\[
  p_\epsilon(x^\star)
  \;\ge\;
  p_0(x^\star)-\Delta_n
  \;\ge\;
  c\,n^{-k} - \frac{c}{4}\,n^{-k}
  \;=\;
  \frac{3c}{4}\,n^{-k}
  \;\ge\;
  \tau_q,
\]
which proves (i). Part (ii) is Lemma~\ref{lem:HH-cardinality} applied with
\(\tau=\tau_q\).
\end{proof}

\begin{corollary}[Empirical candidate list at quantum-informed threshold]
\label{cor:quantum-informed-empirical}
Under the hypotheses of Theorem~\ref{thm:quantum-informed-HH}, define the
empirical candidate set
\[
  \widehat{\mathcal H}_{\tau_q/2}
  \;:=\;
  \{x : \widehat p_S(x)\ge \tau_q/2\}.
\]
If
\(
  S
  \;\ge\;
  \frac{8}{\tau_q}\log\!\frac{1}{\tau_q\delta}
  \;=\;
  O\!\bigl(n^k \log(n/\delta)\bigr),
\)
then with probability at least \(1-\delta\),
\[
  x^\star \in \widehat{\mathcal H}_{\tau_q/2}
  \qquad\text{and}\qquad
  |\widehat{\mathcal H}_{\tau_q/2}|
  \;\le\;
  \frac{2}{\tau_q}
  \;=\;
  \frac{8}{c}\,n^k.
\]
\end{corollary}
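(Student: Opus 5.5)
The argument combines Theorem~\ref{thm:quantum-informed-HH}(i) with Lemma~\ref{lem:HH-retention} at threshold $\tau=\tau_q$, with samples drawn i.i.d.\ from the noisy distribution $p_\epsilon$. By Theorem~\ref{thm:quantum-informed-HH}(i), $p_\epsilon(x^\star)\ge \tfrac34 c\,n^{-k}\ge\tau_q$. So $x^\star\in\mathcal H_{\tau_q}(p_\epsilon)$, and it suffices to show that the empirical set $\widehat{\mathcal H}_{\tau_q/2}$ retains every true $\tau_q$-heavy string of $p_\epsilon$. The cardinality claim is deterministic: by Lemma~\ref{lem:HH-retention}(iii), in every sample realization $|\widehat{\mathcal H}_{\tau_q/2}|\le 2/\tau_q=\tfrac{8}{c}n^k$. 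Part (iii) only uses that the empirical frequencies sum to one, so it holds for any distribution, including the noisy one on the full bitstring space rather than on $\Omega$.

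For retention, I will apply Lemma~\ref{lem:HH-retention}(i)--(ii) to $p=p_\epsilon$ with $\tau=\tau_q$. This gives
\[
\Pr\bigl[\mathcal H_{\tau_q}(p_\epsilon)\not\subseteq\widehat{\mathcal H}_{\tau_q/2}\bigr]\le \tau_q^{-1}e^{-\tau_q S/8}.
\]
The hypothesis $S\ge \frac{8}{\tau_q}\log\frac{1}{\tau_q\delta}$ makes the right-hand side at most $\delta$. On the complementary event, $x^\star\in\mathcal H_{\tau_q}(p_\epsilon)\subseteq \widehat{\mathcal H}_{\tau_q/2}$. Alternatively, one can bypass the union bound and apply only part (i) to the single string $x^\star$, which needs just $e^{-\tau_q S/8}\le\delta$, a weaker requirement. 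I would mention this but keep the stated budget, since it also certifies the full heavy set.

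The asymptotic form follows by substituting $\tau_q=\tfrac c4 n^{-k}$:
\[
\frac{8}{\tau_q}\log\frac{1}{\tau_q\delta}=\frac{32}{c}n^k\Bigl(k\log n+\log\tfrac4c+\log\tfrac1\delta\Bigr)=O\bigl(n^k\log(n/\delta)\bigr),
\]
using that $k$ and $c$ are constants.

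The only point needing care, which I expect to be the main obstacle though a mild one, is that Lemma~\ref{lem:HH-retention} was stated for distributions supported on a feasible set $\Omega$, whereas noisy samples range over all bitstrings. I would observe that its proof uses nothing about $\Omega$ beyond finiteness and normalization, so it transfers verbatim to $p_\epsilon$ on $\{0,1\}^{n^2}$. The Chernoff step uses only $\mathbb E[N_{x^\star}]=S\,p_\epsilon(x^\star)\ge S\tau_q$, which holds by Theorem~\ref{thm:quantum-informed-HH}(i).
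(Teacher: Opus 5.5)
Your proposal is correct and takes the same approach as the paper's proof: apply Lemma~\ref{lem:HH-retention} with $\tau=\tau_q$ to the noisy distribution $p_\epsilon$, and use Theorem~\ref{thm:quantum-informed-HH}(i) to place $x^\star$ in $\mathcal H_{\tau_q}(p_\epsilon)$. Your extra remarks are accurate elaborations of that same argument: the cardinality bound holds deterministically, the lemma transfers from $\Omega$ to the full bitstring space, and a single-string Chernoff bound would already suffice for $x^\star$ alone.
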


\begin{proof}
Apply Lemma~\ref{lem:HH-retention} from the classical heavy-hitter subsection
with \(\tau=\tau_q\), together with Theorem~\ref{thm:quantum-informed-HH},
which ensures that \(x^\star\in \mathcal H_{\tau_q}(p_\epsilon)\).
\end{proof}

The generic classical heavy-hitter analysis used a threshold of order
\(n^{-(k+1)}\), leading to a retained set of size \(O(n^{k+1})\).
Theorem~\ref{thm:quantum-informed-HH} uses the kernel guarantee to obtain the sharper threshold \(\tau_q=\Theta(n^{-k})\). This immediately improves the
retained-set size by one factor of \(n\), from \(O(n^{k+1})\) to \(O(n^k)\),
while preserving the optimum with high probability after polynomially many shots.

\subsection{Heavy-Hitter Extraction and Runtime Gains}
\label{ssec:hh-algorithms}
\label{ssec:runtime-gains}

For the empirical evaluation, we supplement the theorem-level
heavy-hitter threshold with two frequency-ranked compression steps.
After forming the global empirical heavy-hitter set
\(\mathcal C_{\mathrm{HH}}\), we apply an empirical
\emph{block filter} by retaining its \(n^k\) most frequent elements:
\[
\mathcal C_{\mathrm{blk}}
:=
\operatorname{Top}_{n^k}(\mathcal C_{\mathrm{HH}}).
\]
We may then retain only the \(L\) most frequent elements of the
block-filtered set:
\[
\mathcal C_L
:=
\operatorname{Top}_L(\mathcal C_{\mathrm{blk}}).
\]

If deterministic checking, repair, and
scoring cost \(O(n^3)\) per retained candidate, then
\[
T_{\mathrm{post}}
=
O\!\bigl(|\mathcal C|n^3\bigr),
\]
where \(\mathcal C\) is the retained candidate set.  The first pipeline
uses only the generic classical heavy-hitter bound of
Sec.~\ref{ssec:classical-HH}. The second uses the sharper
\emph{quantum-informed} threshold from
Theorem~\ref{thm:quantum-informed-HH}.

\subsubsection{Empirical Footprints of Heavy-Hitter Filtering}
\label{sec:hh-robustness}

We now examine the empirical footprint of the three post-processing stages
introduced in Sec.~\ref{ssec:hh-algorithms}: the \emph{classical heavy-hitter filter}, the \emph{block filtering stage}, and the optional \emph{top-\(L\)} empirical compression. The purpose of this section is to quantify the practical reduction in
candidate-set size observed on representative instances. For each run, we
report the retained candidate count after the classical thresholding stage,
the count after block-prefiltering, and the count after the optional top-\(L\)
compression step.  We also record the percentage cost gap relative to the best
solution obtained from the full retained histogram.

\begin{table}[ht]
\caption{Empirical candidate-set footprints and percentage cost gaps for \(k=3\) with top-\(L\) window \(L=25k=75\). Here \(|\mathcal C_{\mathrm{HH}}|\) is the retained set after classical thresholding, \(|\mathcal C_{\mathrm{blk}}|\) is the retained set after block-prefiltering, and \(|\mathcal C_{L}|\) is the final top-\(L\) compressed set.}
\label{tab:hh-gaps-k3-c25}
\centering
\begin{tabular}{lrrrrrrr}
\toprule
Instance & Shots & \(|\mathcal C_{\mathrm{HH}}|\) & \(|\mathcal C_{\mathrm{blk}}|\) & \(|\mathcal C_{L}|\)
& Gap HH [\%] & Gap blk [\%] & Gap top-\(L\) [\%] \\
\midrule
eil10  & 5000   & 5000  & 5000 & 75 & 0.00 & 0.00 & 0.00 \\
dj8    & 10000  & 10000 & 512  & 75 & 0.00 & 0.07 & 0.07 \\
dj9    & 10000  & 10000 & 729  & 75 & 0.00 & 0.19 & 0.19 \\
wi4    & 10000  & 128   & 64   & 64 & 0.00 & 0.00 & 0.00 \\
wi4    & 100000 & 234   & 64   & 64 & 0.00 & 0.00 & 0.00 \\
wi5    & 10000  & 6     & 6    & 6  & 0.52 & 0.52 & 0.52 \\
wi5    & 100000 & 2449  & 125  & 75 & 0.00 & 0.00 & 0.00 \\
wi6    & 10000  & 10000 & 216  & 75 & 0.00 & 0.00 & 0.00 \\
wi7    & 10000  & 10000 & 343  & 75 & 0.00 & 0.04 & 0.04 \\
\bottomrule
\end{tabular}
\end{table}

\begin{table}[ht]
\caption{Empirical candidate-set footprints and \% cost gaps for \(k=4\) with top-\(L\) window \(L=25k=100\).}
\label{tab:hh-gaps-k4-c25}
\centering
\begin{tabular}{lrrrrrrr}
\toprule
Instance & Shots & \(|\mathcal C_{\mathrm{HH}}|\) & \(|\mathcal C_{\mathrm{blk}}|\) & \(|\mathcal C_{L}|\)
& Gap HH [\%] & Gap blk [\%] & Gap top-\(L\) [\%] \\
\midrule
eil10  & 5000   & 5000  & 5000 & 100 & 0.00 & 0.00 & 0.00 \\
dj8    & 10000  & 10000 & 4096 & 100 & 0.00 & 0.00 & 0.07 \\
dj9    & 10000  & 10000 & 6561 & 100 & 0.00 & 0.00 & 0.19 \\
wi4    & 10000  & 1398  & 256  & 100 & 0.00 & 0.00 & 0.00 \\
wi4    & 100000 & 253   & 253  & 100 & 0.00 & 0.00 & 0.00 \\
wi5    & 10000  & 9994  & 625  & 100 & 0.00 & 0.00 & 0.00 \\
wi5    & 100000 & 3113  & 625  & 100 & 0.00 & 0.00 & 0.00 \\
wi6    & 10000  & 10000 & 1296 & 100 & 0.00 & 0.00 & 0.00 \\
wi7    & 10000  & 10000 & 2401 & 100 & 0.00 & 0.00 & 0.04 \\
\bottomrule
\end{tabular}
\end{table}

Tables~\ref{tab:hh-gaps-k3-c25} and~\ref{tab:hh-gaps-k4-c25} show that the
block-prefiltering step can reduce the retained set substantially relative to
the purely classical thresholding stage, especially on the larger benchmark
instances. The optional top-\(L\) compression produces a further reduction in
footprint, often with negligible change in the best recovered cost. This should
be interpreted as an empirical acceleration rather than a formal guarantee:
its effectiveness depends on how sharply peaked the filtered histogram is.

To probe robustness more directly, we ran multiple trials on the
\texttt{wi4}--\texttt{wi7} instances with varying shot budgets. In each trial we
recorded the retained set size after classical thresholding, after
block-prefiltering, and after optional top-\(L\) compression, together with the
cost gap relative to the best solution obtained from the full retained set.

The main empirical lesson is that the block-prefiltering stage is the most reliable source of compression. It
consistently reduces the candidate-set size while preserving the best recovered
cost in all reported runs. The additional top-\(L\) compression is much more
aggressive and usually remains harmless, but it is best viewed as a practical
accelerator rather than part of the formal guarantee. The runtime gains are plotted in Fig~\ref{fig:empirical-runtimes}.  

\begin{figure}[htbp]
  \centering
  \begin{subfigure}[t]{0.32\textwidth}
    \centering
    \includegraphics[width=\linewidth]{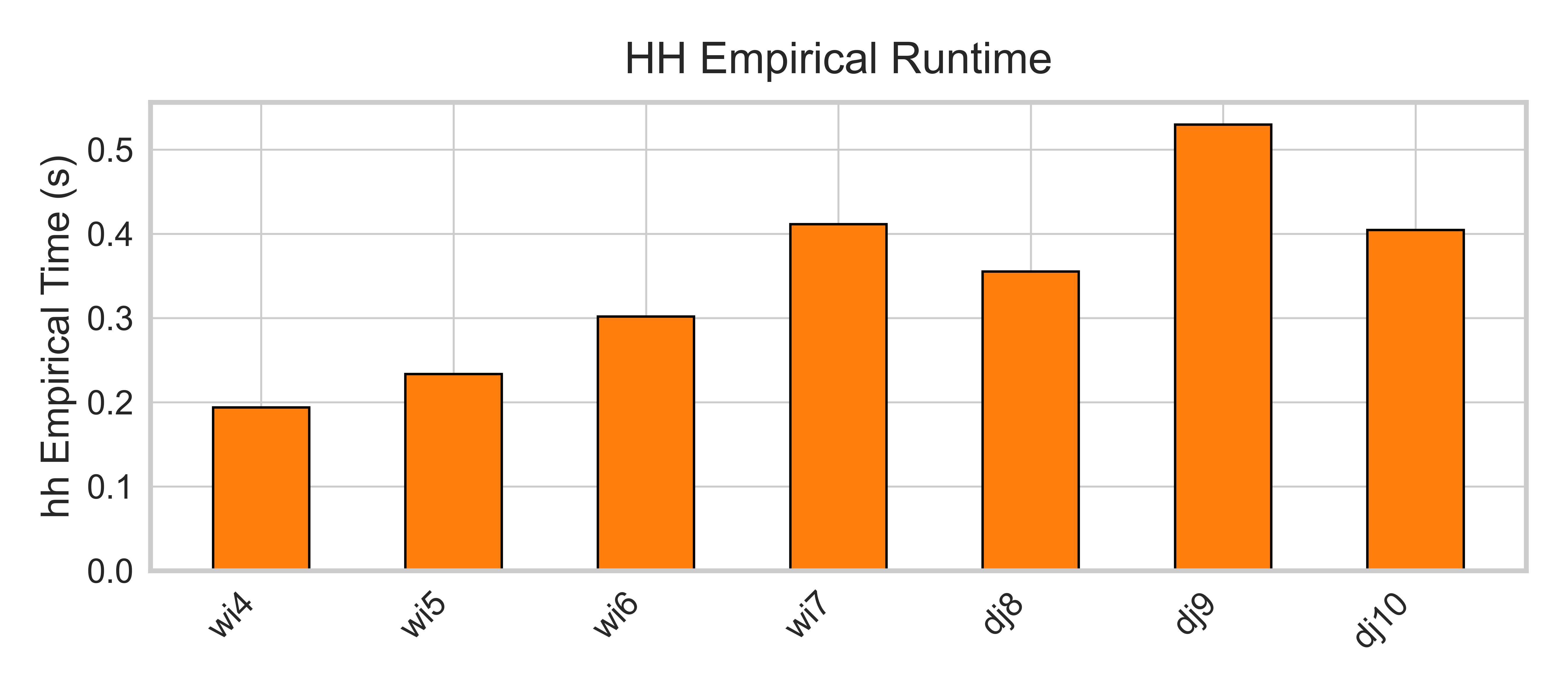}
    \caption{Empirical runtime of the classical heavy-hitter filter.}
    \label{fig:hh-runtime}
  \end{subfigure}
  \hfill
  \begin{subfigure}[t]{0.32\textwidth}
    \centering
    \includegraphics[width=\linewidth]{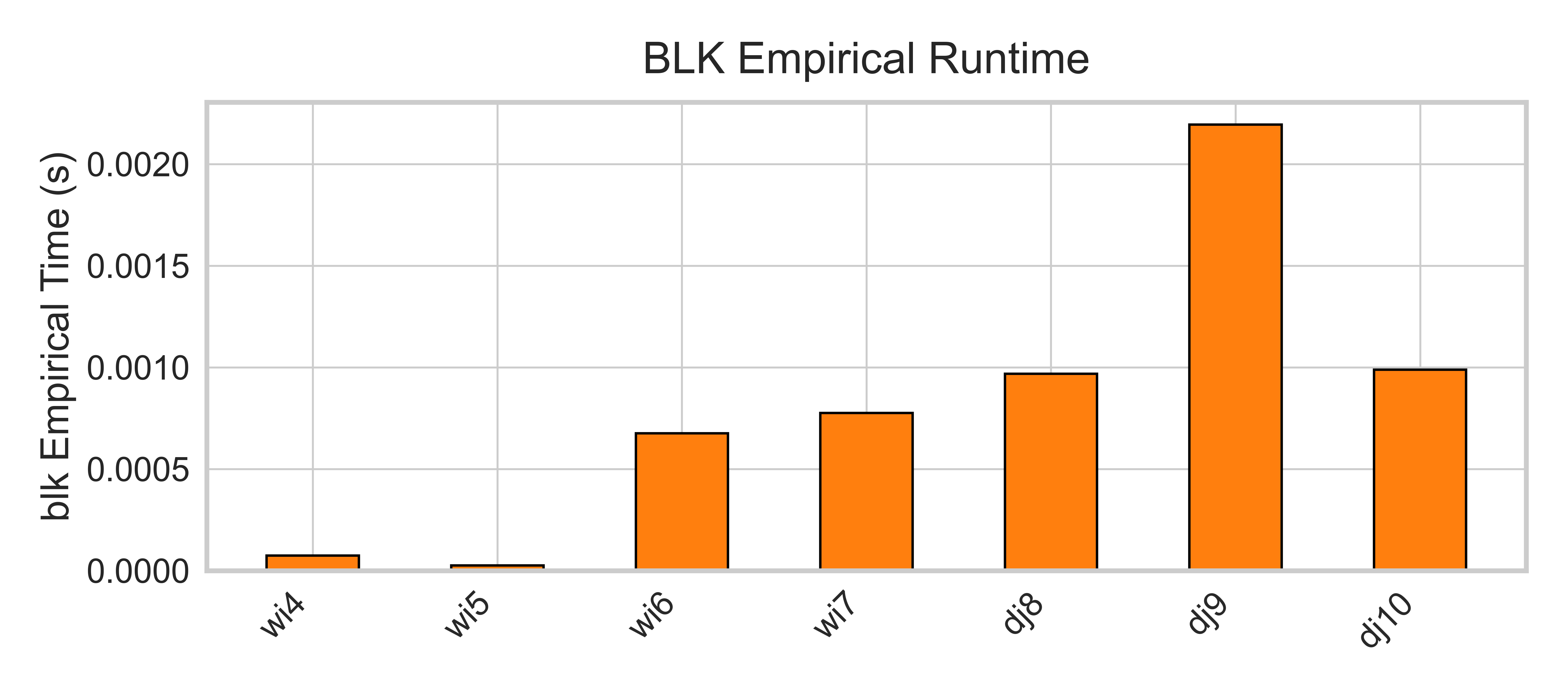}
    \caption{Empirical runtime after block-prefiltering.}
    \label{fig:blk-runtime}
  \end{subfigure}
  \hfill
  \begin{subfigure}[t]{0.32\textwidth}
    \centering
    \includegraphics[width=\linewidth]{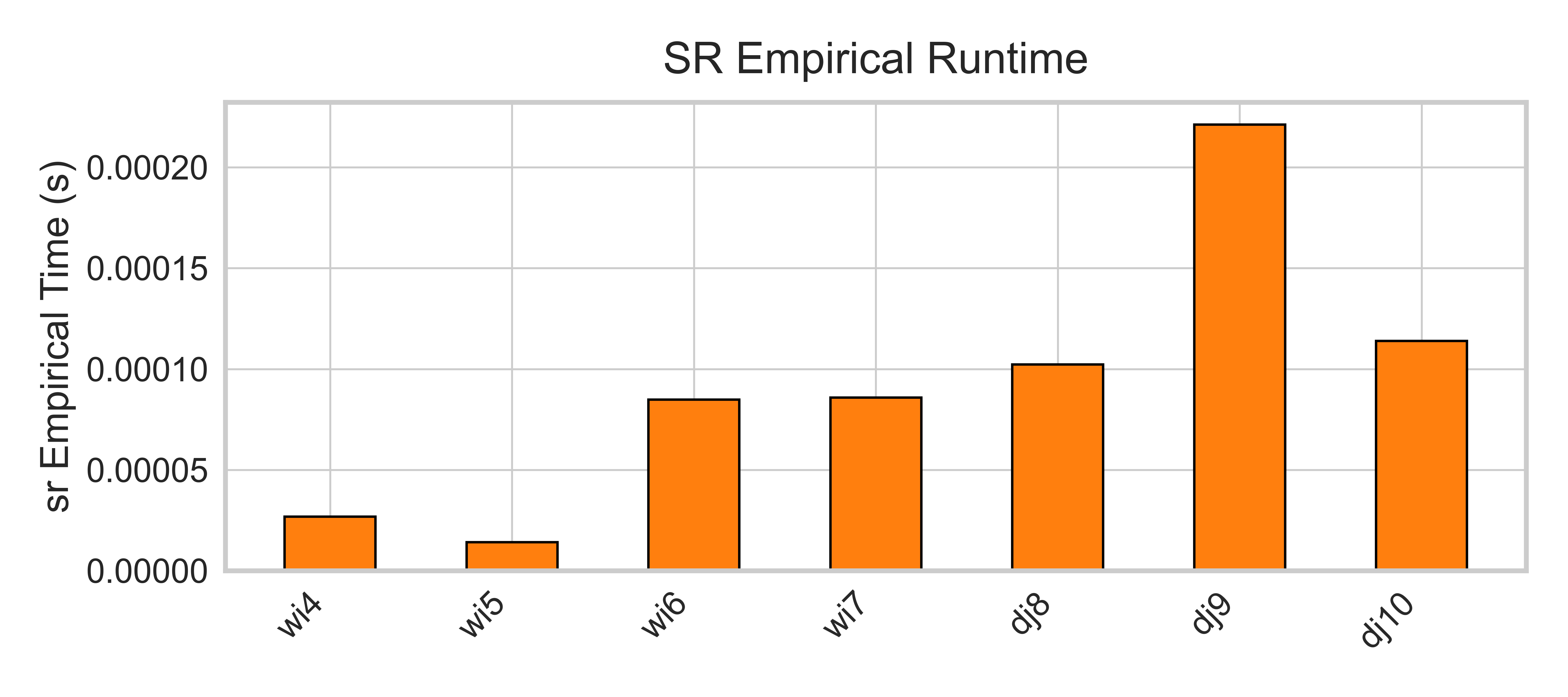}
    \caption{Empirical runtime after optional top-\(L\) compression.}
    \label{fig:topL-runtime}
  \end{subfigure}
  \caption{Empirical CPU runtimes for the three post-processing stages.
    The classical heavy-hitter stage has the largest footprint, the
    block-prefiltering stage is substantially cheaper, and the optional
    top-\(L\) compression is cheapest of all.}
  \label{fig:empirical-runtimes}
\end{figure}

% \begin{figure}[htbp]
%   \centering
%   \begin{subfigure}[t]{0.32\textwidth}
%     \centering
%     \includegraphics[width=\linewidth]{figs/hh_theory_proxy.png}
%     \caption{Proxy \( |\mathcal C_{\mathrm{HH}}|\,n^2 \) for classical heavy-hitter filtering.}
%     \label{fig:hh-theory}
%   \end{subfigure}
%   \hfill
%   \begin{subfigure}[t]{0.32\textwidth}
%     \centering
%     \includegraphics[width=\linewidth]{figs/blk_theory_proxy.png}
%     \caption{Proxy \( |\mathcal C_{\mathrm{blk}}|\,n^2 \) after block-prefiltering.}
%     \label{fig:blk-theory}
%   \end{subfigure}
%   \hfill
%   \begin{subfigure}[t]{0.32\textwidth}
%     \centering
%     \includegraphics[width=\linewidth]{figs/sr_theory_proxy.png}
%     \caption{Proxy \( |\mathcal C_{L}|\,n^2 \) after top-\(L\) compression.}
%     \label{fig:topL-theory}
%   \end{subfigure}
%   \caption{Simple operation-count proxies for the three post-processing
%     stages to visualize the candidate-set    reductions observed empirically.  The main theoretical comparison remains
%     the reduction from \(O(n^{k+3})\) classical post-processing under generic
%     thresholding to \(O(n^{k+2})\) under the quantum-informed threshold, while
%     the further top-\(L\) reduction should be interpreted as heuristic.}
%   \label{fig:theory-proxies}
% \end{figure}

\medskip

The empirical results illustrate that although the generic classical heavy-hitter filter already reduces the raw histogram, the quantum-informed threshold sharpens this reduction by one power of
\(n\), lowering the worst-case post-processing cost.  In addition, the block-prefiltering step
often yields a much stronger reduction in practice than is visible from the
worst-case bound alone. The optional top-\(L\) compression can shrink the
candidate set further, but it remains a heuristic empirical
accelerator. Crucially, we observe that the empirical quantum-informed refinement inherits the same
device-dependent noise sensitivity as the optimal-mass guarantee from
which its threshold is derived. We document this in Appendix~\ref{app:hh-noise-sensitivity}.
% Although the theorem-level guarantees in this section are stated for explicit histogram-based thresholding, in practice, one may implement the counting stage
% in several ways: with a direct hash table, with a streaming dictionary over
% observed outcomes, or with approximate counting data structures when memory is
% limited. These engineering choices can improve memory usage and wall-clock
% performance, but they are not needed for the formal heavy-hitter guarantees and
% are therefore kept separate from the main theoretical statements. Classical heavy-hitter thresholding yields a retained set of size
% \(O(n^{k+1})\), while the quantum-informed threshold reduces this to
% \(O(n^k)\). The block-prefiltering step often lowers the empirical footprint
% further. Thus the main rigorous gain of the heavy-hitter framework is already
% substantial: the CE--QAOA structure lowers the classical post-processing burden
% by one power of \(n\), and practical filtering can reduce it even more on
% structured instances.

\section{Conclusion and Outlook}
\label{sec:conclusion}

We developed a complete hybrid optimization layer around the CE--QAOA
sampler, built from structured OFM-kernel elements, and argued that
constrained quantum optimization should be viewed as an end-to-end
quantum--classical pipeline in which problem structure is exploited
throughout both the quantum and classical stages. We subsequently
introduced the NP-HQ algorithm as a concrete realization of this
paradigm. On the quantum side, the CE--QAOA sampler concentrates
probability on low-energy computational-basis states of a diagonal
problem Hamiltonian within an encoded constraint-preserving sector. It
uses ancilla-free one-hot initialization together with a two-local
block-\(XY\) mixer. On the classical side, deterministic checking,
repair, and scoring map noisy measurement outcomes to feasible
candidate solutions and evaluate them in polynomial time.

The performance guarantees we derived rest on the level of bias
achievable by the quantum sampler. In the instance-dependent regime
where the ideal CE--QAOA distribution assigns inverse-polynomial mass
to the optimal feasible set, we showed that the complete pipeline is an
exact-hit FPRAS\textsubscript{$q$}. Our total-variation analysis yields
a non-asymptotic guarantee and shows that this
FPRAS\textsubscript{$q$} guarantee persists within a device-noise window
determined by the ideal optimal mass and the instance-dependent
effective circuit depth. Concretely, we showed that the recovered
optimal mass remains inverse-polynomial, thereby yielding a polynomial
shot budget and polynomial classical overhead. Outside the exact-hit regime (whether because the ideal sampler does
not assign inverse-polynomial mass to the optimal feasible set or
because that mass is no longer guaranteed after noise and repair), our
algorithm continues to guarantee feasibility and yields an
instance-dependent \((1+\varepsilon)\)-approximation scheme whenever the
objective inflation induced by repair is controlled.

The complexity analysis also identifies a conditional
quantum--classical separation. Even if a classical baseline is granted
the same polynomial-time classical projection and scoring routine as
NP-HQ, and may sample directly from row-one-hot or feasible spaces,
inverse-polynomial overlap with the unknown global optimum uniformly
over an NP-hard kernel-admissible promise family would yield a
\(\mathrm{BPP}\) algorithm by standard amplification. Thus, unless
\(\mathrm{NP}\subseteq\mathrm{BPP}\), such a classical baseline cannot
uniformly reproduce the inverse-polynomial optimal-overlap guarantee
established here. The distinction, we argue, cannot be attributed merely to
constraint handling or repair; it lies in the structured sampling
distribution generated by the quantum kernel.

The heavy-hitter analysis sharpens this picture further. Once the
quantum kernel places sufficient weight on a small set of important
candidates, the classical post-processor need not inspect the full raw
output histogram. Thresholding the empirical histogram and checking
only the retained candidates can dramatically reduce memory and
runtime while preserving the exact-hit or approximation guarantees
available in the relevant regime. As documented in Appendix~\ref{app:hh-noise-sensitivity}, the empirical quantum-informed refinement inherits this device dependence. Nevertheless, we identify heavy-hitter
filtering as a natural interface between constrained quantum sampling
and scalable classical post-processing.

Finally, the hardware experiments instantiate the complete pipeline on
127-qubit IBM Eagle-r3 processors using \textsc{QOptlib}
travelling-salesman instances with up to 100 logical variables. Across
all tested instances, the final outputs match or improve the published
reference tours, providing a current-hardware demonstration of the
checker--repair--scoring architecture analyzed in this work.

% The framework also opens several directions for further development.
% The same checker--repair--scoring architecture can be specialized to
% other constrained encodings whenever the feasible structure admits an
% efficient deterministic projection. Further gains may come from
% device-aware compilation, tighter instance-specific repair bounds, and
% additional geometric information that permits more aggressive
% heavy-hitter compression. 

\appendix

%============================================================
%  MAIN-TEXT SUBSECTION
%============================================================

%============================================================
%============================================================
%  APPENDIX
%============================================================
%============================================================

\section{Mechanism behind the inherited Fej\'er success bound}
\label{app:fejer-mechanism}

\subsection{Effective Depth of Constraint–Enhanced QAOA Circuits}
\label{sec:ce-qaoa}
The Constraint--Enhanced QAOA (CE--QAOA) introduced in
Ref.~\cite{onahce} makes problem--algorithm co-design explicit through
the so-called OFM kernel, a designed product of blocks of one-hot or \(k\)-hot manifolds, and fixes the mixer family and initial state to match the
encoding. It further expands the co-design choice to include symmetries
derived from the hard cardinality constraints satisfied by valid
solutions. These facts lead to simple gate vs depth accounting.

% For an instance in the kernel with $m$ blocks of local dimension $n$, the initial state is
% \[
%   \ket{s_0}\;=\;\ket{s_{\mathrm{blk}}}^{\otimes m},
% \]
% and the mixer unitary is
% \[
%   U_M(\beta)\;=\;\bigotimes_{b=1}^{m}\exp\!\bigl(-i\beta\,\widetilde H_{XY}^{(b)}\bigr).
% \]
% A depth-$p$ CE--QAOA stack is
% \[
%   \ket{\psi_p(\vec\gamma,\vec\beta)}
%   \;=\;
%   \Bigl(\prod_{\ell=1}^{p} U_M(\beta_\ell)\,e^{-i\gamma_\ell H_C}\Bigr)\ket{s_0},
%   \qquad
%   \vec\gamma=(\gamma_1,\dots,\gamma_p),\;
%   \vec\beta=(\beta_1,\dots,\beta_p).
% \]

\begin{remark}[Gate vs depth]
\label{rem:gatedepth}
Usually, the problem Hamiltonian
contains \(\Theta(n^3)\) two-qubit $ZZ$  interaction terms per CE-QAOA
layer (the dense TSP construction is a good example) which can be parallelized to $O(n)$ sequential schedules. The complete \(XY\) mixer applies one exchange unitary to each pair of qubits within every one-hot block.  Since there are \(n\) blocks, each containing \(n\) qubits, the number of exchange operations
per mixing layer is
\[
N_{\mathrm{ex}}(n)
=
n\binom{n}{2}
=
\Theta(n^{3}).
\]
Exchanges acting on disjoint qubit pairs can be
executed in parallel, so an edge-colouring schedule gives
\(\Theta(n)\) sequential exchange rounds per mixing layer so the effective sequential depth remains
\(
G(n,p)
=
O(pn).
\) Initial state preparation requires depth of  $O(n)$ \cite{onahce}.
\end{remark}

\subsection{Finite Shot Finite Depth Guarantees}

This appendix provides a self-contained account of the analytic mechanism
underlying the dimension-free success bound inherited from Ref.~\cite{onahfinite}. The discussion is given in the
notation of the present paper and is organized as follows: Sec.~\ref{app:dephased} introduces the dephased
reference model that exposes the positive trigonometric filter; Sec.~\ref{app:factorization} derives the factorized reference distribution; and Sec.~\ref{app:consequences} draws out the finite-shot consequences.

%------------------------------------------------------------
\subsection{Dephased reference model}
\label{app:dephased}
%------------------------------------------------------------

The analytic strategy is to introduce an intermediate \emph{dephased}
model that retains the positive structure needed for lower bounding while
remaining analytically tractable. Define the cost-basis dephasing channel
\begin{equation}\label{eq:app-dephasing}
  \mathcal{T}(\rho)
  \;:=\;
  \int_0^{2\pi}\frac{d\phi}{2\pi}\;
    e^{-i\phi H_C}\,\rho\,e^{+i\phi H_C}.
\end{equation}
This channel projects onto the diagonal in the $H_C$ eigenbasis: it
removes all coherences between distinct eigenspaces and leaves only the
occupation probabilities $\bra{z}\rho\ket{z}$.

\begin{remark}[Role of dephasing]
The dephasing channel is \emph{not} a physical noise model applied during
the computation. It is an analytic device: by inserting $\mathcal{T}$
after each CE--QAOA layer, we obtain a reference model that is easier to
bound from below.
\end{remark}

After inserting $\mathcal{T}$ between each pair of layers, the evolution
of the diagonal is governed by a classical Markov chain. Write
$v^{(r)}(z) := \bra{z}\rho_r\ket{z}$ for the diagonal distribution
after $r$ dephased layers. The mixer step acts as
\begin{equation}\label{eq:app-markov}
  v^{(r)}(z)
  \;=\;
  \sum_{y \in  X} M_{\beta_r}(z|y)\;v^{(r-1)}(y),
  \qquad
  M_\beta(z|y)
  \;:=\;
  \bigl|\!\bra{z}U_M(\beta)\ket{y}\!\bigr|^2.
\end{equation}
Each transition matrix $M_\beta$ is \emph{unistochastic} (derived from
the squared moduli of a unitary matrix) and hence doubly stochastic.

\begin{lemma}[Properties of the mixer transition matrix]
\label{lem:app-mixer-properties}
For every $\beta \in \mathbb{R}$:
\begin{enumerate}[label=\textup{(\alph*)},leftmargin=*]
  \item $M_\beta$ is doubly stochastic on $\OH$:
    $\sum_z M_\beta(z|y) = 1$ and $\sum_y M_\beta(z|y) = 1$.
  \item $M_\beta$ preserves the uniform distribution:
    if $v(z) = 1/D$ for all $z$, then
    $[M_\beta\,v](z) = 1/D$.
  \item At $\beta = 0$, $M_0 = I$ (the identity permutation on
    distributions).
  \item Because $U_M(\beta)$ factorizes over blocks, $M_\beta$
    factorizes as a tensor product of $m$ local transition matrices,
    each of size $n_{\mathrm{loc}} \times n_{\mathrm{loc}}$.
\end{enumerate}
\end{lemma}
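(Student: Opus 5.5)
The plan is to derive all four properties from the definition $M_\beta(z|y)=|\bra{z}U_M(\beta)\ket{y}|^2$ and the facts already recorded in Sec.~\ref{sec:ce-qaoa-setup}. These facts are that $U_M(\beta)$ preserves $\OH$ and factorizes as $\bigotimes_{b=1}^m \exp(-i\beta\widetilde H_{XY}^{(b)})$, with each factor preserving the one-excitation sector $\mathcal H_1$.

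For (a), restrict $U_M(\beta)$ to $\OH$. Because the mixer preserves this subspace, the restriction $V$ is a unitary on a space of dimension $D=n_{\mathrm{loc}}^m$ with orthonormal basis $\{\ket{z}:z\in X\}$. Column sums then follow from completeness, since $\sum_z |\bra{z}V\ket{y}|^2=\|V\ket{y}\|^2=1$. Row sums follow in the same way from $\sum_y |\bra{z}V\ket{y}|^2=\|V^\dagger\ket{z}\|^2=1$. The one point needing care is that the sums range only over $X$. This is legitimate precisely because the subspace is invariant: $V\ket{y}$ has no component outside $\OH$, and the same holds for $V^\dagger\ket{z}$, since $V^\dagger=U_M(-\beta)$ also preserves $\OH$.

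Part (b) follows immediately from the row-sum identity in (a): $[M_\beta v](z)=\sum_y M_\beta(z|y)/D=1/D$. Part (c) is also immediate, because $U_M(0)=I$ gives $M_0(z|y)=|\langle z|y\rangle|^2=\delta_{zy}$.

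For (d), use the product form $\ket{x}=\ket{e_{x_1}}\otimes\cdots\otimes\ket{e_{x_m}}$ together with the tensor factorization of $U_M(\beta)$. Together these give
\[
\bra{z}U_M(\beta)\ket{y}=\prod_{b=1}^m \bra{e_{z_b}}e^{-i\beta\widetilde H_{XY}^{(b)}}\ket{e_{y_b}}.
\]
Squaring the modulus turns this into the product $\prod_b m_\beta^{(b)}(z_b|y_b)$, so $M_\beta=\bigotimes_b m_\beta^{(b)}$. Each local factor $m_\beta^{(b)}$ is an $n_{\mathrm{loc}}\times n_{\mathrm{loc}}$ unistochastic matrix on $\mathcal H_1$.

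None of these steps is a real obstacle. The only subtlety is the invariance of $\OH$ used in (a), and that invariance is already established in the text.
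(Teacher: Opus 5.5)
Your proposal is correct and follows the same route as the paper. The paper also derives (a) from unitarity of $U_M(\beta)$, (b) from (a), (c) from $U_M(0)=I$, and (d) from the blockwise tensor-product structure of $U_M(\beta)$. You additionally spell out a detail the paper leaves implicit: the sums may be restricted to $X$ because $\OH$ is invariant under both $U_M(\beta)$ and $U_M(-\beta)$.
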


\begin{proof}
(a) follows from unitarity of $U_M(\beta)$. (b) is immediate from (a).
(c) is immediate from $U_M(0) = I$. (d) follows from the tensor-product
structure of $U_M(\beta)$.
\end{proof}

Starting from the initial diagonal
$v^{(0)}(z) = |\!\braket{z}{s_0}\!|^2 = D^{-1}$
(uniform on $\OH$), iterating the Markov chain gives the \emph{mixer
envelope}:
\begin{equation}\label{eq:app-Wp}
  W_p(z;\boldsymbol\beta)
  \;:=\;
  \bigl[M_{\beta_p}\cdots M_{\beta_1}\,v^{(0)}\bigr](z).
\end{equation}
Because each $M_{\beta_r}$ is doubly stochastic and $v^{(0)}$ is
uniform, $W_p$ is a probability distribution on $\OH$. Its role in the
success bound is to quantify how the mixer schedule redistributes mass
across the encoded manifold, independently of the cost structure.

\begin{lemma}[Mixer envelope is a probability distribution]
\label{lem:app-envelope-prob}
For every angle schedule $\boldsymbol\beta = (\beta_1,\dots,\beta_p)$,
$W_p(\cdot;\boldsymbol\beta) \ge 0$ and
$\sum_{z \in \OH} W_p(z;\boldsymbol\beta) = 1$.
\end{lemma}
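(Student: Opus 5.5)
The plan is to argue by induction on the number of layers $p$, using Lemma~\ref{lem:app-mixer-properties} as the only input. Nonnegativity and normalization are then propagated through the matrix product in Eq.~\eqref{eq:app-Wp}.

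\textbf{Base case.} Take $p=0$, so that $W_0 = v^{(0)}$. Then $v^{(0)}(z)=|\langle z|s_0\rangle|^2\ge 0$. Since $\ket{s_0}=\ket{s_{\mathrm{blk}}}^{\otimes m}$ is a unit vector lying in $\OH$, and $\{\ket{z}\}_{z\in X}$ is an orthonormal basis of $\OH$, Parseval gives $\sum_z v^{(0)}(z)=\|s_0\|^2=1$. Explicitly, $v^{(0)}(z)=n_{\mathrm{loc}}^{-m}=D^{-1}$ for every $z$, which is the uniform distribution on $X$.

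\textbf{Inductive step.} Suppose $v^{(r-1)}$ is a probability vector on $X$, and set $v^{(r)}=M_{\beta_r}v^{(r-1)}$ as in Eq.~\eqref{eq:app-markov}. Each entry $M_{\beta}(z|y)=|\langle z|U_M(\beta)|y\rangle|^2$ is nonnegative. Hence $v^{(r)}(z)$ is a sum of products of nonnegative numbers, so $v^{(r)}(z)\ge 0$. For normalization, swap the finite sums and use column stochasticity from Lemma~\ref{lem:app-mixer-properties}(a):
\[
\sum_z v^{(r)}(z)=\sum_y v^{(r-1)}(y)\sum_z M_{\beta_r}(z|y)=\sum_y v^{(r-1)}(y)=1 .
\]
The column-sum identity itself holds because $U_M(\beta)$ preserves $\OH$. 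Consequently, $\sum_{z\in X}|\langle z|U_M(\beta)|y\rangle|^2=\|U_M(\beta)\ket{y}\|^2=1$.

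\textbf{Alternative route.} By Lemma~\ref{lem:app-mixer-properties}(b), $W_p$ is in fact exactly uniform, because $v^{(0)}$ is uniform and doubly stochastic matrices fix the uniform vector. This already implies the claim, but the induction above does not depend on the initial state being uniform.

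The only point needing care is the restriction to $\OH$. The sums must run over the encoded basis $X$ rather than over all of $\{0,1\}^{n_{\mathrm{loc}}m}$, and the probability conservation relies on $U_M$ leaving $\OH$ invariant. That invariance was established when $U_M$ was defined, so this is not a real obstacle.
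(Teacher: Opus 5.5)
Your proof is correct and follows the paper's argument: nonnegativity comes from the nonnegative entries $M_\beta(z|y)=|\langle z|U_M(\beta)|y\rangle|^2$, and normalization follows by induction from the normalized initial vector $v^{(0)}$, since each stochastic step preserves total mass. Your additional remark that $W_p$ is in fact exactly uniform, via Lemma~\ref{lem:app-mixer-properties}(b), is also correct and gives an even shorter proof.
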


\begin{proof}
Nonnegativity follows from $M_\beta(z|y) \ge 0$. Normalization follows
by induction: $v^{(0)}$ is normalized, and each doubly stochastic step
preserves the total mass.
\end{proof}

The \emph{Fej\'er kernel} of order $p$ is the nonnegative trigonometric
polynomial:
\begin{equation}\label{eq:app-fejera}
  F_p(\vartheta)
  \;:=\;
  \frac{1}{p+1}
  \left|
    \sum_{r=0}^{p} e^{ir\vartheta}
  \right|^2
  \;=\;
  \frac{1}{p+1}
  \left(
    \frac{\sin\!\bigl(\frac{(p+1)\vartheta}{2}\bigr)}
         {\sin\!\bigl(\frac{\vartheta}{2}\bigr)}
  \right)^{\!2}.
\end{equation}

For all $|\vartheta| \ge \delta > 0$ \cite{onahfinite},
    \begin{equation}\label{eq:app-Mp-bound}
      F_p(\vartheta)
      \;\le\;
      \frac{1}{(p+1)\sin^2(\delta/2)}.
    \end{equation}

The key feature is the contrast between the on-peak value
$F_p(0) = p + 1$ and the off-peak bound
$M_p(\delta) = O\!\bigl((p+1)^{-1}\bigr)$: the ratio is
$(p+1)^2\sin^2(\delta/2)$, which grows quadratically with filter depth.
This is the mechanism through which deeper CE--QAOA alternations can enhance
discrimination between optimal and suboptimal cost phases.

%------------------------------------------------------------
\subsection{Factorized reference distribution}
\label{app:factorization}
%------------------------------------------------------------

The dephased reference model produces a distribution on $\OH$ that
factorizes into two components: the mixer envelope $W_p$ and the Fej\'er
filter $F_p$\cite{onahfinite}. To see this, consider the effect of $p$ dephased layers on
the initial state. After the \(r\)-th cost layer, the off-diagonal coherences acquire
cost-dependent phases. Subsequent mixing and dephasing transfer this
phase dependence into the diagonal reference weights. After dephasing, these phases
manifest as a multiplicative weight on each basis string. The dephased dynamics determines the mixer envelope
\(W_p(z;\boldsymbol\beta)\). Under the harmonic phase schedule
\(\gamma_r=r\gamma\), the Dirichlet-filtered reference construction of
Ref.~\cite{onahfinite} weights this envelope by the Fej\'er factor
\(F_p(\theta(z)-\theta^\star)\). The resulting diagonal weight is
therefore proportional to
\(W_p(z;\boldsymbol\beta)F_p(\theta(z)-\theta^\star)\). 

The normalized \emph{factorized reference distribution} is:
\begin{equation}\label{eq:app-factorizationa}
  \Pr_p^{\mathrm{ref}}[z]
  \;:=\;
  \frac{
    W_p(z;\boldsymbol\beta)\;
    F_p\!\bigl(\theta(z) - \theta^\star\bigr)
  }{
    \displaystyle
    \sum_{y \in X}
      W_p(y;\boldsymbol\beta)\;
      F_p\!\bigl(\theta(y) - \theta^\star\bigr)
  }.
\end{equation}

This factorization isolates the two competing effects:
\begin{itemize}[leftmargin=*]
  \item The envelope $W_p$ controls how the mixer spreads mass across the
    encoded manifold. A string that receives little mixer mass will
    contribute little to the success probability regardless of its cost
    phase.
  \item The Fej\'er factor $F_p(\theta(z) - \theta^\star)$ amplifies
    strings near the optimal phase and suppresses those far from it. The
    degree of amplification grows with the filter order $p$.
\end{itemize}

Define the \emph{optimal-set envelope weight}:
\(
  C_\beta
  \;:=\;
  \sum_{x \in \Omega^\star} W_p(x;\boldsymbol\beta).
\)

\begin{theorem}[Reference Fej\'er factorization and success bound]
\label{thm:app-fejer-success}
Assume that the TSP specialization admits the factorized reference law \eqref{eq:app-factorization}, and assume a positive wrapped phase
separation $\delta > 0$ in the sense of \eqref{eq:app-phase-gap}. Then
the probability of sampling an optimal basis string satisfies
\begin{equation}\label{eq:app-q0}
  q_0
  \;:=\;
  \Pr_p^{\mathrm{ref}}[\Omega^\star]
  \;\ge\;
  \frac{(p+1)\,C_\beta}
       {(p+1)\,C_\beta \;+\; M_p(\delta)\,(1 - C_\beta)}.
\end{equation}
In particular, the lower bound is \emph{dimension-free}: it depends only
on the filter order $p$, the phase gap $\delta$, and the optimal-set
envelope weight $C_\beta$.
\end{theorem}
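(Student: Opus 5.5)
The plan is to work directly with the factorized reference law \eqref{eq:app-factorization}. Since it is already given as a normalized ratio, the optimal-set probability is
\[
q_0=\frac{A}{A+B},\qquad
A:=\sum_{x\in\Omega^\star}W_p(x;\boldsymbol\beta)\,F_p\bigl(\theta(x)-\theta^\star\bigr),\qquad
B:=\sum_{y\notin\Omega^\star}W_p(y;\boldsymbol\beta)\,F_p\bigl(\theta(y)-\theta^\star\bigr).
\]
Both $A$ and $B$ are nonnegative, because $W_p\ge0$ (Lemma~\ref{lem:app-envelope-prob}) and $F_p\ge0$, being a squared modulus. The map $(A,B)\mapsto A/(A+B)$ is nondecreasing in $A$ and nonincreasing in $B$. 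It therefore suffices to bound $A$ from below and $B$ from above, then substitute.

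\textbf{Lower bound on $A$.} Every $x\in\Omega^\star$ attains the common optimal energy $E^\star$. Hence $\theta(x)=\theta^\star$ modulo $2\pi$, and since $F_p$ is $2\pi$-periodic, $F_p(\theta(x)-\theta^\star)=F_p(0)=p+1$. This gives the exact identity $A=(p+1)C_\beta$.

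\textbf{Upper bound on $B$.} For $y\notin\Omega^\star$, the phase-gap assumption \eqref{eq:app-phase-gap} gives $\operatorname{dist}_{\mathbb T}(\theta(y),\theta^\star)\ge\delta$. Periodicity and evenness of $F_p$ let us replace the argument by a representative $\vartheta$ with $\delta\le|\vartheta|\le\pi$. The off-peak bound \eqref{eq:app-Mp-bound} then gives $F_p(\theta(y)-\theta^\star)\le M_p(\delta)$. Summing and using the normalization $\sum_z W_p(z)=1$ from Lemma~\ref{lem:app-envelope-prob}, we get $B\le M_p(\delta)\sum_{y\notin\Omega^\star}W_p(y)=M_p(\delta)(1-C_\beta)$.

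\textbf{Combining.} Write $f(A,B)=A/(A+B)$. Monotonicity gives
\[
q_0=f(A,B)\ge f\bigl((p+1)C_\beta,\;M_p(\delta)(1-C_\beta)\bigr),
\]
which is \eqref{eq:app-q0}. The degenerate case $C_\beta=0$ makes the bound read $0$ and is trivial. When $C_\beta>0$, the denominator is positive, so the ratio is well defined.

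\textbf{Main obstacle.} The only step needing care is the off-peak estimate. The wrapped-distance reduction must legitimately place $\vartheta$ in $[\delta,\pi]$, where the Dirichlet-kernel bound $\sin^2((p+1)\vartheta/2)\le1$ and $\sin^2(\vartheta/2)\ge\sin^2(\delta/2)$ applies. This uses monotonicity of $\sin^2(\cdot/2)$ on $[0,\pi]$. Dimension-freeness then follows immediately, since the final expression involves only $p$, $\delta$ and $C_\beta$.
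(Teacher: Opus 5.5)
Your proposal is correct and follows the paper's proof essentially verbatim. Both arguments compute the optimal-set contribution exactly as $(p+1)C_\beta$ via $F_p(0)=p+1$, bound each suboptimal term by $M_p(\delta)$ using the phase gap, and use $\sum_{y\notin\Omega^\star}W_p(y)=1-C_\beta$. Your explicit reduction to a representative $\vartheta\in[\delta,\pi]$, via $2\pi$-periodicity and evenness of $F_p$, also usefully tightens the paper's off-peak bound, which the paper states loosely for all $|\vartheta|\ge\delta$.
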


\begin{proof}
Summing \eqref{eq:app-factorization} over $x \in \Omega^\star$ and using
$F_p(0) = p + 1$ gives
\[
  q_0
  \;=\;
  \frac{(p+1)\,C_\beta}
       {\displaystyle
        \sum_{y \in  X}
          W_p(y;\boldsymbol\beta)\,
          F_p\!\bigl(\theta(y) - \theta^\star\bigr)
       }.
\]
Split the denominator into optimal and suboptimal contributions:
\[
  \text{denom}
  \;=\;
  (p+1)\,C_\beta
  \;+\;
  \sum_{y \notin \Omega^\star}
    W_p(y;\boldsymbol\beta)\,
    F_p\!\bigl(\theta(y) - \theta^\star\bigr).
\]
For every $y \notin \Omega^\star$, the phase-gap assumption implies
$|\theta(y) - \theta^\star| \ge \delta$ on $\mathbb{T}$, so
$F_p\!\bigl(\theta(y) - \theta^\star\bigr) \le M_p(\delta)$ by
Eq. \ref{eq:app-Mp-bound}. Hence the suboptimal sum is at most
\[
  M_p(\delta)
  \sum_{y \notin \Omega^\star} W_p(y;\boldsymbol\beta)
  \;=\;
  M_p(\delta)\,(1 - C_\beta),
\]
which yields \eqref{eq:app-q0}. See Ref. \cite{onahfinite} for details
\end{proof}

% \begin{proof}
% Rearranging the requirement
% $q_0 \ge 1 - \varepsilon$ using \eqref{eq:app-q0} and the bound
% $M_p(\delta) \le [(p+1)\sin^2(\delta/2)]^{-1}$ yields
% \eqref{eq:app-depth-condition}.
% \end{proof}

%------------------------------------------------------------
\subsection{Consequences for the hybrid pipeline}
\label{app:consequences}
%------------------------------------------------------------

We now translate the reference-model bound into the guarantees used in the main text.

\begin{corollary}[Instance-dependent inverse-polynomial mass from the
joint envelope--filter law]
\label{cor:app-inverse-poly}
Consider a problem family admitting the factorized reference law
\eqref{eq:app-factorization}. Suppose that, for each size-\(n\) instance
and the chosen mixer schedule, the corresponding envelope and phase
parameters satisfy
\[
C_\beta \geq c_0 n^{-a},
\qquad
\sin^2(\delta/2)\geq c_1,
\]
for constants \(c_0,c_1>0\) and \(a\geq0\) independent of \(n\).
Then, for any fixed depth \(p\), the joint envelope--filter law gives
\[
q_0
\geq
\frac{(p+1)c_0n^{-a}}
     {(p+1)c_0n^{-a}
      +\bigl[(p+1)c_1\bigr]^{-1}}
=
\Omega(n^{-a}).
\]
Setting \(k:=a\), this supplies the inverse-polynomial premise
\[
q_0=\Omega(n^{-k})
\]
used in Theorem~\ref{thm:noise_robust} and the
\(\mathrm{FPRAS}_q\) analysis.
\end{corollary}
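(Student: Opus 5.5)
The corollary follows by plugging the two hypotheses into the dimension-free bound of Theorem~\ref{thm:app-fejer-success} and checking monotonicity. Write
\[
\Phi(C,M):=\frac{(p+1)C}{(p+1)C+M(1-C)},\qquad C\in(0,1],\ M\ge0.
\]
First observe that $\Phi$ is nondecreasing in $C$ and nonincreasing in $M$. Writing $\Phi=\bigl[1+M(1-C)/((p+1)C)\bigr]^{-1}$, the ratio $(1-C)/C$ decreases in $C$, and the whole expression is monotone in $M$. Since $0\le 1-C_\beta\le 1$, we may also drop the factor $(1-C_\beta)$ in the denominator, which only weakens the bound. This gives $q_0\ge (p+1)C_\beta/\bigl((p+1)C_\beta+M_p(\delta)\bigr)$.

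Next, control $M_p(\delta)$. By Eq.~\eqref{eq:app-Mp-bound} together with the hypothesis $\sin^2(\delta/2)\ge c_1$, we have $M_p(\delta)\le[(p+1)\sin^2(\delta/2)]^{-1}\le[(p+1)c_1]^{-1}$. Substituting this, and then $C_\beta\ge c_0n^{-a}$ using monotonicity in $C$, yields exactly the displayed inequality.

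For the asymptotics, multiply numerator and denominator by $(p+1)c_1$. The bound becomes
\[
\frac{(p+1)^2c_0c_1n^{-a}}{(p+1)^2c_0c_1n^{-a}+1}\;\ge\;\tfrac12\,(p+1)^2c_0c_1\,n^{-a}
\]
whenever $(p+1)^2c_0c_1n^{-a}\le1$. When $(p+1)^2c_0c_1n^{-a}>1$, the bound is at least $1/2$, which is already $\Omega(n^{-a})$ for $a\ge0$. In either case $q_0\ge\min\{\tfrac12,\tfrac12(p+1)^2c_0c_1\}\,n^{-a}$, with constants depending only on the fixed $p,c_0,c_1$. Setting $k:=a$ then supplies the premise $q_0=\Omega(n^{-k})$ used downstream.

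The computation itself is routine. The only delicate point is the interpretive step that Theorem~\ref{thm:app-fejer-success} applies: the family must admit the factorized reference law, and $q_0$ here is the reference-law mass. I would state this explicitly as an inherited assumption rather than prove it, consistent with the corollary's hypothesis. I would also note the monotonicity in $C$, since replacing $C_\beta$ by its lower bound is legitimate only because $\Phi$ increases in $C$.
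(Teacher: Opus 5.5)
Your proposal is correct and follows the paper's proof: invoke Theorem~\ref{thm:app-fejer-success}, use monotonicity in $C_\beta$ and $M_p(\delta)$, bound $1-C_\beta\le 1$ and $M_p(\delta)\le[(p+1)c_1]^{-1}$, then substitute $C_\beta\ge c_0n^{-a}$. Your explicit constant $\min\{\tfrac12,\tfrac12(p+1)^2c_0c_1\}$ makes precise the paper's closing remark that, for fixed $p$, the numerator is $\Theta(n^{-a})$ and the denominator is $\Theta(1)$.
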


\begin{proof}
By Theorem~\ref{thm:app-fejer-success},
\[
q_0
\geq
\frac{(p+1)C_\beta}
     {(p+1)C_\beta+M_p(\delta)(1-C_\beta)}.
\]
The right-hand side is increasing in \(C_\beta\) and decreasing in
\(M_p(\delta)\). Moreover,
\[
M_p(\delta)
\leq
\frac{1}{(p+1)\sin^2(\delta/2)}
\leq
\frac{1}{(p+1)c_1}.
\]
Using \(C_\beta\geq c_0n^{-a}\) and \(1-C_\beta\leq1\) therefore gives
\[
q_0
\geq
\frac{(p+1)c_0n^{-a}}
     {(p+1)c_0n^{-a}
      +\bigl[(p+1)c_1\bigr]^{-1}}.
\]
For fixed \(p\), the numerator is
\(\Theta(n^{-a})\), while the denominator is
\(\Theta(1)\). Hence \(q_0=\Omega(n^{-a})\).
\end{proof}

\begin{corollary}[Sufficient depth for target success probability\cite{onahfinite}]
\label{cor:app-depth}
To achieve $q_0 \ge 1 - \varepsilon$ in the reference model, it suffices
to choose the filter order $p$ such that
\begin{equation}\label{eq:app-depth-condition}
  (p+1)^2
  \;\ge\;
  \frac{1 - \varepsilon}{\varepsilon}
  \cdot
  \frac{1 - C_\beta}{C_\beta}
  \cdot
  \csc^2\!(\delta/2).
\end{equation}
In particular, if $C_\beta = \Omega(n^{-a})$ and
$\sin^2(\delta/2) = \Omega(1)$, then $p = O(n^{a/2})$ suffices for any
fixed $\varepsilon > 0$.
\end{corollary}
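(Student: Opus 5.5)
The plan is to reduce the depth condition to an algebraic rearrangement of the dimension-free bound of Theorem~\ref{thm:app-fejer-success}. The requirement $q_0\ge 1-\varepsilon$ is a lower bound on $q_0$, so it suffices to make the right-hand side of \eqref{eq:app-q0} at least $1-\varepsilon$. To simplify, write $A:=(p+1)C_\beta$ and $B:=M_p(\delta)(1-C_\beta)$. The map $A/(A+B)$ is increasing in $A$ and decreasing in $B$. Hence $A/(A+B)\ge 1-\varepsilon$ is equivalent to $\varepsilon A\ge(1-\varepsilon)B$, that is,
\[
(p+1)\,C_\beta\,\varepsilon\;\ge\;(1-\varepsilon)\,M_p(\delta)\,(1-C_\beta).
\]

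Next, replace $M_p(\delta)$ by the explicit off-peak bound \eqref{eq:app-Mp-bound}, $M_p(\delta)\le[(p+1)\sin^2(\delta/2)]^{-1}$. Since the inequality above only gets easier when $B$ shrinks, it is enough to demand
\[
(p+1)\,C_\beta\,\varepsilon\;\ge\;\frac{(1-\varepsilon)(1-C_\beta)}{(p+1)\sin^2(\delta/2)}.
\]
Multiplying through by $(p+1)/(\varepsilon C_\beta)$, which is legitimate for $C_\beta>0$ and $\varepsilon\in(0,1)$, yields exactly \eqref{eq:app-depth-condition}. The degenerate cases are immediate: $C_\beta=1$ gives $q_0=1$, and $\varepsilon\ge1$ is trivial. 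This chain of sufficiency is the whole argument, so the steps must be arranged in this direction and not claimed as an equivalence. The equivalence fails once $M_p$ is upper bounded.

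For the scaling statement, substitute $C_\beta\ge c_0n^{-a}$, so that $(1-C_\beta)/C_\beta\le c_0^{-1}n^{a}$, and $\csc^2(\delta/2)=O(1)$. For fixed $\varepsilon$ the right-hand side of \eqref{eq:app-depth-condition} is then $O(n^{a})$. Taking square roots shows that $p+1=O(n^{a/2})$ suffices.

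The only delicate point is the monotonicity and direction of the implications. One must apply the upper bound on $M_p(\delta)$ at the stage where it strengthens the sufficient condition, and check that $1-C_\beta\ge0$ by Lemma~\ref{lem:app-envelope-prob}. The proof then inherits the standing assumption that the factorized reference law holds.
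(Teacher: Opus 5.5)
Your argument is correct and follows the paper's own one-line proof: rearrange $q_0\ge 1-\varepsilon$ using the bound of Theorem~\ref{thm:app-fejer-success}, then substitute $M_p(\delta)\le[(p+1)\sin^2(\delta/2)]^{-1}$, and you apply this substitution in the correct direction (sufficiency, not equivalence). Your scaling step, using $(1-C_\beta)/C_\beta\le c_0^{-1}n^{a}$ and $\csc^2(\delta/2)=O(1)$, fills in what the paper leaves implicit.
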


\subsection{Lipschitz Continuity of Noisy Output Distributions}
\label{sec:lipschitz-noise}

% By standard continuity bounds for quantum channels, if each two‐qubit gate deviates by at most $\epsilon$ in the diamond norm from its ideal implementation, then the overall $G(n)$‐gate circuit deviates by at most $G(n)\,\epsilon$ in diamond norm \cite{Nielsen2000, Wilde2017}.  Moreover, the classical total‐variation distance between output distributions is bounded by the trace distance of the underlying states.  Combining these facts (and noting $G(n)=O(pn)$) yields the Lipschitz bound
% \[
%   \bigl\|P_{\epsilon}-P_{0}\bigr\|_{\mathrm{TV}}
%   \;\le\;
%   \frac12G(n)\,\epsilon,
% \]
% as used in Theorem~\ref{thm:noise_robust}.  An equivalent composition argument appears in \cite[Ch.~4 \& 9]{Wilde2017}. %\cite[Sec.~9.2]{Nielsen2000}

Under the effective-layer noise model of Sec.~\ref{sec:complexity}, the measured
output distribution obeys the following Lipschitz bound. For
probability distributions \(P\) and \(Q\) on the same finite sample
space \(\Omega\), recall that
\[
  \|P-Q\|_{\mathrm{TV}}
  =
  \frac12\sum_{x\in\Omega}|P(x)-Q(x)|
  =
  \max_{A\subseteq\Omega}|P(A)-Q(A)|.
\]

\begin{lemma}[Lipschitz continuity under local noise]
\label{lem:lipschitz-noise}
Suppose that the noisy implementation \(\mathcal E_j\) of each of the
\(G(n)\) sequential noisy locations satisfies
\[
  \|\mathcal E_j-\mathcal U_j\|_\diamond
  \le C\epsilon
\]
for some constant \(C>0\), where \(\mathcal U_j\) is the corresponding
ideal channel. Then
\[
  \|P_\epsilon-P_0\|_{\mathrm{TV}}
  \le
  \frac{C}{2}G(n)\epsilon
  =:
  L G(n)\epsilon .
\]
In particular, when \(C=1\),
\[
  \|P_\epsilon-P_0\|_{\mathrm{TV}}
  \le
  \frac12G(n)\epsilon .
\]
\end{lemma}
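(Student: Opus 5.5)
The argument is a standard hybrid (telescoping) argument at the level of channels, followed by contractivity of the trace norm under the final measurement. Write the ideal circuit as the composition $\mathcal U := \mathcal U_{G}\circ\cdots\circ\mathcal U_1$ and the noisy one as $\mathcal E := \mathcal E_{G}\circ\cdots\circ\mathcal E_1$, with $G=G(n)$. Both act on the same initial state $\rho_0$ (the prepared $\ket{0}^{\otimes N}$ or $\ket{s_0}$; if state preparation is itself noisy, it is simply absorbed as one of the $G$ locations). Introduce the hybrid channels
\[
\mathcal H_j := \mathcal E_G\circ\cdots\circ\mathcal E_{j+1}\circ\mathcal U_j\circ\cdots\circ\mathcal U_1,\qquad j=0,\dots,G,
\]
so that $\mathcal H_0=\mathcal E$ and $\mathcal H_G=\mathcal U$. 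Then $\mathcal E-\mathcal U=\sum_{j=1}^{G}(\mathcal H_{j-1}-\mathcal H_j)$.

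The key estimate is for a single difference:
\[
\mathcal H_{j-1}-\mathcal H_j=\mathcal E_G\circ\cdots\circ\mathcal E_{j+1}\circ(\mathcal E_j-\mathcal U_j)\circ\mathcal U_{j-1}\circ\cdots\circ\mathcal U_1 .
\]
Two standard facts bound its diamond norm by $\|\mathcal E_j-\mathcal U_j\|_\diamond$. First, the diamond norm is submultiplicative. Second, every CPTP map has diamond norm $1$. Hence the triangle inequality gives $\|\mathcal E-\mathcal U\|_\diamond\le\sum_j\|\mathcal E_j-\mathcal U_j\|_\diamond\le C G\epsilon$. Since the diamond norm dominates the trace-norm distance on any fixed input, this yields $\|\mathcal E(\rho_0)-\mathcal U(\rho_0)\|_1\le CG\epsilon$.

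For the classical step, let $\mathcal M(\rho)=\sum_x\bra{x}\rho\ket{x}\,\ketbra{x}$ be the computational-basis measurement channel, so that $P_\epsilon$ and $P_0$ are the diagonals of $\mathcal M\mathcal E(\rho_0)$ and $\mathcal M\mathcal U(\rho_0)$. The trace norm is contractive under CPTP maps, and $\|\mathcal M\sigma-\mathcal M\tau\|_1=\sum_x|P(x)-Q(x)|=2\|P-Q\|_{\mathrm{TV}}$. Therefore
\[
\|P_\epsilon-P_0\|_{\mathrm{TV}}\le\tfrac12\|\mathcal E(\rho_0)-\mathcal U(\rho_0)\|_1\le\tfrac C2G(n)\epsilon .
\]
Setting $L:=C/2$ gives the claim, and $C=1$ gives the special case.

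The only delicate point is bookkeeping rather than analysis: making precise that $G(n)$ counts \emph{sequential} noisy locations, each of which is a full layer channel acting on the whole register (parallel gates tensored together). The per-location bound $C\epsilon$ is then the layer-level diamond bound of Eq.~\eqref{eq:effective-layer-diamond-error}. That bound is assumed rather than derived here; if one instead starts from per-gate bounds, subadditivity of the diamond norm over tensor products would be needed to justify it. With that convention fixed, the telescoping argument above goes through verbatim.
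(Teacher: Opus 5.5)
Your proposal is correct and follows the paper's proof essentially verbatim. Both arguments telescope the noisy and ideal channel compositions and bound each term by $\|\mathcal E_j-\mathcal U_j\|_\diamond$, using the triangle inequality and the fact that composing with CPTP maps cannot increase the diamond norm; contractivity of the trace norm under the computational-basis measurement then turns $\|\rho_\epsilon-\rho_0\|_1\le C\,G(n)\epsilon$ into the total-variation bound, and your explicit hybrid channels $\mathcal H_j$ and the full-layer convention for $G(n)$ only make precise what the paper leaves implicit.
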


\begin{proof}
A telescoping expansion of the noisy and ideal channel compositions,
together with the triangle inequality and contractivity of the diamond
norm under composition with quantum channels, gives
\[
  \|\mathcal E_{G}\circ\cdots\circ\mathcal E_{1}
    -
    \mathcal U_{G}\circ\cdots\circ\mathcal U_{1}\|_\diamond
  \le
  \sum_{j=1}^{G(n)}
  \|\mathcal E_j-\mathcal U_j\|_\diamond
  \le
  C G(n)\epsilon .
\]
Consequently,
\(
  \|\rho_\epsilon-\rho_0\|_1
  \le C G(n)\epsilon .
\) Let \(\mathcal M\) denote computational-basis measurement. By
contractivity of trace distance under CPTP maps,
\[
  \begin{aligned}
  \|P_\epsilon-P_0\|_{\mathrm{TV}}
  &=
  \frac12
  \|\mathcal M(\rho_\epsilon)-\mathcal M(\rho_0)\|_1 
  \le
  \frac12\|\rho_\epsilon-\rho_0\|_1
  \le
  \frac{C}{2}G(n)\epsilon .
  \end{aligned}
\]
See Refs.~\cite{Nielsen2000,Wilde2017} for the standard channel-norm
and trace-distance properties used above.
\end{proof}

\subsection{Bound on Success Probability in Presence of Noise}

Let $\mathcal{X}^*\subseteq\Omega$ denote the set of optimal strings, and define
\[
  p_0 = P_0(\mathcal{X}^*),
  \quad
  p_{\min}(\epsilon) = P_{\epsilon}(\mathcal{X}^*).
\]
By the defining event bound for total variation,
\[
  \bigl|p_{\min}(\epsilon)-p_0\bigr|
  =
  \bigl|P_\epsilon(X^\star)-P_0(X^\star)\bigr|
  \le
  \|P_\epsilon-P_0\|_{\mathrm{TV}}
  \le
  L G(n,p)\epsilon .
\]
Hence,
\(
  p_{\min}(\epsilon)
  \ge
  p_0-LG(n,p)\epsilon .
\)

Treating each sample as an independent Bernoulli trial with success probability at least $p_{\min}(\epsilon)$, the probability of not seeing an optimal string in $S$ shots is
\[
  (1 - p_{\min}(\epsilon))^S \le \exp\bigl(-S\,p_{\min}(\epsilon)\bigr).
\]

To ensure an overall failure probability at most \(\delta\), it is
enough to choose
\[
S
\ge
\left\lceil
\frac{\ln(1/\delta)}
     {p_0-LG(n,p)\epsilon}
\right\rceil ,
\]
provided \(p_0-LG(n,p)\epsilon>0\). A sufficient condition for
\(p_{\min}(\epsilon)\ge n^{-k-1}\) is

\[
\epsilon
\le
\frac{p_0-n^{-k-1}}{LG(n,p)}
=:\epsilon_{\mathrm{th}}.
\]

Below this threshold, the solver succeeds exactly with $\mathrm{poly}(n)$ shots.

% Heavy‑Hitter QAOA: Depth‑Compressed Circuits Achieving Quantum Advantage for Constrained Optimisation

\subsection{ Heavy Hitters From Depth‑Compressed Noisy Quantum  Circuits }
\label{sec:depth-planning}

\begin{theorem}[Combined TV-distance bound on circuit approximation]
\label{thm:combined_tv}
Let
\[
U_{\rm PC},
\qquad
U_{\rm APC},
\]
be the ideal unitaries of the original circuit \textbf{PC} and its
compressed approximation \textbf{APC}, respectively, acting on \(n\)
qubits. Define their ideal output distributions by
\[
P_{\rm PC}(x)
\equiv
P_0(x)
:=
\bigl|\langle x|U_{\rm PC}|0^n\rangle\bigr|^2,
\qquad
P_{\rm APC}(x)
:=
\bigl|\langle x|U_{\rm APC}|0^n\rangle\bigr|^2,
\]
for \(x\in\{0,1\}^n\).

Suppose the circuit approximation satisfies
\[
\bigl\|U_{\rm PC}-U_{\rm APC}\bigr\|_{\rm tr}
\leq \delta,
\]
and consequently
\[
\bigl\|P_{\rm APC}-P_0\bigr\|_{\rm TV}
\leq \delta.
\]

Now consider hardware noise of strength \(\epsilon\). Let
\(P_{\epsilon,\delta}\) denote the noisy output distribution obtained
by running \textbf{APC}, and let \(G_{\rm APC}(n)\) denote its effective
noisy depth. Applying the Lipschitz bound of
Appendix~\ref{sec:lipschitz-noise} to the approximate circuit gives
\[
\bigl\|P_{\epsilon,\delta}-P_{\rm APC}\bigr\|_{\rm TV}
\leq
L\,G_{\rm APC}(n)\,\epsilon,
\qquad
L>0,
\quad
G_{\rm APC}(n)\geq 1.
\]

Then
\[
\bigl\|P_{\epsilon,\delta}-P_0\bigr\|_{\rm TV}
\leq
\underbrace{L\,G_{\rm APC}(n)\,\epsilon}_{\text{noise error}}
+
\underbrace{\delta}_{\text{approximation error}}.
\]

Let \(x^\star\) be an optimal string and suppose its ideal probability
satisfies
\[
p_0^\star
:=
P_0(x^\star)
\geq
\frac{c}{n^k}
\]
for fixed constants \(c>0\) and \(k\geq 0\). Its probability under the
noisy approximate circuit then obeys
\[
p_{\epsilon,\delta}^\star
:=
P_{\epsilon,\delta}(x^\star)
\geq
p_0^\star
-
L\,G_{\rm APC}(n)\,\epsilon
-
\delta.
\]

For the heavy-hitter threshold \(1/H\), define
\[
\mathcal H_{1/H}
:=
\left\{
x:
P_{\epsilon,\delta}(x)\geq\frac{1}{H}
\right\}.
\]
If
\[
p_0^\star
-
L\,G_{\rm APC}(n)\,\epsilon
-
\delta
\geq
\frac{1}{H},
\]
then
\[
x^\star\in\mathcal H_{1/H},
\qquad
\left|\mathcal H_{1/H}\right|\leq H.
\]
\end{theorem}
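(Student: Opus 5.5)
The plan is a three-step chain: a triangle inequality for total variation, a singleton-event bound, and the Markov-type counting bound of Lemma~\ref{lem:HH-cardinality}.

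\emph{Combined TV bound.} The approximation-error hypothesis $\|P_{\rm APC}-P_0\|_{\rm TV}\le\delta$ is taken as given; it can be justified by contractivity of trace distance under computational-basis measurement, as in Lemma~\ref{lem:lipschitz-noise}. The noise bound $\|P_{\epsilon,\delta}-P_{\rm APC}\|_{\rm TV}\le L\,G_{\rm APC}(n)\epsilon$ comes from applying Lemma~\ref{lem:lipschitz-noise} to the compressed circuit, with $G_{\rm APC}$ in place of $G$. Since $\|\cdot\|_{\rm TV}$ is a metric on distributions over $\{0,1\}^n$, I insert $P_{\rm APC}$ as an intermediate point:
\[
\|P_{\epsilon,\delta}-P_0\|_{\rm TV}\le \|P_{\epsilon,\delta}-P_{\rm APC}\|_{\rm TV}+\|P_{\rm APC}-P_0\|_{\rm TV}\le L\,G_{\rm APC}(n)\epsilon+\delta .
\]

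\emph{Singleton lower bound.} I use the event characterization $\|P-Q\|_{\rm TV}=\max_A|P(A)-Q(A)|$, recalled in Appendix~\ref{sec:lipschitz-noise}, with $A=\{x^\star\}$. This gives $|P_{\epsilon,\delta}(x^\star)-P_0(x^\star)|\le L\,G_{\rm APC}(n)\epsilon+\delta$. Hence $p^\star_{\epsilon,\delta}\ge p_0^\star-L\,G_{\rm APC}(n)\epsilon-\delta$, and substituting $p_0^\star\ge c\,n^{-k}$ yields the explicit form. This is the same argument used in the proof of Theorem~\ref{thm:quantum-informed-HH}(i).

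\emph{Heavy-hitter membership and cardinality.} Under the margin hypothesis $p_0^\star-L\,G_{\rm APC}\epsilon-\delta\ge 1/H$, the previous step gives $P_{\epsilon,\delta}(x^\star)\ge 1/H$, so $x^\star\in\mathcal H_{1/H}$ by definition. The bound $|\mathcal H_{1/H}|\le H$ is Lemma~\ref{lem:HH-cardinality} applied to $P_{\epsilon,\delta}$ with $\tau=1/H$: the lemma's proof only uses that $P_{\epsilon,\delta}$ is a probability distribution, so it holds on the full sample space $\{0,1\}^n$ and not only on the feasible set.

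\emph{Main obstacle.} The real content, and the step I would treat with care, is the unitary-to-TV approximation claim. The statement writes $\|U_{\rm PC}-U_{\rm APC}\|_{\rm tr}\le\delta$, and the subsequent TV bound is an assumption (``consequently''), not something derived. If I had to justify it, I would argue through the output states: $\|U_{\rm PC}\ket{0^n}-U_{\rm APC}\ket{0^n}\|\le\|U_{\rm PC}-U_{\rm APC}\|_{\rm op}\le\delta$. For pure states with $\|\psi-\phi\|\le\delta$, the trace distance $\tfrac12\|\psi\psi^\dagger-\phi\phi^\dagger\|_1$ is at most $\|\psi-\phi\|\le\delta$ (a possible factor is absorbable into $\delta$), and measurement contractivity then gives the TV bound. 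Since the theorem states the TV bound as a hypothesis, the proof itself only needs the three steps above.
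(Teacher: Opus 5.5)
Your proposal is correct and follows the paper's proof essentially step for step: the triangle inequality through $P_{\rm APC}$, the singleton-event bound with $A=\{x^\star\}$, and the normalization (Markov-type) count giving $|\mathcal H_{1/H}|\le H$. Your extra justification of the unitary-to-TV step, via $\tfrac12\|\psi\psi^\dagger-\phi\phi^\dagger\|_1\le\|\psi-\phi\|$ and contractivity under measurement, is sound. The paper does not derive this step and simply takes that TV bound as a hypothesis.
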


\begin{proof}[Sketch]
\textnormal{(a)}
By the triangle inequality for total variation,
\[
\begin{aligned}
\bigl\|P_{\epsilon,\delta}-P_0\bigr\|_{\rm TV}
&\leq
\bigl\|P_{\epsilon,\delta}-P_{\rm APC}\bigr\|_{\rm TV}
+
\bigl\|P_{\rm APC}-P_0\bigr\|_{\rm TV}
\\
&\leq
L\,G_{\rm APC}(n)\,\epsilon+\delta.
\end{aligned}
\]

\textnormal{(b)}
Applying the defining event bound for total variation to the singleton
event \(\{x^\star\}\) gives
\[
\begin{aligned}
p_{\epsilon,\delta}^\star
&=
P_{\epsilon,\delta}(x^\star)
\\
&\geq
P_0(x^\star)
-
\bigl\|P_{\epsilon,\delta}-P_0\bigr\|_{\rm TV}
\\
&\geq
p_0^\star
-
L\,G_{\rm APC}(n)\,\epsilon
-
\delta.
\end{aligned}
\]

\textnormal{(c)}
If
\[
p_0^\star
-
L\,G_{\rm APC}(n)\,\epsilon
-
\delta
\geq
\frac{1}{H},
\]
then \(P_{\epsilon,\delta}(x^\star)\geq 1/H\), and hence
\(x^\star\in\mathcal H_{1/H}\).

Finally, normalization gives
\[
1
\geq
\sum_{x\in\mathcal H_{1/H}}
P_{\epsilon,\delta}(x)
\geq
\frac{\left|\mathcal H_{1/H}\right|}{H},
\]
so
\[
\left|\mathcal H_{1/H}\right|\leq H.
\]
\end{proof}

\begin{corollary}[Threshold-vs-Depth Trade-off]
\label{thm:depth_vs_threshold} 
Let $D$ be the original two-qubit–layer depth and  $\epsilon_{L}$ be the per-layer error rate. Let $\alpha\in(0,1]$ be the depth‐reduction factor (so the approximate circuit has depth $\alpha D$), and $\delta$ satisfy 
    \(
      \bigl\|P_{\rm APC}-P_{\rm PC}\bigr\|_{\rm TV}\;\le\;\delta.
    \)
Then under the same Lipschitz‐in‐TV noise bound above in Theorem \ref{thm:combined_tv}
$\|P_{\epsilon}-P_{0}\|_{\rm TV}\le D\,\epsilon_{L}$, the new heavy‐hitter threshold
\[
  \theta_{\rm new}
  \;\le\;
  p_{0}
  \;-\;
  \bigl(\alpha D\,\epsilon_{L}\bigr)
  \;-\;
  \delta
\]
guarantees the optimum remains in $\{x:P_{\epsilon,\rm apx}(x)\ge\theta_{\rm new}\}$.
Moreover, comparing to the old threshold
$\theta_{\rm old}=p_{0}-D\,\epsilon_{L}$, the shift
\(
  \Delta\theta
  =\theta_{\rm new}-\theta_{\rm old}
  =(1-\alpha)\,D\,\epsilon_{L}-\delta
\)
implies \emph{you only need to lower your cutoff} iff
\(
  \delta \;>\;(1-\alpha)\,D\,\epsilon_{L}.
\)
\end{corollary}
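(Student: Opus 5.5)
This corollary is a direct specialization of Theorem~\ref{thm:combined_tv}, so the plan is to reuse that argument with the concrete depth model substituted in. First I fix the noise model: the Lipschitz bound of Lemma~\ref{lem:lipschitz-noise}, written in the normalized form $\|P_\epsilon-P_0\|_{\rm TV}\le D\,\epsilon_L$ assumed in the statement, is applied to the compressed circuit. Its depth is $\alpha D$, so the same telescoping over $\alpha D$ noisy layers gives
\[
\|P_{\epsilon,\rm apx}-P_{\rm APC}\|_{\rm TV}\le \alpha D\,\epsilon_L .
\]
This is the instance of $L\,G_{\rm APC}(n)\,\epsilon$ with $L G_{\rm APC}\epsilon$ replaced by $\alpha D\epsilon_L$.

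Second, I combine this with the approximation hypothesis $\|P_{\rm APC}-P_{\rm PC}\|_{\rm TV}\le\delta$ through the triangle inequality, exactly as in part (a) of Theorem~\ref{thm:combined_tv}. This yields
\[
\|P_{\epsilon,\rm apx}-P_0\|_{\rm TV}\le \alpha D\epsilon_L+\delta .
\]
Applying the singleton event bound to $\{x^\star\}$, with $p_0:=P_0(x^\star)$ read as the ideal optimal probability, then gives
\[
P_{\epsilon,\rm apx}(x^\star)\ge p_0-\alpha D\epsilon_L-\delta .
\]
Hence any cutoff $\theta_{\rm new}$ at most this value satisfies $P_{\epsilon,\rm apx}(x^\star)\ge\theta_{\rm new}$, so the optimum stays in the thresholded set. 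The size bound $|\{x:P\ge\theta\}|\le 1/\theta$ from Lemma~\ref{lem:HH-cardinality} can be added for free.

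Third, I justify the reference threshold. Running the uncompressed circuit ($\alpha=1$, $\delta=0$) with the same argument gives $\theta_{\rm old}=p_0-D\epsilon_L$. Taking both thresholds at their maximal admissible values, subtraction gives
\[
\Delta\theta=(1-\alpha)D\epsilon_L-\delta .
\]
The sign statement is then immediate. We have $\Delta\theta<0$, so the cutoff must be lowered relative to the old one, if and only if $\delta>(1-\alpha)D\epsilon_L$.

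The main obstacle is interpretive rather than technical. The corollary writes $\theta_{\rm new}\le\ldots$ but then treats $\Delta\theta$ as an equality, so I must state explicitly that $\theta_{\rm new}$ and $\theta_{\rm old}$ denote the largest guaranteed cutoffs. I also need to check that the normalized Lipschitz constant ($L=1$ per layer, absorbing the factor $\tfrac12$ and $C$) is used consistently in both the compressed and uncompressed bounds, so the two thresholds are comparable.
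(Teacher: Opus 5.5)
Your proposal is correct and follows the route the paper intends: the paper gives no separate proof, treating the corollary as a direct specialization of Theorem~\ref{thm:combined_tv}. Concretely, that means replacing $L\,G_{\rm APC}(n)\,\epsilon$ by $\alpha D\epsilon_L$, applying the singleton event bound to $x^\star$, and subtracting $\theta_{\rm old}$ (the $\alpha=1$, $\delta=0$ case) to obtain $\Delta\theta=(1-\alpha)D\epsilon_L-\delta$. You read $p_0$ as the singleton mass $P_0(x^\star)$ and take $\theta_{\rm new},\theta_{\rm old}$ as the maximal admissible cutoffs; these are exactly the interpretations under which the stated equality and sign criterion hold.
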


\begin{algorithm}[H]
\caption{Depth-Compressed Heavy-Hitter Threshold Adjustment}
\label{alg:HH-threshold}
\begin{algorithmic}[1]

\Require Original two-qubit-layer depth \(D\), Per-layer error rate \(\epsilon_L\),  Depth-reduction factor \(\alpha\in[0,1]\), Approximation error \(\delta\in[0,1]\) in total variation distance and Ideal estimated success probability \(p_0\)

\Ensure New heavy-hitter cutoff \(\theta_{\mathrm{new}}\),
\(\mathrm{action}\in
\{\textsc{Raise},\textsc{Lower},\textsc{Unchanged}\}\), Adjustment magnitude \(\lvert\Delta\theta\rvert\)

\State
\(\theta_{\mathrm{old}}
    \gets p_0-D\epsilon_L\)
\Comment{Original cutoff}

\State
\(\theta_{\mathrm{new}}
    \gets p_0-\alpha D\epsilon_L-\delta\)
\Comment{Compressed cutoff}

\State
\(\Delta\theta
    \gets\theta_{\mathrm{new}}-\theta_{\mathrm{old}}\)

\If{\(\Delta\theta>0\)}
    \State \(\mathrm{action}\gets\textsc{Raise}\)
\ElsIf{\(\Delta\theta<0\)}
    \State \(\mathrm{action}\gets\textsc{Lower}\)
    \State \(\Delta\theta\gets-\Delta\theta\)
    \Comment{Take absolute value}
\Else
    \State \(\mathrm{action}\gets\textsc{Unchanged}\)
\EndIf

\State \Return
\(\bigl(\theta_{\mathrm{new}},
        \mathrm{action},
        \Delta\theta\bigr)\)

\end{algorithmic}
\end{algorithm}

We can visualize the effect of Algorithm \ref{alg:HH-threshold} on $\Delta\theta(\alpha,\delta)$ for fixed $D,\epsilon_{L},p_0$  from the derived relation 
\(
  \Delta\theta(\alpha,\delta)
  = (1-\alpha)\,D\,\epsilon_{L} - \delta
\) in Figure~\ref{fig:threshold_plot} where it shows a two‐dimensional contour map of the maximum permissible heavy‐hitter threshold
\[
  \theta_{\max}(\alpha,\delta)
  = p_{0} \;-\;\alpha\,D\,\epsilon_{L}\;-\;\delta,
\]
plotted over
\(
  0 \le \alpha \le 1,
  \quad
  0 \le \delta \le 0.30,
\)
for the illustrative parameters
\(\;p_{0}=0.2,\;D=100,\;\epsilon_{L}=4.55\times10^{-3}.\)

\begin{figure}[ht]
  \centering
  \begin{minipage}[t]{0.52\textwidth}
    \vspace{0pt}
    \centering
    \includegraphics[width=\linewidth]{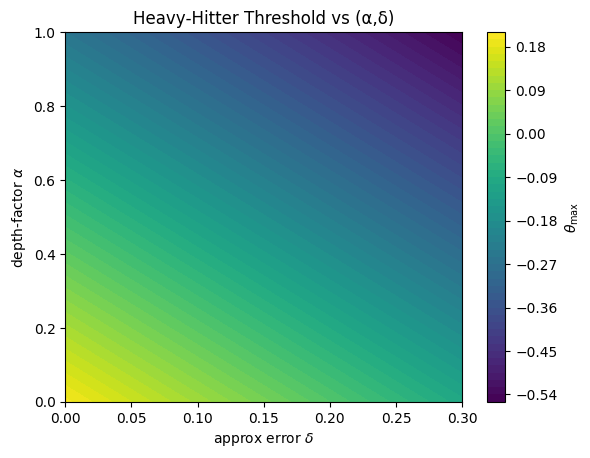}
  \end{minipage}
  \hfill
  \begin{minipage}[t]{0.44\textwidth}
    \vspace{0pt}
    \caption{\small
    Maximum heavy-hitter threshold
    \(\theta_{\max}(\alpha,\delta)\)     as a function of the depth factor \(\alpha\) and approximation error
    \(\delta\). At fixed \(\delta\), depth compression increases
    \(\theta_{\max}\) linearly; at fixed \(\alpha\), increasing \(\delta\)
    decreases it one-to-one. The bottom-left region shows that aggressive
    depth reduction and negligible approximation error permit a cutoff near
    \(0.18\). Moving upward or rightward lowers the threshold through the
    combined error budget \(\alpha D\epsilon_L+\delta\). In the darkest
    region, this budget exceeds the ideal success probability \(p_0=0.2\),
    giving \(\theta_{\max}<0\).
    }
    \label{fig:threshold_plot}
  \end{minipage}
\end{figure}

  % \item \textbf{Horizontal axis:}  Moving right corresponds to allowing a larger discrepancy between the compressed circuit’s output distribution and the ideal one.

  % \item \textbf{Vertical axis:}  \(\alpha=1\) means no depth reduction (full original depth), while \(\alpha=0\) represents a hypothetically “zero‐depth” (i.e.\ completely trivial) circuit.

In practice, one chooses a desired heavy‐hitter cutoff \(\theta\) (e.g.\ based on the mean amplitude) and then reads off from this plot the allowable combinations of \((\alpha,\delta)\) that satisfy \(\theta\le\theta_{\max}(\alpha,\delta)\). Shot-Budget shifts visibly under the $(\alpha,\delta)$ trade-off. From the heavy–hitter probability bound proved above, \(
  p_{\min}(\alpha,\delta)
  \;\ge\;
  p_{0}-\alpha D\epsilon_{L}-\delta,\) 
feeds directly into the Chernoff–style shot formula
(Theorem~\ref{thm:phqc-exact}). %:

\subsection{Noise sensitivity of the quantum-informed heavy-hitter refinement}
\label{app:hh-noise-sensitivity}

The quantum-informed heavy-hitter refinement inherits the
device-dependent character of the optimal-mass guarantee on which its
threshold is based. In particular, the refined candidate set is
guaranteed to retain the optimum only while the noisy output
distribution remains inside the total-variation window assumed in
Theorem~\ref{thm:quantum-informed-HH}. Tables
\ref{tab:full-hh-k3} and \ref{tab:full-hh-k4} report the full empirical
data for \(k=3\) and \(k=4\), respectively. Here
\(\lvert H_{\mathrm{HH}}\rvert\) is the candidate count after ordinary
heavy-hitter filtering, \(\lvert H_{\mathrm{blk}}\rvert\) is the count
after block filtering, and \(\lvert H_{\mathrm{SR}}\rvert\) is the
count after the final structure-refined quantum filtering step. Each
reported gap is the percentage increase in the best recovered cost
relative to exhaustive analysis of the corresponding raw sample
histogram. Repeated rows for an instance correspond to independent
hardware runs or different shot budgets.

The ordinary heavy-hitter stage reproduces the exhaustive best cost in
almost every finite case. Additional structure-aware compression
usually preserves that cost while substantially reducing the candidate
set, but a minority of runs exhibit a nonzero gap. These departures
are consistent with the device-dependent scope of the quantum-informed
threshold. Once noise depletes or redistributes the mass supporting the
best candidate beyond the admissible window, aggressive filtering need
not retain it. This effect is more visible for \(k=3\), where several
runs produce nonzero gaps and some noisy histograms yield an empty
retained set. The corresponding entries are reported as
\(\infty\), because no finite recovered cost exists from which to form
a percentage gap. Increasing to \(k=4\) enlarges the retained sets and
eliminates empty-set failures in the reported data, although isolated
departures remain after the more aggressive refinement.

\begin{table*}[p]
\centering
\caption{Full heavy-hitter data for \(k=3\). The three gap columns give
the percentage increase in the best recovered cost relative to
exhaustive analysis of the raw sample histogram. An entry of
\(\infty\) indicates that the corresponding retained set was empty.}
\label{tab:full-hh-k3}

\scriptsize
\setlength{\tabcolsep}{3pt}
\renewcommand{\arraystretch}{0.90}

\resizebox{\textwidth}{!}{%
\begin{tabular}{lrrrrrrr}
\toprule
Instance & Shots
& \(\lvert H_{\mathrm{HH}}\rvert\)
& \(\lvert H_{\mathrm{blk}}\rvert\)
& \(\lvert H_{\mathrm{SR}}\rvert\)
& Gap HH [\%] & Gap blk [\%] & Gap SR [\%] \\
\midrule
dj10  & 50000  & 50000 & 1000 & 1000 & 0.00 & 6.25  & 6.25 \\
dj10  & 5000   & 5000  & 1000 & 1000 & 0.00 & 7.99  & 7.99 \\
dj10  & 4096   & 4096  & 1000 & 1000 & 0.00 & 4.89  & 4.89 \\
dj8   & 32000  & 32000 & 512  & 512  & 0.00 & 0.07  & 0.07 \\
dj8   & 32000  & 32000 & 512  & 512  & 0.00 & 0.00  & 0.00 \\
dj8   & 32000  & 32000 & 512  & 512  & 0.00 & 0.00  & 0.00 \\
dj8   & 4096   & 4096  & 512  & 512  & 0.00 & 0.00  & 0.00 \\
dj9   & 40500  & 40500 & 729  & 729  & 0.00 & 0.14  & 0.14 \\
dj9   & 40500  & 40500 & 729  & 729  & 0.00 & 0.61  & 0.61 \\
dj9   & 4050   & 4050  & 729  & 729  & 0.00 & 3.80  & 3.80 \\
dj9   & 4096   & 4096  & 729  & 729  & 0.00 & 4.97  & 4.97 \\
eil10 & 50000  & 50000 & 1000 & 1000 & 0.00 & 17.75 & 17.75 \\
eil10 & 5000   & 5000  & 1000 & 1000 & 0.00 & 0.00  & 0.00 \\
eil10 & 4096   & 4096  & 1000 & 1000 & 0.00 & 4.23  & 4.23 \\
eil11 & 60500  & 60500 & 1331 & 1331 & 0.00 & 16.17 & 16.17 \\
eil11 & 6050   & 6050  & 1331 & 1331 & 0.00 & 3.47  & 3.47 \\
eil11 & 4096   & 4096  & 1331 & 1331 & 0.00 & 0.00  & 0.00 \\
wi4   & 8000   & 1     & 1    & 1    & 0.00 & 0.00  & 0.00 \\
wi4   & 8000   & 7     & 7    & 7    & 6.70 & 6.70  & 6.70 \\
wi4   & 800    & 793   & 64   & 64   & 0.00 & 0.00  & 0.00 \\
wi4   & 10000  & 230   & 64   & 64   & 0.00 & 0.00  & 0.00 \\
wi4   & 100000 & 234   & 64   & 64   & 0.00 & 0.00  & 0.00 \\
wi4   & 100000 & 216   & 64   & 64   & 0.00 & 0.00  & 0.00 \\
wi4   & 4096   & 449   & 64   & 64   & 0.00 & 0.00  & 0.00 \\
wi5   & 12500  & 0     & 0    & 0    & \(\infty\) & \(\infty\) & \(\infty\) \\
wi5   & 12500  & 0     & 0    & 0    & \(\infty\) & \(\infty\) & \(\infty\) \\
wi5   & 1250   & 1250  & 125  & 125  & 0.00 & 0.00  & 0.00 \\
wi5   & 4096   & 4095  & 125  & 125  & 0.00 & 0.00  & 0.00 \\
wi6   & 18000  & 0     & 0    & 0    & \(\infty\) & \(\infty\) & \(\infty\) \\
wi6   & 18000  & 0     & 0    & 0    & \(\infty\) & \(\infty\) & \(\infty\) \\
wi6   & 1800   & 1800  & 216  & 216  & 0.00 & 0.00  & 0.00 \\
wi6   & 4096   & 4096  & 216  & 216  & 0.00 & 0.00  & 0.00 \\
wi7   & 24500  & 0     & 0    & 0    & \(\infty\) & \(\infty\) & \(\infty\) \\
wi7   & 24500  & 0     & 0    & 0    & \(\infty\) & \(\infty\) & \(\infty\) \\
wi7   & 2450   & 2450  & 343  & 343  & 0.00 & 0.00  & 0.00 \\
wi7   & 4096   & 4096  & 343  & 343  & 0.00 & 0.00  & 0.00 \\
\bottomrule
\end{tabular}%
}
\end{table*}

\begin{table*}[p]
\centering
\caption{Full heavy-hitter data for \(k=4\). The three gap columns give
the percentage increase in the best recovered cost relative to
exhaustive analysis of the raw sample histogram.}
\label{tab:full-hh-k4}

\scriptsize
\setlength{\tabcolsep}{3pt}
\renewcommand{\arraystretch}{0.90}

\resizebox{\textwidth}{!}{%
\begin{tabular}{lrrrrrrr}
\toprule
Instance & Shots
& \(\lvert H_{\mathrm{HH}}\rvert\)
& \(\lvert H_{\mathrm{blk}}\rvert\)
& \(\lvert H_{\mathrm{SR}}\rvert\)
& Gap HH [\%] & Gap blk [\%] & Gap SR [\%] \\
\midrule
dj10  & 50000  & 50000 & 10000 & 2000 & 0.00 & 0.98  & 6.25 \\
dj10  & 5000   & 5000  & 5000  & 2000 & 0.00 & 0.00  & 0.00 \\
dj10  & 4096   & 4096  & 4096  & 2000 & 0.00 & 0.00  & 0.00 \\
dj8   & 32000  & 32000 & 4096  & 2000 & 0.00 & 0.00  & 0.00 \\
dj8   & 32000  & 32000 & 4096  & 2000 & 0.00 & 0.00  & 0.00 \\
dj8   & 32000  & 32000 & 3200  & 2000 & 0.00 & 0.00  & 0.00 \\
dj8   & 4096   & 4096  & 4096  & 2000 & 0.00 & 0.00  & 0.00 \\
dj9   & 40500  & 40500 & 6561  & 2000 & 0.00 & 0.14  & 0.14 \\
dj9   & 40500  & 40500 & 6561  & 2000 & 0.00 & 0.00  & 0.14 \\
dj9   & 4050   & 4050  & 4050  & 2000 & 0.00 & 0.00  & 0.00 \\
dj9   & 4096   & 4096  & 4096  & 2000 & 0.00 & 0.00  & 1.22 \\
eil10 & 50000  & 50000 & 10000 & 2000 & 0.00 & 4.14  & 17.16 \\
eil10 & 5000   & 5000  & 5000  & 2000 & 0.00 & 0.00  & 0.00 \\
eil10 & 4096   & 4096  & 4096  & 2000 & 0.00 & 0.00  & 0.00 \\
eil11 & 60500  & 60500 & 14641 & 2000 & 0.00 & 10.78 & 10.78 \\
eil11 & 6050   & 6050  & 6050  & 2000 & 0.00 & 0.00  & 3.47 \\
eil11 & 4096   & 4096  & 4096  & 2000 & 0.00 & 0.00  & 0.00 \\
wi4   & 8000   & 7543  & 256   & 256  & 0.00 & 0.00  & 0.00 \\
wi4   & 8000   & 7307  & 256   & 256  & 0.00 & 0.00  & 0.00 \\
wi4   & 800    & 793   & 256   & 256  & 0.00 & 0.00  & 0.00 \\
wi4   & 10000  & 242   & 242   & 242  & 0.00 & 0.00  & 0.00 \\
wi4   & 100000 & 253   & 253   & 253  & 0.00 & 0.00  & 0.00 \\
wi4   & 100000 & 246   & 246   & 246  & 0.00 & 0.00  & 0.00 \\
wi4   & 4096   & 3495  & 256   & 256  & 0.00 & 0.00  & 0.00 \\
wi5   & 12500  & 12500 & 625   & 625  & 0.00 & 0.00  & 0.00 \\
wi5   & 12500  & 12499 & 625   & 625  & 0.00 & 0.00  & 0.00 \\
wi5   & 1250   & 1250  & 625   & 625  & 0.00 & 0.00  & 0.00 \\
wi5   & 4096   & 4095  & 625   & 625  & 0.00 & 0.00  & 0.00 \\
wi6   & 18000  & 18000 & 1296  & 1296 & 0.00 & 0.00  & 0.00 \\
wi6   & 18000  & 18000 & 1296  & 1296 & 0.00 & 0.00  & 0.00 \\
wi6   & 1800   & 1800  & 1296  & 1296 & 0.00 & 0.00  & 0.00 \\
wi6   & 4096   & 4096  & 1296  & 1296 & 0.00 & 0.00  & 0.00 \\
wi7   & 24500  & 24500 & 2401  & 2000 & 0.00 & 0.00  & 0.00 \\
wi7   & 24500  & 24500 & 2401  & 2000 & 0.00 & 0.00  & 0.00 \\
wi7   & 2450   & 2450  & 2401  & 2000 & 0.00 & 0.00  & 0.00 \\
wi7   & 4096   & 4096  & 2401  & 2000 & 0.00 & 0.00  & 0.00 \\
\bottomrule
\end{tabular}%
}
\end{table*}

\printbibliography
\end{document}